\numberwithin{equation}{section}
\newtheorem{thm}{Theorem}
\numberwithin{theorem}{section}
\numberwithin{thm}{section}
\newtheorem{prop}[thm]{Proposition}
\newtheorem{proposition}[thm]{Proposition}
\newtheorem{lemma}[thm]{Lemma}
\newtheorem{corollary}[thm]{Corollary}
\newtheorem{remark}[thm]{Remark}
\newtheorem{definition}[thm]{Definition}
\newtheorem{algorithm}[thm]{Algorithm}
\newcounter{FNC}[page]
\def\fauxfootnote#1{{\addtocounter{FNC}{2}$^\fnsymbol{FNC}$%
     \let\thefootnote\relax\footnotetext{$^\fnsymbol{FNC}$#1}}}
\newcommand{\vol}{{\rm vol}}
\newcommand{\R}{\mathbb{R}}
\newcommand{\C}{\mathbb{C}}
\newcommand{\Z}{\mathbb{Z}}
\newcommand{\A}{{\mathcal A}}
\DeclareMathOperator{\minor}{minor}
\DeclareMathOperator{\sign}{sign}
\DeclareMathOperator{\ch}{ch}
\newcommand{\arrowschem}[2]{\raisebox{-2ex}%
	{$\stackrel{\stackrel{\displaystyle#1}{\longrightarrow}}%
	{\stackrel{\longleftarrow}{#2}}$}}
\begin{document}

\title[Regions with many positive steady states]
{Parameter regions that give rise to $2 [\frac{n}{2}]+1$ positive steady states in the $n$-site phosphorylation system}

\author{Magal\'i Giaroli}
\address{Dto.\ de Matem\'atica, FCEN, Universidad de Buenos Aires, and IMAS (UBA-CONICET), Ciudad Universitaria, Pab.\ I, 
C1428EGA Buenos Aires, Argentina}
\email{mgiaroli@dm.uba.ar}

\author{Rick Rischter}
\address{Instituto de Matem\'atica e Computa\c{c}\~ao, IMC, Universidade Federal de Itajub\'a (UNIFEI)\\ 
Av. BPS 1303, Bairro Pinheirinho\\ 
37500-903, Itajub\'a, Minas Gerais\\ 
Brazil}
\email{rischter@unifei.edu.br}
\urladdr{http://w3.impa.br/~rischter/}

\author{Mercedes P\'erez Mill\'an}
\address{Dto.\ de Matem\'atica, FCEN, Universidad de Buenos Aires, and IMAS (UBA-CONICET), Ciudad Universitaria, Pab.\ I, 
C1428EGA Buenos Aires, Argentina}
\email{mpmillan@dm.uba.ar}
\urladdr{http://cms.dm.uba.ar/Members/mpmillan}

\author{Alicia Dickenstein}
\address{Dto.\ de Matem\'atica, FCEN, Universidad de Buenos Aires, and IMAS (UBA-CONICET), Ciudad Universitaria, Pab.\ I, 
C1428EGA Buenos Aires, Argentina}
\email{alidick@dm.uba.ar}
\urladdr{http://mate.dm.uba.ar/~alidick}
\date{}


\begin{abstract} 
The distributive sequential $n$-site phosphorylation/dephosphorylation system 
is an important building block in networks of chemical reactions arising in molecular biology, 
which has been intensively studied. 
 In the nice paper of Wang and Sontag (2008) it is shown that for certain choices
of the reaction rate constants and total conservation constants, the system can have $2 [\frac{n}{2}]+1$
positive steady states (that is,
$n+1$ positive steady states for $n$ even and $n$ positive steady states for $n$ odd). In
this paper we give open parameter regions in the space of  reaction rate constants and  total
conservation constants that ensure these number of positive steady states, while assuming in the modeling 
that roughly only $\frac 1 4$ of the intermediates occur in the reaction mechanism. This result is based on the general framework
developed by Bihan, Dickenstein, and Giaroli (2018), which can be applied to
other networks.  We also describe how to implement these tools to search for multistationarity regions in a computer
algebra system and present some computer aided results.
  \end{abstract}
\maketitle

\section{Introduction}
Multisite phosphorylation cycles are common in cell regulation systems~\cite{lim14}. 
The important distributive sequential $n$-site phosphorylation network is
given by the following reaction scheme:
\begin{small}
  \begin{equation}  \label{eq:nsite}
 \begin{array}{rl} 
 S_0+E &
 \arrowschem{k_{\rm{on}_0}}{k_{\rm{off}_0}} Y_0
 \stackrel{k_{\rm{cat}_0}}{\rightarrow} S_1+E\ \cdots\ {\rightarrow}S_{n-1}+E 
 \arrowschem{k_{\rm{on}_{n-1}}}{k_{\rm{off}_{n-1}}}Y_{n-1}
 \stackrel{k_{\rm{cat}_{n-1}}}{\rightarrow} S_n+ E \\
 S_n+F &
 \arrowschem{\ell_{\rm{on}_{n-1}}}{\ell_{\rm{off}_{n-1}}} U_{n-1}
 \stackrel{\ell_{\rm{cat}_{n-1}}}{\rightarrow} S_{n-1}+F \ \cdots\ {\rightarrow} S_{1}+F
 \arrowschem{\ell_{\rm{on}_0}}{\ell_{\rm{off}_0}} U_0
 \stackrel{\ell_{\rm{cat}_0}}{\rightarrow} S_0+ F.
 \end{array}
 \end{equation}
 \end{small}%
Arrows correspond to chemical reactions. The labels in the arrows are positive numbers, known as {\em reaction rate constants}.
This reaction scheme represents one substrate  $S_0$ that can sequentially acquire up to $n$ phosphate groups 
(giving rise to the phosphoforms $S_1, \dots, S_n$) 
via the action of the kinase $E$, and which can be sequentially released via the action of the phosphatase $F$, in both cases via an intermediate species (denoted by $Y_0$, $Y_1$, $\dots$, $Y_{n-1}$, $U_0$, $U_1, \dots, U_{n-1}$)
formed by the union of the substrate and the enzyme . 

  The kinetics of this network is deduced by applying the law of mass action to the labeled digraph of reactions~\eqref{eq:nsite}, 
which yields an autonomous polynomial system of ordinary differential equations depending on the positive reaction rate
constants. This system describes the evolution in time of the concentrations of the different chemical species. We denote
 with lower case letters the concentrations of the $3n+3$ species. The derivatives of the concentrations
of the substrates satisfy (see e.g.~\cite{alicia} for the general definition):
  \begin{small}
 \begin{align*}\label{eq:mak}
 \frac{ds_0}{dt} &=  {-k_{\rm{on}_0}}s_0e + {k_{\rm{off}_0}}y_0 + {\ell_{\rm{cat}_0}} u_0, \nonumber \\
 \frac{ds_i}{dt} &=  {k_{\rm{cat}_{i-1}}}y_{i-1} - {k_{\rm{on}_{i}}}s_ie + {k_{\rm{off}_{i}}}y_{i} + {\ell_{\rm{cat}_i}}u_i -
  {\ell_{\rm{on}_{i-1}}}s_if + {\ell_{\rm{off}_{i-1}}}u_{i-1},\quad i=1, \dots, n-1, \nonumber \\
 \frac{ds_n}{dt}& = {k_{\rm{cat}_{n-1}}}y_{n-1}  -{\ell_{\rm{on}_{n-1}}}s_n f + {\ell_{\rm{off}_{n-1}}} u_{n-1},
 \end{align*}
  \end{small}%
the derivatives of the concentrations of the intermediate species satisfy:
\begin{small}
 \begin{align*}
 \frac{dy_i}{dt} &=  {k_{\rm{on}_i}}s_i e -{(k_{\rm{off}_i}+k_{\rm{cat}_i})}y_i, \quad i=0,\dots,n-1,  \\
  \frac{du_i}{dt} &=  {\ell_{\rm{on}_i}}s_{i+1} f -{(\ell_{\rm{off}_i}+\ell_{\rm{cat}_i})}u_i, \quad i=0,\dots,n-1,  
  \end{align*}
  \end{small}%
and $ \frac{de}{dt}= -\frac{dy_0}{dt} - \dots - \frac{dy_{n-1}}{dt}$,\,  $\frac{df}{dt}= -\frac{du_0}{dt} -\dots - \frac{du_n}{dt}$,
  which give two linear conservation relations. Indeed, 
 there are three linearly independent conservation laws:
 \begin{small}
 \begin{equation}\label{eq:consfosfo}
 \sum_{i=0}^n s_i + \sum_{i=0}^{n-1} y_i + \sum_{i=0}^{n-1} u_i=  S_{tot}, \quad
 e + \sum_{i=0}^{n-1} y_i=  E_{tot}, \quad
 f + \sum_{i=0}^{n-1} u_i=  F_{tot},
\end{equation}
\end{small}%
where the {\em linear conservation constants}
$S_{tot}, E_{tot}, F_{tot}$ are positive for any trajectory of the system intersecting the positive orthant.

A steady state of the system is a common zero of the polynomials in the right-hand side of the
differential equations.
There are $6n+3$ parameters: the reaction rate constants and the linear conservation constants  (which correspond to points
in the positive orthant $\R_{>0}^{6n+3}$).
It is known that for any $n \ge 2$, the sequential $n$-site  system shows {\em multistationarity}. This means that there exists  a choice of parameters
 for which there are at least two positive steady states satisfying moreover the linear conservation relations~\eqref{eq:consfosfo} for the same $S_{tot},E_{tot}$ and $F_{tot}$. In this case, it is said that the steady states are {\em stoichiometrically compatible}, or that they lie in the
 same {\em stoichiometric compatibility class}.
 This is an important notion, because the occurrence of multistationarity allows  for different cell responses with the same
 linear conservation constants (that is, the same total amounts of substrate and enzymes). 
 
 The possible number of positive steady states of the $n$-site phosphorylation system  (for fixed linear conservation constants)
  has been studied in several articles. 
 For $n=2$, it is a well known fact that the number of nondegenerate positive steady states is one or three~\cite{Markevich04,sontag}. 
 The existence of bistability is proved in~\cite{Hell15}.  
 In \cite{CFMW17} and \cite{CM14}, the authors give conditions on the reaction rate constants to 
 guarantee the existence of three positive steady states based on
 tools from degree theory, but this approach does not describe the linear conservation constants for which there is multistationarity. 
 For an arbitrary number $n$ of phosphorylation sites, it was shown in \cite{sontag} that the system has at most $2n-1$ positive steady states. 
In the same article, the authors showed that there exist reaction rate constants and linear conservation constants such that the network has
  $n$ (resp. $n+1$) positive steady states for $n$ odd (resp. even);
  that is, there are $2[\frac{n}{2}]+1$ positive steady states for any value of $n$, where $[.]$ denotes integer part.
  
 In~\cite{2n-1} (see also~\cite{kfc}) the authors showed parameter values such that for $n=3$ the system has five 
 positive steady states, and for $n=4$ the system has seven steady states, obtaining the upper bound given in~\cite{sontag}. 
 In the recent article~\cite{CAT18} the authors show that if the $n$-site phosphorylation system  is multistationary for a choice
 of rate constants and linear conservation constants $(S_{tot}, E_{tot}, F_{tot})$ then for any positive 
 real number $c$ there are rate constants for which the system is multistationary when the linear conservation 
 constants are scaled by $c$. Concerning the stability, in \cite{thomson09} it is shown evidence that the system 
 can have $2[\frac{n}{2}]+1$ positive steady states with $[\frac{n}{2}]+1$ of them being stable. Recently, 
 a proof of this unlimited multistability was presented in \cite{FRW19}, where the authors find a choice of 
 parameters that gives the result for a smaller system, and then extend this result using techniques from singular perturbation theory.
 
 In the previous paper~\cite{BDG1}, parameter regions on {\em all the parameters} are given for the occurrence of multistationarity for the $n$-site
 sequential phosphorylation system, but no more than three positive steady states are ensured. These conditions are 
 based on a general framework  to obtain multistationary regions jointly in the reaction rate constants and the 
 linear conservation constants.  Our approach in this article uses the systematic technique in~\cite{BDG1}, 
 which we briefly summarize in Section~\ref{sec:background}.

The removal of intermediates was studied in~\cite{fw13}.  More specifically, the emergence of 
multistationarity of the $n$-site phosphorylation system with less intermediates was studied in~\cite{SFeliu}. 
It is known that the $n$-site phosphorylation network without any intermediates complexes has only one steady state for any choice of parameters.
In~\cite{SFeliu}, the authors showed which are the minimal sets of intermediates that give rise to a multistationarity system, 
but they do not give information about how many positive steady states can occur, and also, they do not describe
 the parameter regions for which 
these subnetworks are multistationary.

In this paper, we work with subnetworks of the sequential $n$-site phosphorylation system that only have intermediates 
in the $E$ component (that is, in the connected component of the network where the kinase $E$ reacts), see Definition \ref{def:G_J}. 
In case of the full mechanism on the $E$ component or if we only assume that
there are intermediate species that are formed between the phosphorylated substrates with
{\em parity equal to $n$} (that is, roughly only $\frac{1}{4}$ of the intermediates of the $n$ sequential
phosphorylation cycle),
we obtain precise conditions on all the parameters to ensure as many positive steady states as possible. 
Indeed, we show in Proposition~\ref{thm:upperboundGJ}
that the maximum number of complex solutions to the steady state equations
intersected with the linear conservation relations is always $n+1$, the maximum number of real roots is also $n+1$, 
that could be all positive when $n$ is even, while 
only $n$ of them can be positive when $n$ is odd, so the maximum number of positive steady states
equals $2[\frac n 2]+1$ for any $n$.
Exact conditions on
the parameters so that the associated phosphorylation/dephosphorylation
system admits  these number of positive steady states, are given in Theorem~\ref{th:allintermediatesEside}
and Corollary~\ref{cor:1/4}. The latter follows from Theorem~\ref{th:J}. In order to state these results, we need to
introduce some notations.

\begin{definition}\label{def:G_J} 
For any natural number $n$, we write $I_n=\{0,\dots,n-1\}$. Given $n\geq 1$, and a subset $J\subset I_n$, we denote by $G_J$ 
the network whose only intermediate complexes are $Y_j$ for $j \in J,$ and none of the $U_i$. It is given by the following reactions:  
 {\small
 \begin{eqnarray}   
 S_j+E 
 \arrowschem{k_{\rm{on}_j}}{k_{\rm{off}_j}} Y_j 
 \stackrel{k_{\rm{cat}_j}}{\rightarrow} &S_{j+1}+E,\quad \quad & \text{if } j\in J,\nonumber \\
 S_j+E   \stackrel{\tau_j}{\rightarrow} &S_{j+1}+E,\quad \quad &\text{if } j\notin J, \label{eq:net_phospho_n_lessintermediatesEside}\\
 S_{j+1}+F 
 \stackrel{\nu_{j}}{\rightarrow} & S_{j} +F, \quad & 0\leq j \leq n-1. \nonumber
\end{eqnarray}}%
where the labels of the arrows are positive numbers.
We will also denote by $G_J$ the associated differential system with mass-action kinetics.
\end{definition}

 For all these systems $G_J$, there are always three linearly independent conservation laws for any value of $n$:
 \begin{small}
 \begin{equation}\label{eq:consfosfoJ}
 \sum_{i=0}^n s_i + \sum_{j\in J} y_j =  S_{tot}, \quad
 e + \sum_{j\in J} y_j =  E_{tot}, \quad
 f =  F_{tot},
\end{equation}
\end{small}%
where the total conservation constants $S_{tot}, E_{tot}, F_{tot}$ are positive for
 any trajectory  of the differential system which intersects the positive orthant. 
 Note that the concentration of the phosphatase $F$ is constant, equal to $F_{tot}$.

To get lower bounds on the number of positive steady states with fixed positive linear conservation constants,
we first consider the network $G_{I_n}$, that is, when all the intermediates in the $E$ component appear.
It has the following associated digraph:
\begin{small}
  \begin{equation}
 \begin{array}{rl} 
 S_0+E &
 \arrowschem{k_{\rm{on}_0}}{k_{\rm{off}_0}} Y_0
 \stackrel{k_{\rm{cat}_0}}{\rightarrow} S_1+E\ \cdots\ {\rightarrow}S_{n-1}+E 
 \arrowschem{k_{\rm{on}_{n-1}}}{k_{\rm{off}_{n-1}}}Y_{n-1}
 \stackrel{k_{\rm{cat}_{n-1}}}{\rightarrow} S_n+ E \\
 \label{eq:net_phospho_n_complete_intermediatesEside} \\ 
 S_n+F &
 \stackrel{\nu_{n-1}}{\rightarrow} S_{n-1}+F \ \cdots\ {\rightarrow} S_{1}+F
  \stackrel{\nu_{0}}{\rightarrow} S_0+ F.
 \end{array}
 \end{equation}
 \end{small}

\medskip

\begin{thm}\label{th:allintermediatesEside}
 Let $n\geq 1$. With the previous notations, consider the network $G_{I_n}$ in \eqref{eq:net_phospho_n_complete_intermediatesEside},  
 and suppose that 
 the reaction rate constants $k_{\rm{cat}_i}$  and $\nu_i$, $i=0,\dots,n-1$, satisfy the inequality
\begin{small}
\[\max_{i \, \text{even}}\left\{\frac{k_{\rm{cat}_i}}{\nu_i}\right\}<\min_{i \, \text{odd}}\left\{\frac{k_{\rm{cat}_i}}{\nu_i}\right\}.\]  
\end{small}
For any positive values $S_{tot}$, $E_{tot}$ and $F_{tot}$ of the linear conservation constants with 
\begin{small}
\[ S_{tot}>E_{tot},\]
\end{small}%
verifying the inequalities:
\begin{small} \begin{equation}\label{ineq:allintermediates}
\max_{i \, \text{even}}\left\{\frac{k_{\rm{cat}_i}}{\nu_i}\right\}< 
\left(\frac{S_{tot}}{E_{tot}}-1\right)F_{tot}<\min_{i \, \text{odd}}
\left\{\frac{k_{\rm{cat}_i}}{\nu_i}\right\},
\end{equation}\end{small}%
there exist positive constants $B_1,\dots,B_n$ such that for any choice of positive constants $\lambda_0,\dots,\lambda_{n-1}$ satisfying
\begin{small}
\begin{equation}\label{ineq:gammaallintermediates2}
\frac{\lambda_j}{\lambda_{j-1}}<B_j \ \text{ for } j=1,\dots,n-1,\quad \frac{1}{\lambda_{n-1}}<B_n, 
\end{equation}
\end{small}%
rescaling of the given parameters $k_{\rm{on}_j}$ by $\lambda_{j}\, k_{\rm{on}_j}$, for each $j=0,\dots,n-1$, gives rise to a system with 
exactly $2[\frac{n}{2}]+1$ nondegenerate positive steady states.
\end{thm}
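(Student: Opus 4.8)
The plan is to reduce the counting of positive steady states of $G_{I_n}$ to counting positive roots of a single univariate polynomial, and then to apply the framework recalled in Section~\ref{sec:background}. The equations $\dot y_i=0$ give $y_i=K_i s_i e$ with $K_i=k_{\rm{on}_i}/(k_{\rm{off}_i}+k_{\rm{cat}_i})$, and telescoping the flux balances $\dot s_0=\cdots=\dot s_n=0$ gives $k_{\rm{cat}_i}y_i=\nu_i F_{tot}\,s_{i+1}$ for all $i$; hence at every positive steady state $s_i=M_i e^i s_0$ and $y_i=K_i M_i e^{i+1}s_0$, where $M_0=1$ and $M_i=\prod_{j<i}\bigl(k_{\rm{cat}_j}K_j/(\nu_j F_{tot})\bigr)$. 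Thus the positive part of the steady state variety is parametrized by $(s_0,e)\in\R_{>0}^2$; substituting into the two nontrivial conservation laws in~\eqref{eq:consfosfoJ} and eliminating $s_0$ identifies, using $S_{tot}>E_{tot}$, the positive steady states having prescribed $(S_{tot},E_{tot},F_{tot})$ with the roots $e\in(0,E_{tot})$ of
\[
g(e)=(E_{tot}-e)\sum_{i=0}^{n}M_i e^i-(S_{tot}-E_{tot}+e)\sum_{i=0}^{n-1}K_i M_i e^{i+1},
\]
a polynomial of degree $n+1$. Moreover $g(e)<0$ for all $e\ge E_{tot}$, so every positive root of $g$ automatically lies in $(0,E_{tot})$; hence it suffices to exhibit $2[\frac{n}{2}]+1$ simple positive roots of $g$, which by Proposition~\ref{thm:upperboundGJ} is the maximum possible.

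Writing $g(e)=\sum_{k=0}^{n+1}c_k e^k$, $\delta:=(\frac{S_{tot}}{E_{tot}}-1)F_{tot}$ and $r_i:=k_{\rm{cat}_i}/\nu_i$, a short computation gives $c_0=E_{tot}>0$, $c_{n+1}=-(M_n+K_{n-1}M_{n-1})<0$, and $c_k=M_{k-1}K_{k-1}\frac{E_{tot}}{F_{tot}}(r_{k-1}-\delta)-M_{k-1}-K_{k-2}M_{k-2}$ for $1\le k\le n$ (with $M_{-1}:=0$). Inequality~\eqref{ineq:allintermediates} says precisely $\max_{i\,\text{even}}r_i<\delta<\min_{i\,\text{odd}}r_i$, so $r_{k-1}-\delta$ is negative for $k$ odd and positive for $k$ even; therefore every odd-index $c_k$ is automatically negative, while every even-index $c_k$ with $2\le k\le n$ becomes positive as soon as the single factor $\lambda_{k-1}$ exceeds an explicit threshold — note that $k_{\rm{on}_j}\mapsto\lambda_j k_{\rm{on}_j}$ multiplies $K_j$ by $\lambda_j$ and $M_i$ by $\prod_{j<i}\lambda_j$ while leaving the $r_i$ and $\delta$ untouched. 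In that regime the sign sequence $(c_0,\dots,c_{n+1})$ is $+,-,+,-,\dots$, with $n+1$ sign changes when $n$ is even and $n$ when $n$ is odd, so Descartes' rule already recovers the upper bound $2[\frac{n}{2}]+1$.

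For the matching lower bound I would apply the principle underlying Section~\ref{sec:background}: a univariate polynomial whose coefficients alternate in sign and whose sequence $k\mapsto\log|c_k|$ is strictly concave has exactly as many positive roots as sign changes, and all of them simple. Dividing the concavity inequalities $|c_{k-1}c_{k+1}|<c_k^2$ by the appropriate monomials in the $\lambda_j$, they become inequalities of the form $\lambda_j/\lambda_{j-1}<B_j$ for $j=1,\dots,n-1$ together with $1/\lambda_{n-1}<B_n$ — which are exactly the conditions~\eqref{ineq:gammaallintermediates2}, the last index being special because $c_{n+1}$ carries no factor $\lambda_n$ — with $B_1,\dots,B_n$ depending only on the un-rescaled data. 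Choosing the $B_j$ small enough makes this region nonempty (iterating $\lambda_{j-1}>\lambda_j/B_j$ down from $\lambda_{n-1}>1/B_n$ forces each $\lambda_j$ above a fixed threshold while $\lambda_0$ remains free) and makes it lie inside the region where the alternating sign pattern of the previous paragraph already holds; then $g$ has exactly $2[\frac{n}{2}]+1$ simple positive roots. Since, through the parametrization, nondegeneracy of a steady state amounts to $g'$ not vanishing at the corresponding root (a standard Jacobian computation), these steady states are nondegenerate.

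The main difficulty is this last step: one must track precisely how each $c_k$ rescales — especially the odd-index ones, which are honest sums of terms carrying different $\lambda$-weights with no forced dominant term — and then produce explicit, mutually compatible constants $B_1,\dots,B_n$ for which the single region $\{\lambda_j/\lambda_{j-1}<B_j,\ 1/\lambda_{n-1}<B_n\}$ simultaneously enforces the alternating sign pattern and the strict log-concavity of the $|c_k|$. Carrying out this bookkeeping inside the general setup of Section~\ref{sec:background} is where the bulk of the effort lies.
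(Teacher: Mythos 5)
Your reduction to the univariate polynomial $g(e)$ is correct (and the coefficient formulas check out), and the observation that \eqref{ineq:allintermediates} makes every odd-index $c_k$ automatically negative while a large $\lambda_{k-1}$ makes the even-index ones positive is a sound start. This is a genuinely different route from the paper, which for the lower bound never eliminates down to one variable: it keeps the two-variable system \eqref{eq:systemG_J}, exhibits $2[\frac n2]+1$ positively decorated simplices of the regular triangulation $T$ of Lemma~\ref{lemma:Tisregular} under exactly the condition \eqref{ineq:allintermediates}, and then quotes Theorem~\ref{th:BS}. The problem is that your keystone is both misstated and unexecuted. The ``principle'' you invoke --- alternating signs plus strict log-concavity of $|c_k|$ implies as many simple positive roots as sign changes --- is false: $1.5e^2-2e+1$ has alternating signs and $|c_1|^2=4>1.5=|c_0c_2|$, yet no real roots. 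What is true is either the Hutchinson--Kurtz version with the uniform constant $4$, or the asymptotic version that Theorem~\ref{th:BS} actually supplies, in which the thresholds on $|c_{k-1}c_{k+1}|/c_k^2$ depend on the data and are only guaranteed to exist, not to equal $1$.

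More seriously, the step you defer as ``the bulk of the effort'' is the actual content of the theorem. Theorem~\ref{th:BS}(2) (or any Viro-type deformation) requires each coefficient to be a \emph{fixed} constant times a single monomial $\gamma_k$ in the deformation parameters, whereas your $c_k$ is a sum of a term of $\lambda$-weight $\prod_{j\le k-1}\lambda_j$ and terms of weight $\prod_{j\le k-2}\lambda_j$; hence $|c_{k-1}c_{k+1}|/c_k^2$ is not a monomial in the $\lambda_j$ and the concavity inequalities do not literally ``become'' \eqref{ineq:gammaallintermediates2}. To rescue this you would need an additional dominance argument showing that throughout the region cut out by \eqref{ineq:gammaallintermediates2} the top-weight term of each $c_k$ dominates (this does hold, since the constraints propagate a large lower bound onto every $\lambda_j$, but it must be proved, and it is exactly the bookkeeping you omit). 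The paper avoids the issue entirely: in the two-variable system the coefficient of $s_0e^{j+1}$ is $\frac{T_j}{(F_{tot})^{j+1}}+\frac{K_jT_{j-1}}{(F_{tot})^j}$, and \emph{both} summands acquire the same factor $\prod_{i\le j}\lambda_i$ under $k_{\rm{on}_j}\mapsto\lambda_jk_{\rm{on}_j}$, so the rescaled system is exactly of the form \eqref{systemgamma} and Theorem~\ref{th:BS}(2), applied to the cone $\mathcal{C}_T$, produces the constants $B_1,\dots,B_n$ and the conditions \eqref{ineq:gammaallintermediates2} with no further work. Either supply the dominance argument for your univariate $c_k$, or switch to the two-variable formulation before deforming.
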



\begin{remark}\label{remark:rescaling} \rm We will also show in the proof of Theorem~\ref{th:allintermediatesEside}, 
 that for any reaction rate constants and linear conservation constants satisfying \eqref{ineq:allintermediates}, there exist $t_0>0$ 
 such that for any value of $t\in(0,t_0)$, the system $G_{I_n}$ has exactly $2[\frac{n}{2}]+1$ nondegenerate 
 positive steady states after modifying the constants $k_{\rm{on}_j}$ by $t^{j-n}k_{\rm{on}_j}$ for each $j=0,\dots,n-1$.
\end{remark}

We now consider  subnetworks $G_J$, with $J\subset J_n$ where
\begin{equation}\label{Jn}
J_n:=\{i\in I_n\,:\, (-1)^{i+n}=1\}, \text{ for } n\geq 1,
\end{equation}
that is, subsets $J$ with indexes that have the {\em same parity as $n$}.

\begin{thm}\label{th:J} Let $n\geq 1$, and consider a subset $J\subset J_n$. Let  $G_J$ be its associated system 
as in~\eqref{eq:net_phospho_n_lessintermediatesEside}.
Assume moreover that  
\begin{small}
\[ S_{tot}>E_{tot}.\] 
\end{small}%
A multistationarity region in the space of all parameters for which the system $G_J$ admits at least $1+2|J|$ positive steady states can be described
as follows.
Given any positive value of $F_{tot}$, choose any positive real numbers $k_{\rm{cat}_j}, \nu_j$, with $j\in J$ satisfying 
\begin{small}
\begin{equation}\label{ineq:thJ} 
\max_{j\in J}\left\{\frac{k_{\rm{cat}_j}}{\nu_j}\right\}< \left(\frac{S_{tot}}{E_{tot}}-1\right)F_{tot}.
\end{equation}
\end{small}%
Then, there exist positive constants $B_1,\dots,B_n$ such that for any choice of positive constants $\lambda_0,\dots,\lambda_{n-1}$ satisfying
\begin{small}
\begin{equation}\label{ineq:gammaallintermediates}
\frac{\lambda_j}{\lambda_{j-1}}<B_j \ \text{ for } j=1,\dots,n-1,\quad \frac{1}{\lambda_{n-1}}<B_n, 
\end{equation}
\end{small}%
rescaling of the given parameters $k_{\rm{on}_j}$ by $\lambda_{j}\, k_{\rm{on}_j}$, for $j\in J$ and $\tau_j$ by
$\lambda_{j}\tau_j$ if $j\notin J$ gives rise to a system with at least $1+2|J|$ positive steady states.
\end{thm}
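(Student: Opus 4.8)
The plan is to reduce Theorem~\ref{th:J} to the already–proved Theorem~\ref{th:allintermediatesEside} by exploiting the fact that $G_{I_n}$ is obtained from $G_J$ by inserting the missing intermediates $Y_i$, $i\in I_n\setminus J$, and that the removal/addition of intermediates is well understood (cf.~\cite{fw13}). Concretely, given $J\subset J_n$, I would first build an auxiliary network on all of $I_n$ whose steady state variety, intersected with the conservation classes, has at least $1+2|J|$ positive points, and whose parameters fall in the region~\eqref{ineq:allintermediates} of Theorem~\ref{th:allintermediatesEside}. The key observation is that for $i\notin J$ one has a one–step reaction $S_i+E\xrightarrow{\tau_i}S_{i+1}+E$, which is exactly the limit of the Michaelis–Menten mechanism $S_i+E\rightleftarrows Y_i\to S_{i+1}+E$ as the intermediate becomes ``fast''; so the idea is to choose the extra rate constants $k_{\mathrm{on}_i},k_{\mathrm{off}_i},k_{\mathrm{cat}_i}$ ($i\notin J$) so that $k_{\mathrm{cat}_i}/(k_{\mathrm{off}_i}+k_{\mathrm{cat}_i})\cdot k_{\mathrm{on}_i}=\tau_i$ and so that the new quotients $k_{\mathrm{cat}_i}/\nu_i$ can be placed on whichever side of the gap in~\eqref{ineq:allintermediates} the parity of $i$ requires (for $i$ even, below $(\tfrac{S_{tot}}{E_{tot}}-1)F_{tot}$; for $i$ odd, above it). Since $J\subset J_n$, all indices in $J$ have the parity of $n$, and the hypothesis~\eqref{ineq:thJ} only constrains $\max_{j\in J}k_{\mathrm{cat}_j}/\nu_j$; this is precisely what is needed to be consistent with one side of~\eqref{ineq:allintermediates}, and the other rate constants $k_{\mathrm{cat}_i},\nu_i$ for $i\notin J$ are free, so the placement can always be done.

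**Second**, I would make the comparison of steady states precise. Rather than taking a genuine singular limit, I would re-examine the proof of Theorem~\ref{th:allintermediatesEside} (which itself produces the $2[\tfrac n2]+1$ steady states by a rescaling $k_{\mathrm{on}_j}\mapsto \lambda_j k_{\mathrm{on}_j}$ and an argument in the spirit of~\cite{BDG1}, i.e.\ via a positive, nondegenerate decomposition / sign-condition criterion applied to a suitable polynomial system in fewer ``core'' variables). The phosphoform concentrations $s_0,\dots,s_n$ and the enzyme concentrations $e,f$ play the role of core variables in both $G_J$ and $G_{I_n}$; the intermediate concentrations $y_i$ are, at steady state, rational functions of these. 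For $i\notin J$ the steady state relation $y_i=\tfrac{k_{\mathrm{on}_i}}{k_{\mathrm{off}_i}+k_{\mathrm{cat}_i}}s_ie$ makes the contribution of $Y_i$ to the $S_{tot}$ and $E_{tot}$ conservation laws a perturbation of size controlled by $k_{\mathrm{on}_i}/(k_{\mathrm{off}_i}+k_{\mathrm{cat}_i})$; letting this ratio tend to $0$ while keeping the product with $k_{\mathrm{on}_i}$ fixed to give $\tau_i$ recovers exactly the $G_J$ conservation laws~\eqref{eq:consfosfoJ} and the $G_J$ flux $\tau_i s_i e$. Thus the steady state system of $G_J$ with the rescaled $k_{\mathrm{on}},\tau$ is a limit of the steady state system of $G_{I_n}$ in the chosen parameter region, and nondegeneracy of the $1+2|J|$ solutions produced for $G_{I_n}$ (guaranteed by Theorem~\ref{th:allintermediatesEside}) passes to the limit by the implicit function theorem, giving at least $1+2|J|$ positive solutions for $G_J$. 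The constants $B_1,\dots,B_n$ and the rescaling parameters $\lambda_0,\dots,\lambda_{n-1}$ in~\eqref{ineq:gammaallintermediates} are then inherited directly from those in~\eqref{ineq:gammaallintermediates2}, after absorbing the fixed factors $k_{\mathrm{cat}_i}/(k_{\mathrm{off}_i}+k_{\mathrm{cat}_i})$ into the definition of the $B_j$'s.

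**Alternatively**, and perhaps more cleanly given the machinery of~\cite{BDG1}, one may avoid limits altogether: set up the steady state polynomial system of $G_J$ directly, eliminate the $y_j$, $j\in J$, to obtain a parametrized family of $n+1$ equations (the $n+1$-th being $\deg$-one in a suitable monomial change of variables) in the variables coming from the $n+1$ phosphoforms plus the conservation constraints, and then verify the hypotheses of the multistationarity criterion of~\cite{BDG1} — a sign condition on the maximal minors of the coefficient matrix together with a compatibility condition between the exponents and the chosen parameters. The inequality~\eqref{ineq:thJ} is exactly the condition that makes one distinguished minor, associated to the phosphatase reactions $S_{j+1}+F\to S_j+F$ versus the catalytic constants $k_{\mathrm{cat}_j}$, have the correct sign, and the rescaling~\eqref{ineq:gammaallintermediates} realizes the combinatorial ``staircase'' configuration of Newton polytopes that forces the maximal number ($1+2|J|$) of positive roots. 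I expect the main obstacle to be bookkeeping: carefully checking that the parity restriction $J\subset J_n$ together with the single inequality~\eqref{ineq:thJ} (rather than the two-sided~\eqref{ineq:allintermediates}) is enough, i.e.\ that the indices $i\notin J$ can always be routed to the correct side of the threshold without creating any obstruction, and that the degenerate vertices introduced by the one-step reactions do not destroy the sign pattern — equivalently, that no spurious collision of steady states occurs in the limit. Everything else is a direct specialization of Theorem~\ref{th:allintermediatesEside} and the framework of~\cite{BDG1}.
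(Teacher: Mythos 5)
Your primary route---realizing $G_J$ as a degeneration of the full network $G_{I_n}$ and invoking Theorem~\ref{th:allintermediatesEside}---has two concrete gaps. First, a parity clash: for $n$ odd, $J\subset J_n$ consists of \emph{odd} indices, and hypothesis~\eqref{ineq:thJ} places $k_{\rm{cat}_j}/\nu_j$ \emph{below} the threshold $(\tfrac{S_{tot}}{E_{tot}}-1)F_{tot}$ for all $j\in J$, whereas~\eqref{ineq:allintermediates} requires every odd-indexed quotient to lie \emph{above} it. So the auxiliary full network cannot be placed in the region of Theorem~\ref{th:allintermediatesEside} at all; you would need an unstated mirror version of that theorem. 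Second, and independently of parity, your limit argument runs in the wrong direction: the implicit function theorem lets you perturb a nondegenerate zero of the \emph{limit} system (this is what Theorem~\ref{th:lifting} does, going from fewer intermediates to more), not extract a zero of the limit from zeros of the perturbed systems. As $K_i=k_{\rm{on}_i}/(k_{\rm{off}_i}+k_{\rm{cat}_i})\to 0$ for $i\notin J$, the $1+2|J|$ solutions of $G_{I_n}$ could a priori collide, become degenerate, or leave every compact subset of the positive orthant; you would need uniform nondegeneracy and compactness along the family, and also uniformity of the constants $B_1,\dots,B_n$ of Theorem~\ref{th:allintermediatesEside}, which are produced only \emph{after} the rate constants are fixed and may shrink to nothing along your degeneration. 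None of this is supplied, and you acknowledge but do not close the ``spurious collision'' issue.

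The paper avoids all of this by working with $G_J$ directly, which is the route of your final paragraph, but your sketch omits the one combinatorial fact that makes the theorem true. After Lemma~\ref{lemma:par} the steady states are the positive solutions of the \emph{two}-equation system~\eqref{eq:systemG_J} in $(s_0,e)$ (not $n+1$ equations in the phosphoform variables), with support $\mathcal A$ as in~\eqref{eq:support} and, after column scaling, coefficient matrix $Csimple$ whose column over $(1,i+1)$ is $(M_i,1)^t$ for $i\in J$ and $(1,0)^t$ for $i\notin J$. In the regular triangulation $T$ of Lemma~\ref{lemma:Tisregular}, the single inequality $E_{tot}M_j<S_{tot}$ (equivalent to~\eqref{ineq:thJ}) positively decorates \emph{both} simplices $\{(1,j),(1,j+1),(0,0)\}$ and $\{(1,j+1),(1,j+2),(0,0)\}$, precisely because the neighbouring indices $j\pm 1$ are not in $J$; these pairs are pairwise disjoint for distinct $j\in J$ exactly because all elements of $J$ share the parity of $n$; and the simplex $\{(0,1),(1,n),(0,0)\}$ is decorated for free since $n-1\notin J$. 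That yields $2|J|+1$ simultaneously decorated simplices, and Theorem~\ref{th:BS}(2) then produces the constants $B_j$ and the rescaling~\eqref{ineq:gammaallintermediates} exactly as in the proof of Theorem~\ref{th:allintermediatesEside}. Identifying this pairing of simplices and its disjointness is the actual content of the proof; without it, neither of your routes closes.
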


The following immediate Corollary of Theorem~\ref{th:J} implies that  we can obtain a region in parameters space with  
$[\frac n 2]$ intermediates, where the associated system has $2 [\frac n 2]+1$ positive steady states.

\begin{corollary}\label{cor:1/4}
Let $n\geq 1$, and consider the network $G_{J_n}$ as in~\eqref{eq:net_phospho_n_lessintermediatesEside}, 
with $J_n$ as in~\eqref{Jn}. Assume moreover that  
  \begin{small}
\[S_{tot}>E_{tot}.\] 
\end{small}%
Then, there is a multistationarity region in the space of all parameters for which the network $G_{J_n}$ 
admits $2[\frac n 2]+1$ steady states
(with fixed linear conservation constants corresponding to the coordinates of a vector of parameters in this region), 
described in the statement of Theorem~\ref{th:J}.
\end{corollary}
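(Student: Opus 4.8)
The plan is to specialize Theorem~\ref{th:J} to the particular subset $J=J_n$ and then match the resulting lower bound against the upper bound already established in Proposition~\ref{thm:upperboundGJ}. Since $J_n\subset J_n$ trivially, the hypotheses of Theorem~\ref{th:J} apply verbatim with $J=J_n$: assuming $S_{tot}>E_{tot}$, choosing any positive $F_{tot}$ and any positive rate constants $k_{\mathrm{cat}_j},\nu_j$ ($j\in J_n$) that satisfy \eqref{ineq:thJ}, the theorem yields positive constants $B_1,\dots,B_n$ so that every rescaling of the $k_{\mathrm{on}_j}$ (and of the $\tau_j$ for $j\notin J_n$) by factors $\lambda_0,\dots,\lambda_{n-1}$ obeying \eqref{ineq:gammaallintermediates} produces a system $G_{J_n}$ with at least $1+2|J_n|$ positive steady states. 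These are exactly the inequalities that cut out the asserted open region in the space of all parameters.

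Next I would carry out the elementary count of $|J_n|$. By definition, $J_n$ consists of the indices of $I_n=\{0,\dots,n-1\}$ having the same parity as $n$: for $n$ even this is $\{0,2,\dots,n-2\}$, and for $n$ odd it is $\{1,3,\dots,n-2\}$; in both cases $|J_n|=[\frac n2]$. Hence $1+2|J_n|=2[\frac n2]+1$, which gives the claimed lower bound and also records that only $[\frac n2]$ intermediate species (all of the form $Y_j$, none of the $U_i$) are needed, i.e.\ roughly a quarter of the $2n$ intermediates of the full mechanism~\eqref{eq:nsite}.

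Finally, to promote "at least" to "exactly", I would invoke Proposition~\ref{thm:upperboundGJ}, which bounds the number of positive steady states of $G_J$ (and in particular of $G_{J_n}$) by $2[\frac n2]+1$; combined with the lower bound from Theorem~\ref{th:J}, this shows that throughout the region described in that theorem the system $G_{J_n}$ has exactly $2[\frac n2]+1$ positive steady states, with the corresponding linear conservation constants read off from the coordinates of the parameter vector. There is essentially no obstacle: all the real content sits in Theorem~\ref{th:J} and Proposition~\ref{thm:upperboundGJ}, and the only verification required is the combinatorial identity $|J_n|=[\frac n2]$.
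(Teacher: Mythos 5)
Your proposal is correct and follows the same route the paper intends: the corollary is an immediate specialization of Theorem~\ref{th:J} to $J=J_n$ together with the count $|J_n|=[\frac n2]$, which you verify correctly for both parities of $n$. Your additional remark that Proposition~\ref{thm:upperboundGJ} upgrades the lower bound to an exact count is consistent with the paper's own claims and is a harmless (indeed clarifying) addition.
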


We explain in Section~\ref{sec:R}  the computational approach to find new regions of multistationarity. This is derived from
the framework in~\cite{BDG1} that we used to get the previous results. We find new regions of multistationarity for
the cases $n=2,3, 4$ and $5$, after some manual organization of the automatic computations. Our computer aided results are summarized
in Propositions~\ref{prop:maple2},~\ref{prop:maple3},~\ref{prop:maple3b},~\ref{prop:maple4},~\ref{prop:maple5} and~\ref{prop:maple5b}.

\smallskip

The paper is organized as follows. In Section~\ref{sec:upperandmore} we prove Proposition~\ref{thm:upperboundGJ}
and we show how to lift regions of multistationarity from the reduced system $G_J$ to the complete sequential $n$-site phosphorylation system. 
We also show in Lemma~\ref{lemma:0} that even with a single intermediate $Y_0$ it is possible to make a choice of all parameters such that the system has
$2 [\frac n 2]+1$ positive steady states.  This result has been independently found by Elisenda Feliu (personal communication).
This says that while Corollary~\ref{cor:1/4} is optimal,  the regions obtained for any subset $J$ with indexes of the same parity of $n$
in Theorem~\ref{th:J} properly contained in $J_n$, only ensure $2 |J|+1$ positive steady states. However, note that we are able to describe open regions in parameter space and Lemma~\ref{lemma:0} only allows us to get choices of parameters.

In Section~\ref{sec:background} we briefly recall the framework in~\cite{BDG1}, which is the basis of our approach. 
In Section~\ref{sec:proofs} we give the proofs of Theorems ~\ref{th:allintermediatesEside} and~\ref{th:J}. Finally, as we mentioned above, 
we explain  in Section \ref{sec:R} how to implement the computational approach to find new regions of multistationarity. 
We end with a discussion where we further compare our detailed results with previous results in the literature.

\section{Upper bounds and extension of multistationarity}\label{sec:upperandmore}

In this section we collect three results that complete our approach to describe multistationarity regions giving lower bounds for the number of
positive steady states with fixed linear conservation relations for the systems $G_J$ in Definition~\ref{def:G_J}
for any $J\subset I_n$. 
We first show a positive parametrization of the concentrations at steady state of the species of the systems $G_J$,
which allows us to translate the equations defining the steady states in all the concentrations plus the linear conservation
relations into a system of two equations in two variables.
In Proposition~\ref{thm:upperboundGJ}, we prove that any
 mass-action system  $G_J$, has at most $2 [\frac n 2]+1$ positive steady states. 
Lemma~\ref{lemma:0}  shows that having a single intermediate is enough to get that number of
positive steady states, for particular choices of the reaction rate constants. However, this computation by reduction to
the univariate case is not systematic as the general approach from~\cite{BDG1} that we use to describe multistationarity
regions in Theorems~\ref{th:allintermediatesEside} and~\ref{th:J}, which can be applied to study other quite different mechanisms.
It is known that if a system $G_J$ has $m$ nondegenerate positive steady states for a subset $J \subset I_n$, then it is possible to find parameters 
for the whole $n$-site phosphorylation system that also give at least $m$ positive steady states (see~\cite{fw13}). In Theorem~\ref{th:lifting},
we give precise conditions on the rate constants to lift the regions of multistationarity for the reduced networks to regions of multitationarity
with $2[\frac n 2]+1$ positive steady states (with fixed linear conservation constants) of the complete $n$-sequential phosporylation cycle.

\subsection{Parametrizing the steady states}\label{ssec:param}

The following lemma gives a positive parame\-trization of the concentration of the species at steady state for
the systems $G_J$, 
in terms of the concentrations of the unphosphorylated substrate $S_0$ and the kinase $E$. 
It is a direct application of the general procedure presented in Theorem~3.5  in~\cite{aliciaMer}, and
generalizes the parametrization given in Section 4 of~\cite{BDG1}.

\begin{lemma} \label{lemma:par}
Given $n\geq 1$ and a subset $J\subset I_n$, consider the system $G_J$ as in Definition~\ref{def:G_J}.
Denote  for each $j\in J$
\begin{small}
\begin{equation}\label{eq:Kj}
K_j=\frac{k_{\rm{on}_j}}{k_{\rm{off}_j}+k_{\rm{cat}_j}},  \quad \tau_j=k_{\rm{cat}_j}K_j,
\end{equation}
\end{small}%
set $T_{-1}=1$, and  for any $i=0,\dots,n-1$:
\begin{small}
\begin{equation} \label{eq:Ti}
T_i=\prod^{i}_{j=0}\frac{\tau_j}{\nu_j}.
\end{equation}
\end{small}%
Then, the parametrization of the concentrations of the species
at a steady state in terms of $s_0$ and $e$ is equal to:
\begin{small}
\begin{equation}\label{parametrizationGJ}
 s_i =\frac{T_{i-1}}{(F_{tot})^i} \, {s_0e^i},\ i=1,\dots,n, \qquad
 y_j = \frac{K_j\, T_{j-1}}{(F_{tot})^j} \, s_0e^{j+1},\ j\in J, 
\end{equation}
\end{small}
\end{lemma}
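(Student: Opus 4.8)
\emph{Proof proposal.} As indicated in the statement, this is an instance of the general linear-elimination procedure of Theorem~3.5 in~\cite{aliciaMer}, so one option is simply to verify its hypotheses for $G_J$; but since $G_J$ is simple I would also give the short self-contained argument. First eliminate the intermediates: for each $j\in J$ the equation $\frac{dy_j}{dt}=0$ is linear in $y_j$ and gives $y_j=K_j\,s_j\,e$ with $K_j$ as in~\eqref{eq:Kj}. This already yields the formulas for the $y_j$ in~\eqref{parametrizationGJ} once the $s_i$ are known, and it is positive for $s_0,e>0$.

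Next, substitute $y_j=K_j s_j e$ into the substrate equations, using that $f\equiv F_{tot}$ is constant because $G_J$ has no $U$-intermediates. The key observation is that the net contribution of the kinase branch taking $S_j$ to $S_{j+1}$ equals $\tau_j s_j e$ in all cases: for $j\notin J$ by definition, and for $j\in J$ because $k_{\rm{on}_j}-k_{\rm{off}_j}K_j=k_{\rm{cat}_j}K_j=\tau_j$, which is exactly why $\tau_j=k_{\rm{cat}_j}K_j$ is the right definition in~\eqref{eq:Kj}. Hence the reduced steady-state equations form a birth--death chain,
\begin{small}
\begin{align*}
0&=\nu_0 F_{tot}\,s_1-\tau_0\,s_0\,e,\\
0&=\tau_{i-1}s_{i-1}e-\tau_i s_i e+\nu_i F_{tot}\,s_{i+1}-\nu_{i-1}F_{tot}\,s_i,\qquad i=1,\dots,n-1,\\
0&=\tau_{n-1}s_{n-1}e-\nu_{n-1}F_{tot}\,s_n,
\end{align*}
\end{small}
and summing the first $i$ of these makes the interior terms telescope, leaving the ``cut'' relations $\tau_{i-1}s_{i-1}e=\nu_{i-1}F_{tot}\,s_i$ for $i=1,\dots,n$.

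Finally, iterate $s_i=\frac{\tau_{i-1}}{\nu_{i-1}F_{tot}}\,s_{i-1}\,e$ to obtain $s_i=\frac{T_{i-1}}{(F_{tot})^i}\,s_0\,e^i$ with $T_i$ as in~\eqref{eq:Ti} and $T_{-1}=1$, which is the first family in~\eqref{parametrizationGJ}; plugging this back into $y_j=K_j s_j e$ gives the formula for $y_j$. Positivity of every expression for $s_0,e>0$ is immediate from positivity of the $K_j,\tau_j,\nu_j$ and $F_{tot}$. There is no real obstacle here; the only point requiring care is the bookkeeping in the second step — checking that after elimination of the $Y_j$ the $j\in J$ and $j\notin J$ arrows both contribute the common flux $\tau_j s_j e$, so that the same recursion holds for every $J\subset I_n$ — together with the remark that the same computation shows the $E$-conservation law in~\eqref{eq:consfosfoJ} is automatically compatible with this parametrization, which is what lets one reduce, in the next step, to two equations in $s_0$ and $e$.
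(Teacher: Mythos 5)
Your proposal is correct. Note that the paper itself gives no proof of this lemma: it simply states that the parametrization is a direct application of the general elimination procedure of Theorem~3.5 in the MESSI paper of P\'erez Mill\'an and Dickenstein, and that it generalizes the computation in Section~4 of the Bihan--Dickenstein--Giaroli paper. Your self-contained argument is exactly the right specialization of that procedure: linear elimination of each $y_j$ via $y_j=K_js_je$, the identity $k_{\rm{on}_j}-k_{\rm{off}_j}K_j=k_{\rm{cat}_j}K_j=\tau_j$ showing that both the $j\in J$ and $j\notin J$ arrows contribute the common net flux $\tau_js_je$, the use of $f\equiv F_{tot}$, and the telescoping of the birth--death chain to the cut relations $\tau_{i-1}s_{i-1}e=\nu_{i-1}F_{tot}\,s_i$. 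Your closing remark that $de/dt=0$ imposes no further condition (being minus the sum of the $dy_j/dt$) is also the correct accounting of why exactly the two conservation relations remain as the only extra equations in $s_0$ and $e$. The only thing your write-up buys beyond the paper is transparency; the only thing the citation buys is brevity and the reassurance that the same elimination works for a much larger class of (MESSI) networks.
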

The inverses  $K_j^{-1}$ of the constants $K_j$ in~\eqref{eq:Kj} in the statement of
 Lemma~\ref{lemma:par} are usually called \textit{Michaelis-Menten constants}.

\smallskip

Let $n\geq 1$ and a subset $J\subset I_n$. If we substitute the monomial parametrization of the concentration 
of the species at steady state~\eqref{parametrizationGJ} into the conservation laws, we obtain a system of two 
equations in two variables $s_0$ and $e$. We have: 
\begin{small}
\begin{eqnarray}
s_0 + \sum_{j\in J} \left(\frac{T_{j}}{(F_{tot})^{j+1}} + \frac{K_j\, T_{j-1}}{(F_{tot})^j}\right)\, {s_0e^{j+1}}
 + \sum_{j\notin J} \frac{T_{j}}{(F_{tot})^{j+1}} \, {s_0e^{j+1}} - S_{tot}=0,\label{eq:systemG_J}\\
e + \sum_{j\in J} \frac{K_j\, T_{j-1}}{(F_{tot})^j}\, {s_0e^{j+1}} - E_{tot}=0.\nonumber
\end{eqnarray}
\end{small}

We can write system \eqref{eq:systemG_J} in matricial form:
\begin{small}
\begin{equation}\label{eq:C}
C \begin{pmatrix} s_0 & e & s_0e & s_0e^2 & \dots & s_0e^{n} & 1
\end{pmatrix}^t=0,
\end{equation}
\end{small}
where $C\in\R^{2\times (n+3)}$ is the matrix of coefficients:
\begin{small}
\begin{equation}\label{eq:matrixC} 
C=\begin{pmatrix} 1 & 0 & \frac{T_{0}}{F_{tot}} + \beta_0 & \frac{T_{1}}{(F_{tot})^2} +
 \beta_1 & \dots &  \frac{T_{n-1}}{(F_{tot})^n} + \beta_{n-1} & -S_{tot}\\
0 & 1 & \beta_0 & \beta_1 & \dots & \beta_{n-1} & -E_{tot}
\end{pmatrix},
\end{equation}
\end{small}
with 
\begin{equation}\label{eq:beta}
\small
\beta_j=\frac{K_j\, T_{j-1}}{(F_{tot})^j}  \text{  for } j\in J,   \text{ and } \beta_j=0  \text{ if } j\notin J.
\end{equation}

Note that if we order the variables $s_0$, $e$, the support of the system  (that is, the exponents of
the monomials that occur) is the following set $\mathcal A$:
\begin{equation}\label{eq:support}
\mathcal{A}=\{(1,0),(0,1),(1,1),(1,2),\dots,(1,n),(0,0)\}\subset \Z^2,
\end{equation}
independently of the choice of $J\subset I_n$.

\subsection{Upper bounds on the number of positive steady states}\label{ssec:complex}

We first recall Kushnirenko Theorem, a fundamental result about sparse systems of polynomial equations,  which gives a bound on the
number of complex solutions with nonzero coordinates.
Given a finite point configuration $\mathcal{A}\subset\Z^d$, denote by $\ch{(\A)}$ the convex hull of $\mathcal{A}$. 
We write $\vol$ to denote Euclidean volume, and set $\C^{*}=\C\setminus \{0\}$. 

\smallskip

\noindent{\bf Kushnirenko Theorem~\cite{K}:} 
Given a finite point configuration $\mathcal{A}\subset\Z^d$,
a sparse system of $d$ Laurent polynomials in $d$ variables with 
support $\mathcal{A}$ has at most $d!\,\vol(\ch{(\A)})$ isolated solutions in 
$(\C^{*})^d$ (and exactly this number if the polynomials have {\em generic} coefficients.)

\medskip

We also recall the classical Descartes rule of signs.

\smallskip

\noindent{\bf Descartes rule of signs:} 
Let $p(x)=c_0+c_1x+\dots+c_mx^m$ be a nonzero univariate real polynomial with $r$ positive real roots counted with multiplicity. Denote by $s$ the number of sign variations in the ordered sequence of the signs 
$\sign(c_0),\dots,\sign(c_m)$ of the coefficients, i.e., discard the $0$'s in this sequence and then count 
the number of times two consecutive signs differ. Then $r\leq s$ and $r$ and $s$ have the same parity, which is
even if $c_0 c_m >0$ and odd if $c_0 c_m <0.$

 \medskip
 
We then have that $2 [\frac n 2]+1$ is an upper bound for the number of positive real solutions of the system of equations defining the steady states 
of any system $G_J$ in Definition~\ref{def:G_J}:

\begin{proposition}\label{thm:upperboundGJ} For any choice of rate constants and total conservation constants, 
the dynamical system $G_J$ associated with any subset $J \subset  I_n$ has at most $2[\frac n 2]+1$ isolated positive steady states. 
In fact, the polynomial system of equations defining the steady states of $G_J$
can have at most $n+1$ isolated solutions in $(\C^{*})^d$.
\end{proposition}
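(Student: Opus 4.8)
The plan is to reduce the two-variable system \eqref{eq:systemG_J} to a single univariate polynomial by eliminating one variable, and then apply Descartes' rule of signs together with a count of the number of possible sign variations. The matrix form \eqref{eq:C} is the key: the second row of $C$ expresses $e$ as $E_{tot}$ minus a polynomial in $s_0 e$ with positive coefficients, and the first row then becomes a relation between $s_0$, $e$. The natural move is to use the second equation to solve for $s_0$ as a rational function of $e$ (the coefficient of $s_0 e^{j+1}$ in the second equation is $\beta_j \ge 0$, so $s_0 = \frac{E_{tot}-e}{\sum_{j\in J}\beta_j e^{j+1}}$ when $J\neq\emptyset$, and similarly trivially when $J=\emptyset$), substitute into the first equation, and clear denominators to obtain a univariate polynomial $q(e)$ whose positive roots in the interval $(0,E_{tot})$ correspond bijectively to positive steady states.

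First I would carry out the elimination carefully and identify the degree and the support of $q(e)$. One should check that, after clearing denominators, $q(e)$ has degree at most $n+1$ in $e$ (the term $s_0 e^{n}$ contributes the top degree once the degree-$(n)$ or degree-$(n+1)$ denominator factor is accounted for). For the complex count in $(\C^*)^d$ I would instead invoke Kushnirenko's theorem directly: the support $\mathcal A$ in \eqref{eq:support} has convex hull a triangle (or degenerate polygon) with vertices among $(0,0)$, $(1,0)$, $(0,1)$, $(1,n)$, whose normalized area is computed by a $2\times 2$ determinant; the resulting bound $2!\,\vol(\ch(\mathcal A))$ should equal $n+1$. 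This gives the bound on isolated solutions in $(\C^*)^2$ and hence, since positive solutions are a subset, a first weak bound on positive steady states.

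Next, to sharpen the bound on \emph{positive} solutions from $n+1$ down to $2[\tfrac n2]+1$ — i.e., to rule out the last root when $n$ is odd — I would examine the signs of the coefficients of the univariate polynomial $q(e)$, or better, work on the interval $(0,E_{tot})$ via the substitution or via a sign analysis of $q$ at the endpoints combined with Descartes. The point is that when $n$ is odd, a parity argument from Descartes' rule (the sign pattern forces the number of roots in the relevant range to have the parity that excludes $n+1$, or the leading/constant coefficients have signs forcing at most $n$ sign changes), so the count drops by one. I would make this precise by writing out $\sign$ of the constant term (governed by $-S_{tot}$, $-E_{tot}$, and the positive $T_i$, $\beta_j$) and of the leading term, and counting the maximum possible sign variations as a function of the parity of $n$.

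\textbf{The main obstacle} I anticipate is the sharpening step — getting from the crude Kushnirenko bound $n+1$ down to $2[\tfrac n2]+1$ for positive solutions when $n$ is odd. Kushnirenko and the generic Descartes bound both naively allow $n+1$ positive roots, so one genuinely needs to exploit the specific \emph{signs} of the coefficients of $C$ (all the $T_i$, $\beta_j$, $\frac{T_i}{(F_{tot})^{i+1}}$ are positive, while only the last column entries $-S_{tot}, -E_{tot}$ are negative) and the fact that the relevant roots must lie in $(0,E_{tot})$. I expect the cleanest route is: eliminate to get $q(e)$, observe its coefficient sign sequence has exactly one sign change coming from the "$-S_{tot}$/$-E_{tot}$" block, deduce by Descartes that $q$ has at most one positive root — no, that is too strong; rather, one must track how the monomials $s_0 e^{j+1}$ interleave after substitution. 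The careful bookkeeping of which coefficients can vanish (when $j\notin J$) versus which are strictly positive, and the resulting parity constraint, is where the real work lies, and it is presumably why the authors isolate this as a Proposition rather than a one-line remark.
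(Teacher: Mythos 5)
Your overall strategy is the paper's: Kushnirenko's theorem on the support \eqref{eq:support} gives the bound $n+1$ on isolated solutions in $(\C^*)^2$ (the convex hull has Euclidean area $\frac{n+1}{2}$, so $2!\cdot\frac{n+1}{2}=n+1$), and the drop to $2[\frac n2]+1$ comes from a univariate reduction plus Descartes. But the step you flag as the ``main obstacle'' is exactly the step you have not supplied, and your sketch of it contains two genuine problems. First, you eliminate using the wrong equation: solving the \emph{second} equation of \eqref{eq:systemG_J} for $s_0=\frac{E_{tot}-e}{\sum_{j\in J}\beta_j e^{j+1}}$ forces you to restrict to $e\in(0,E_{tot})$ to keep $s_0>0$, and it fails outright when $J=\emptyset$ (the denominator vanishes identically). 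The paper instead solves the \emph{first} equation for $s_0=S_{tot}/p(e)$, where $p(e)=1+\sum_{i}(\alpha_i+\beta_i)e^{i+1}$ has nonnegative coefficients and constant term $1$; hence $p(e)>0$ and $s_0>0$ for \emph{every} $e>0$, so positive roots of the resulting univariate polynomial $q(e)=e\,p(e)+\sum_i\beta_iS_{tot}e^{i+1}-E_{tot}\,p(e)$ are in bijection with positive steady states with no interval restriction and no case distinction on $J$.

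Second, and more importantly, you never commit to the argument that actually closes the gap, and your tentative versions of it (``exactly one sign change... no, that is too strong'' and ``careful bookkeeping of which coefficients can vanish'') point in the wrong direction. No bookkeeping of the middle coefficients is needed: only the \emph{parity} clause of Descartes' rule is used. The polynomial $q$ has degree $n+1$, leading coefficient $\alpha_{n-1}+\beta_{n-1}>0$ and constant coefficient $-E_{tot}<0$; since these outermost nonzero coefficients have opposite signs, the number of sign variations is odd regardless of which intermediate coefficients vanish or change sign, so the number of positive roots (with multiplicity) is odd. Combined with the Kushnirenko bound, an odd number that is at most $n+1$ is at most $n$ when $n$ is odd (because then $n+1$ is even), and this is precisely $2[\frac n2]+1$ in both parities. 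Your proposal identifies all the right ingredients but leaves this one-line parity finish unproved, so as written it does not establish the claimed bound for odd $n$.
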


\begin{proof} The number of positive steady states of the subnetwork $G_J$ is the number of positive solutions of the 
sparse system~\eqref{eq:systemG_J} of two equations and two variables. The support of the system is \eqref{eq:support} whose convex hull is a triangle with Euclidean volume equal to $\frac{n+1}{2}$. By Kushnirenko Theorem,
the number of isolated solutions in  $(\C^{*})^2$ is at most $2!\frac{(n+1)}{2}=n+1$. In particular, the number of 
isolated positive solutions is at most $n+1$.

Moreover, when all positive solutions are nondegenerate,  their number is necessarily 
 odd by Corollary~2 in~\cite{CFMW17}, which is based on the notion of Brouwer's degree. Indeed, in our case, 
we can bypass the condition of nondegeneracy because we can use Descartes rule of signs in one variable.
In fact, from the first equation of system~\eqref{eq:systemG_J}, we can write:
\begin{small}
\begin{equation}\label{eq:s0}
s_0=\frac{S_{tot}}{p(e)},
\end{equation}
\end{small}%
where $p(e)$ is the following polynomial of degree $n$ on the variable $e$:
{\small
\begin{equation}
p(e):=1+\sum_{i=0}^{n-1}(\alpha_i+\beta_i)e^{i+1},
\end{equation}}%
with 
\begin{equation}\label{eq:alpha}
\small
\alpha_i=\frac{T_{i}}{(F_{tot})^{i+1}}, \quad i=0,\dots,n-1,
\end{equation}
 and  $\beta_{i}=
\frac{K_j\, T_{j-1}}{(F_{tot})^j}$ if $j\in J$ or $\beta_{j}=0$ if $j\notin J$ were defined in~\eqref{eq:beta}. 
Note that for any  $e>0$ it holds that $p(e)>0$, and so $s_0 >0$.
If we replace~\eqref{eq:s0} in the second equation of~\eqref{eq:systemG_J}, we have:
{\small 
\begin{equation}\label{polynomial-e}
e + \sum_{i=0}^{n-1} \beta_i \frac{S_{tot}}{p(e)} e^{i+1} - E_{tot}=0.
\end{equation}}%
 The number of positive solutions of the equation \eqref{polynomial-e} is the same if we multiply by $p(e)$:
 {\small
\begin{equation}\label{eq:polyq}
q(e):=e\,p(e) + \sum_{i=0}^{n-1} \beta_i{S_{tot}}e^{i+1} - E_{tot}\, p(e)=0.
\end{equation}}%
This last polynomial $q$ has degree $n+1$, with leading coefficient equal to $\alpha_{n-1}+\beta_{n-1}>0$ and 
constant coefficient equal to $-E_{tot}<0$. The sign variation of the coefficients of $q$ has the same parity as the 
sign variation of the leading coefficient and the constant coefficient, which is one. So, by Descartes rule of signs, 
as the sign variation is odd, the number of positive solutions is also odd.
\end{proof}

\subsection{One intermediate is enough in order to obtain  $2[\frac n 2]+1$ positive steady states}

As we mentioned in the introduction, the following result has been independently found by Elisenda Feliu (personal communication).

\begin{lemma} \label{lemma:0}
If $J=\{0\}$, then there exists parameter values such that the system $G_J$ admits  $2[\frac n 2]+1$ positive steady states.
\end{lemma}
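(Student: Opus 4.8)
The plan is to reduce the statement, exactly as in the proof of Proposition~\ref{thm:upperboundGJ}, to a single univariate polynomial, and then to choose the parameters of $G_{\{0\}}$ so that this polynomial has the prescribed number of positive roots by writing it down in factored form from the start. For $n=1$ there is nothing to do: $q$ then has degree $2$ with negative constant term and positive leading coefficient, hence exactly one positive root, which is $2[\tfrac12]+1$. So assume $n\ge 2$.

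First I would specialize the reduction. For $J=\{0\}$ one has $\beta_0=K_0$ and $\beta_i=0$ for $i\ge 1$, so the polynomial $q$ of~\eqref{eq:polyq} becomes
\[
q(e)=(e-E_{tot})\,p(e)+\beta_0\,S_{tot}\,e,\qquad p(e)=1+(\alpha_0+\beta_0)\,e+\sum_{i=1}^{n-1}\alpha_i\,e^{i+1},
\]
and, by the proof of Proposition~\ref{thm:upperboundGJ}, the positive steady states of $G_{\{0\}}$ correspond bijectively to the positive roots of $q$, via $s_0=S_{tot}/p(e)$ together with the parametrization of Lemma~\ref{lemma:par}. The flexibility to exploit is that, fixing $F_{tot}=1$, the numbers $\alpha_0,\dots,\alpha_{n-1},\beta_0,E_{tot},S_{tot}$ may be prescribed to be \emph{arbitrary} positive reals: the ratios $\alpha_i/\alpha_{i-1}=\tau_i/\nu_i$ (with $\alpha_{-1}:=1$) and $\beta_0=K_0$ are free, and $E_{tot},S_{tot}$ are free conservation constants; honest positive rate constants are recovered afterwards, e.g.\ by $\nu_i=1$, $\tau_i=\alpha_i/\alpha_{i-1}$ for $i\ge 1$, $k_{\mathrm{cat}_0}=\alpha_0/\beta_0$, $k_{\mathrm{off}_0}=1$ and $k_{\mathrm{on}_0}=\beta_0(1+k_{\mathrm{cat}_0})$.

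Next I would write down the target polynomial by hand. Put $m:=2[\tfrac n2]+1$, choose $0<r_1<\dots<r_m$ growing very fast (say $r_i=10^{\,i}$) and, if $n$ is odd, an auxiliary $b>0$, and set $q^\star(e):=\prod_{i=1}^{n+1}(e-r_i)$ if $n$ is even and $q^\star(e):=(e+b)\prod_{i=1}^{n}(e-r_i)$ if $n$ is odd. In both cases $q^\star$ has degree $n+1$, positive leading coefficient, and constant term $q^\star(0)<0$: for $n$ even $q^\star(0)=(-1)^{n+1}\prod_i r_i$ and for $n$ odd $q^\star(0)=(-1)^n b\prod_i r_i$, in each case a negative multiple of a product of positive reals; and $q^\star$ has exactly $m$ simple positive roots. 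Writing $q^\star=\sum_{k=0}^{n+1}c_k e^k$, matching $q=q^\star$ and reading off coefficients gives $E_{tot}=-c_0$, $\alpha_{n-1}=c_{n+1}$, the recursion $\alpha_{k-2}=c_k+E_{tot}\,\alpha_{k-1}$ for $k=n,n-1,\dots,3$ (which pins down $\alpha_{n-2},\dots,\alpha_1$), then $\alpha_0+\beta_0=c_2+E_{tot}\,\alpha_1$ (take $\alpha_0=1$ and $\beta_0$ the remainder), and finally $S_{tot}=\bigl(c_1-1+E_{tot}(\alpha_0+\beta_0)\bigr)/\beta_0$. Because $E_{tot}=-c_0$ is the product of all the $r_i$ (times $b$ when $n$ is odd), which for $r_i=10^{\,i}$ is enormous compared with every $|c_k|$, each step of the recursion is dominated by its positive term $E_{tot}\alpha_{k-1}$, and $\beta_0$ and $S_{tot}$ likewise come out positive; so $q^\star$ is realized by a bona fide positive parameter vector. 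For that vector $q$ has exactly the $m$ simple positive roots $r_1,\dots,r_m$, and each $r_i$ yields, via $s_0=S_{tot}/p(r_i)>0$ and Lemma~\ref{lemma:par}, a positive steady state; distinct roots give distinct steady states, so $G_{\{0\}}$ has $2[\tfrac n2]+1$ positive steady states, matching the bound of Proposition~\ref{thm:upperboundGJ}.

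The one genuinely delicate point — the main obstacle — is this realizability step: the coefficient vector $(c_0,\dots,c_{n+1})$ arising from $G_{\{0\}}$ is \emph{not} arbitrary, since the single parameter $E_{tot}$ enters $c_1,\dots,c_n$ all at once, so one has to verify that the sign pattern and, more importantly, the magnitudes of $q^\star$ are consistent with the positivity constraints $\alpha_i>0$, $\beta_0>0$, $S_{tot}>0$. What makes it go through is exactly the super-exponential spacing of the roots, which forces $E_{tot}$ to dominate the recursion; the remainder is routine bookkeeping. (I would also record that, since the $r_i$ are simple roots of $q$, the two curves cutting out~\eqref{eq:systemG_J} meet transversally at the corresponding points, so the steady states are nondegenerate, although the statement of the lemma does not require this.)
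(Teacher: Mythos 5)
Your proposal is correct and follows essentially the same route as the paper's proof: reduce to the univariate polynomial $q$ of \eqref{eq:polyq}, prescribe a degree-$(n+1)$ target polynomial with $2[\frac n 2]+1$ distinct positive roots and negative constant term, solve the resulting triangular system for $\alpha_{n-1},\dots,\alpha_0,\beta_0,S_{tot}$, and use the dominance of $E_{tot}$ in each step of the recursion to force positivity before recovering honest rate constants. The only cosmetic difference is that the paper normalizes the constant term to $-1$ so that $E_{tot}$ remains a free parameter to be taken arbitrarily large, whereas you pin $E_{tot}=-c_0$ and instead arrange the needed domination by spreading the roots $r_i=10^i$; both devices make the same coefficient-matching argument go through.
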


\begin{proof}
Assume $n$ is even, then $n+1$ is odd. As we did in the proof of Proposition~\ref{thm:upperboundGJ}, 
the positive solutions of the system~\eqref{eq:systemG_J} are in bijection with the positive 
solutions of the polynomial $q(e)$ in~\eqref{eq:polyq}. Here $\beta_0=K_0$ and $\beta_i=0$ for $i\neq 0$. 
We will consider the polynomial $\tilde q(e):=\frac{q(e)}{E_{tot}}$, with constant coefficient equal to $-1$.

Consider any polynomial of degree $n+1$
{\small
\begin{equation}\label{eq:polyn+1} 
c_{n+1}e^{n+1}+c_ne^n+\dots+c_1e-1,
\end{equation}}%
with $n+1$ distinct positive roots, and with constant term equal to $-1$. Then, we have that $c_{i}(-1)^{i+1}>0$, and in
particular, $c_{n+1}>0$.

Our goal is to find reaction rate constants and total conservation constants such that the polynomial~\eqref{eq:polyn+1} 
coincides with the polynomial $\tilde q(e)$. Comparing the coefficient of $e^i$, for $i=1,\dots,n+1$ in both polynomials, we need 
to have:
{\small
\begin{eqnarray}
\frac{\alpha_{n-1}}{E_{tot}}&=&c_{n+1},\nonumber \\
\frac{\alpha_{i-2}}{E_{tot}} - \alpha_{i-1 } &=& c_{i}, \, \text{ for } i=3,\dots,n, \label{eq:coefficients}\\
\frac{\alpha_{0}+K_0}{E_{tot}} - \alpha_{1 } &=& c_{2}, \nonumber \\
\frac{1+S_{tot}K_0}{E_{tot}} - (\alpha_{0}+K_0) &=& c_1.\nonumber
\end{eqnarray}   }%
Keep in mind that the values of $c_i$ are given. We may solve \eqref{eq:coefficients} in terms of $E_{tot}$ and of the $c_i,i=1,\dots,n+1:$
{\small 
\begin{eqnarray}
\alpha_{n-1-k}&=& E_{tot}\sum_{i=0}^k c_{n+1-i}\, (E_{tot})^{k-i},   \, \text{ for each } k =0,1,\dots,n-2,\nonumber\\
\alpha_0+K_0 &=& E_{tot}\sum_{i=0}^{n-1} c_{n+1-i}\, (E_{tot})^{n-1-i},  \label{eq:coefficients2} \\
1+S_{tot}K_0&=& E_{tot}\sum_{i=0}^n c_{n+1-i}\, (E_{tot})^{n-i}.   \nonumber
\end{eqnarray} }%
 Note that if we take any value for $E_{tot}>0$, 
then the values of $\alpha_i$ for $i=1,\dots,n-1$, $\alpha_0 + K_0$ and $S_{tot}K_0$ are completely determined. 
So, we find an appropriate value of $E_{tot}$ such that the previous values $\alpha_i$, $K_0$ and 
$S_{tot}$ are all positive. For that, we choose $K_0=1$, and we take $E_{tot}$ large enough such that 
{\small
\begin{eqnarray}
\sum_{i=0}^k c_{n+1-i}\, (E_{tot})^{k-i} & > & 0, \, \text{ for each } k \in \{0,1,\dots,n-2\} \text{ with } k \text{ odd}, \nonumber \\
E_{tot}\sum_{i=0}^{n-1} c_{n+1-i}\, (E_{tot})^{n-1-i} & > &1,\quad\nonumber
E_{tot}\sum_{i=0}^{n-1} c_{n+1-i}\, (E_{tot})^{n-i} > 1.
\end{eqnarray} }%
This is possible since $c_{n+1}>0$ and that imposing the first condition just on $k$ odd 
is enough to ensure that it holds for all $k\in I_{n-1}$ as well because of the signs of the $c_i.$ 
With these conditions, and using the equations \eqref{eq:coefficients2}, 
the values of $\alpha_i$ for each $i=0,\dots,n-1$ and $S_{tot}$ are determined and are all positive.

Now, it remains to show that we can choose reaction rate constants such that the values of $\alpha_i, i=0,\dots,n-1$ are the given ones. 
Recall that
$\alpha_i=\frac{T_i}{(F_{tot})^{i+1}}$, where $T_i=\prod^{i}_{j=0}\frac{\tau_j}{\nu_j}$ for $i=0,\dots,n-1$ 
and $T_{-1}=1$, where $\tau_0=k_{\rm{cat}_0}K_0=k_{\rm{cat}_0}$ (we have chosen $K_0=1$). 
Take for example $F_{tot}=1$, $k_{\rm{on}_0}=2$, $k_{\rm{off}_0}=1$ and $k_{\rm{cat}_0}=1$ 
(to obtain $K_0=1$). Then, $\tau_0=1$, so we take $\nu_0=\frac{1}{\alpha_0}$. As 
$\alpha_{i+1}=\alpha_{i}\frac{\tau_{i+1}}{\nu_{i+1}}$, for $i=0,\dots,n-2$, we can choose any positive values of $\tau_{i+1}, \nu_{i+1}$
such that $\frac{\tau_{i+1}}{\nu_{i+1}}= \frac{\alpha_{i+1}}{\alpha{i}}$, and we are done.

When $n$ is odd, with a similar argument, we can find reaction rate constants and total conservation 
constants such that the polynomial~$\tilde q(e)$ gives a polynomial like~\eqref{eq:polyn+1}
 (but with $n$ distinct positive roots and one negative root).
\end{proof}

\subsection{Lifting regions of multistationarity}
Multistationarity of any of the subsystems $G_J$ can be extended to the full $n$-site phosphorylation system (for instance,
by Theorem 4 in~\cite{fw13}).
We give a precise statement of this result in Theorem~\ref{th:lifting}. 

Consider the full $n$-site phosphorylation network \eqref{eq:nsite}, with a given vector of reaction rate constants 
$\kappa\in\R^{6n}$:
{\small \[\kappa=(k_{\rm{on}_0}, k_{\rm{off}_0}, k_{\rm{cat}_0}, \dots, k_{\rm{on}_{n-1}}, k_{\rm{off}_{n-1}}, 
k_{\rm{cat}_{n-1}}, \ell_{\rm{on}_0}, \ell_{\rm{off}_0}, \ell_{\rm{cat}_0}, \dots, \ell_{\rm{on}_{n-1}}, \ell_{\rm{off}_{n-1}}, \ell_{\rm{cat}_{n-1}}).\]}%
We define the following rational functions of $\kappa$:
{\small
\begin{equation}\label{eq:constantsG_Jkappa}
\tau_j(\kappa)= k_{\rm{cat}_j} \, \mu_j(\kappa) \text{ if } j\notin J \, \text{ and } \, \nu_j(\kappa)=
\ell_{\rm{cat}_j}\, \eta_j(\kappa) \text{ for } j=0,\dots,n-1, 
\end{equation}}%
where $\mu_j(\kappa)$ and $\eta_j(\kappa)$ are in turn the following rational functions:
{\small
\begin{equation}
\mu_j(\kappa)=\frac{k_{\rm{on}_j}}{k_{\rm{off}_j}+k_{\rm{cat}_j}} \text{ if } j\notin J \, \text{ and } \, \eta_j(\kappa)=
\frac{\ell_{\rm{on}_j}}{\ell_{\rm{off}_j}+\ell_{\rm{cat}_j}} \text{ for } j=0,\dots,n-1.
\end{equation}}%
We denote by $\varphi\colon\R_{>0}^{6n}\to \R_{>0}^{2n+2|J|}$ the function that takes $\kappa$ 
and gives a vector of (positive) reaction rate constants with the following order: 
first, the constants $k_{\rm{on}_j}, k_{\rm{off}_j}, k_{\rm{cat}_j},j\in J$, then $\tau(\kappa), j\notin J$, and then $\nu_j(\kappa), j=0,\dots, n-1$.

Given a subset $J\subset I_n$ and a vector of reaction rate constants $\kappa\in\R_{>0}^{6n}$, we consider 
the subnetwork $G_J^{\varphi(\kappa)}$ as in Definition~\ref{def:G_J}, with rate constants $\varphi(\kappa)$:
{\small
 \begin{eqnarray}   
 S_j+E 
 \arrowschem{k_{\rm{on}_j}}{k_{\rm{off}_j}} Y_j
 \stackrel{k_{\rm{cat}_j}}{\rightarrow} S_{j+1}+E,\quad \text{ if } j\in J\nonumber \\
 S_j+E   \stackrel{\tau_j(\kappa)}{\rightarrow} S_{j+1}+E,\quad \text{ if } j\notin J \label{eq:net_phospho_n_lessintermediatesEside2}\\
 S_{j+1}+F 
 \stackrel{\nu_{j}(\kappa)}{\rightarrow} S_{j}+F,\quad 0\leq j \leq n-1. \nonumber
\end{eqnarray}}

Applying Theorem 6.4 from \cite{DGPR}, which is built on Theorem 4 in \cite{fw13}, we get the following lifting result. 

\begin{thm}\label{th:lifting} Consider the full $n$-site phosphorylation network \eqref{eq:nsite} with fixed reaction 
rate constants $\kappa^0$ and the network $G_J^{\varphi(\kappa^0)}$, both with
total conservation amounts $S_{tot}$, $E_{tot}$, $F_{tot}>0$. Suppose that system $G_J^{\varphi(\kappa^0)}$ 
admits $m$  nondegenerate positive steady states. 

Then, there exists $\varepsilon_0 >0$ such that for any choice of rate constants $\kappa$ such that $\varphi(\kappa)=\varphi(\kappa^0)$ and
{\small
\begin{equation}\label{eq:mueta}
\max_{j\notin J}|\mu_j(\kappa)|, \quad \max_{j\in I_n}|\eta_j(\kappa)| \, < \, \varepsilon_0, 
\end{equation}}%
the $n$-site sequential phosphorylation system admits $m$ positive nondegenerate  steady states in the stoichiometric compatibility class 
defined by $S_{tot}$, $E_{tot}$ and $F_{tot}>0$.
Moreover, the set of rate constants $\kappa$ verifying $\varphi(\kappa)=\varphi(\kappa^0)$ and~\eqref{eq:mueta} is nonempty.
\end{thm}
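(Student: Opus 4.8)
The plan is to identify $G_J^{\varphi(\kappa^0)}$ as the reduced network obtained from the full $n$-site phosphorylation network~\eqref{eq:nsite} by eliminating the intermediate species $Y_j$ for $j\notin J$ together with all $U_i$, $i=0,\dots,n-1$, in the sense of~\cite{fw13}, and then to invoke Theorem~6.4 of~\cite{DGPR}. First I would match the map $\varphi$ with the rate-constant reduction map of this elimination. Eliminating the intermediate in the reversible pair $S_j+E \rightleftarrows Y_j \stackrel{k_{\mathrm{cat}_j}}{\rightarrow} S_{j+1}+E$ replaces it by the reaction $S_j+E \stackrel{\tau_j}{\rightarrow} S_{j+1}+E$ with effective constant $\tau_j = k_{\mathrm{cat}_j}\, k_{\mathrm{on}_j}/(k_{\mathrm{off}_j}+k_{\mathrm{cat}_j}) = k_{\mathrm{cat}_j}\,\mu_j(\kappa)$; eliminating $U_j$ from $S_{j+1}+F \rightleftarrows U_j \stackrel{\ell_{\mathrm{cat}_j}}{\rightarrow} S_j+F$ replaces it by $S_{j+1}+F \stackrel{\nu_j}{\rightarrow} S_j+F$ with $\nu_j = \ell_{\mathrm{cat}_j}\,\eta_j(\kappa)$; the reactions involving the retained intermediates $Y_j$, $j\in J$, are left untouched, with their original constants $k_{\mathrm{on}_j},k_{\mathrm{off}_j},k_{\mathrm{cat}_j}$. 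These are exactly the formulas in~\eqref{eq:constantsG_Jkappa}, so the network produced by the elimination with input $\kappa^0$ is precisely $G_J^{\varphi(\kappa^0)}$ as in Definition~\ref{def:G_J}.

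Next I would check that the reduction is compatible with the conservation structure and the total amounts: discarding the $Y_j$ ($j\notin J$) and all the $U_i$ turns the three conservation laws~\eqref{eq:consfosfo} of~\eqref{eq:nsite} into the three laws~\eqref{eq:consfosfoJ} of $G_J$ (the sum $\sum_{i} y_i$ collapses to $\sum_{j\in J} y_j$ and $\sum_i u_i$ disappears), so a stoichiometric compatibility class of the full system with total amounts $S_{tot},E_{tot},F_{tot}$ is carried to the class of $G_J^{\varphi(\kappa^0)}$ with the same $S_{tot},E_{tot},F_{tot}$. This places us in the setting of Theorem~6.4 of~\cite{DGPR}: $G_J^{\varphi(\kappa^0)}$ has $m$ nondegenerate positive steady states in the class given by $S_{tot},E_{tot},F_{tot}$, and the quantities $\mu_j(\kappa)$ ($j\notin J$) and $\eta_j(\kappa)$ ($j\in I_n$) are precisely the parameters controlling the size of the eliminated intermediates (at steady state $y_j = \mu_j(\kappa)\,s_j e$ and $u_j = \eta_j(\kappa)\,s_{j+1} f$), so the hypothesis that $\varphi(\kappa)=\varphi(\kappa^0)$ with these quantities small is exactly the hypothesis under which~\cite{DGPR} guarantees the lifting. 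The cited theorem then produces $\varepsilon_0>0$ so that for any such $\kappa$ the full network~\eqref{eq:nsite} has $m$ nondegenerate positive steady states in the compatibility class of $S_{tot},E_{tot},F_{tot}$.

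For the final assertion I would construct a $\kappa$ in the prescribed set explicitly. Fixing $\varphi(\kappa)=\varphi(\kappa^0)$ pins down $k_{\mathrm{on}_j},k_{\mathrm{off}_j},k_{\mathrm{cat}_j}$ for $j\in J$ (take them equal to those of $\kappa^0$), and pins down the numbers $\tau_j(\kappa^0)$ for $j\notin J$ and $\nu_j(\kappa^0)$ for all $j$. For each $j\notin J$ I would pick $k_{\mathrm{cat}_j}>0$ large enough that $\mu_j:=\tau_j(\kappa^0)/k_{\mathrm{cat}_j}<\varepsilon_0$, take any $k_{\mathrm{off}_j}>0$, and set $k_{\mathrm{on}_j}:=\mu_j\,(k_{\mathrm{off}_j}+k_{\mathrm{cat}_j})$; then $\mu_j(\kappa)=\mu_j$ and $\tau_j(\kappa)=\tau_j(\kappa^0)$. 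Symmetrically, for each $j$ I would pick $\ell_{\mathrm{cat}_j}>0$ large enough that $\eta_j:=\nu_j(\kappa^0)/\ell_{\mathrm{cat}_j}<\varepsilon_0$, take any $\ell_{\mathrm{off}_j}>0$, and set $\ell_{\mathrm{on}_j}:=\eta_j\,(\ell_{\mathrm{off}_j}+\ell_{\mathrm{cat}_j})$. This $\kappa$ satisfies $\varphi(\kappa)=\varphi(\kappa^0)$ and~\eqref{eq:mueta}; in fact these choices fill out an open subset, so the set is not only nonempty but of full dimension in its affine fibre.

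I expect the real work to be in the first two steps rather than the last: one must be precise about which species qualify as intermediates in the elimination of~\cite{fw13} (the $Y_j$, $j\notin J$, and the $U_i$ each occur only as the product of one reaction and the reactant of one reaction, so they do qualify), about the exact form of the reduced reactions and their rate constants, about the transformation of the conservation laws and hence of the stoichiometric compatibility classes, and about verifying that the regularity hypotheses of Theorem~6.4 of~\cite{DGPR} (nondegeneracy of the reference steady states, and the structural conditions on the intermediates) are met by this family. Once that dictionary is in place, the perturbation/lifting statement is a direct citation and the nonemptiness is the elementary computation above.
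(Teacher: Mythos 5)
Your proposal is correct and follows essentially the same route as the paper, which simply derives the theorem by citing Theorem~6.4 of~\cite{DGPR} (built on Theorem~4 of~\cite{fw13}) after setting up the reduction map $\varphi$. Your write-up is in fact more explicit than the paper's one-line justification, supplying the dictionary between the intermediate-elimination formulas and $\varphi$, the matching of conservation laws, and a concrete construction witnessing nonemptiness, all of which are consistent with the paper's setup.
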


This last result allows us to obtain multistationary regions for the complete $n$-site phosphorylation system, combining the
 conditions on the parameters given in Theorem~\ref{th:allintermediatesEside} and Theorem~\ref{th:J}, 
 with conditions \eqref{eq:mueta} of Theorem~\ref{th:lifting}.  In particular, let $J_n \subset I_n$ as in~\eqref{Jn}. By lifting a multistationarity region
 for the system $G_{J_n}$ in Corollary~\ref{cor:1/4}, we get a multistationarity region of parameters of the $n$-site phosphorylation
 cycle with $2 [\frac n 2]+1$ positive steady states in the same stoichiometric compatibility class.

\section{Positive solutions of sparse polynomial systems}\label{sec:background}

As we saw in Subsection~\ref{ssec:param}, the steady states of the systems $G_J$ correspond to the positive solutions
of the sparse polynomial system~\eqref{eq:systemG_J} in two variables. 
In this section, we briefly recall the general setting  from \cite{BDG1,bihan} to find lower bounds for
sparse polynomial systems, that we will use in the proof of
Theorems~\ref{th:allintermediatesEside} and \ref{th:J} in Section~\ref{sec:proofs} and also in Section~\ref{sec:R}.
For detailed examples of this approach, we refer the reader to Section 2 in~\cite{cascades}.

\medskip

We consider a fixed finite point configuration $\mathcal{A} =
  \{a_1, \dots, a_n\}\subset \Z^d$, with $n\geq d+2$. A sparse \textit{polynomial system} of $d$ Laurent 
  polynomial equations with \textit{support} $\mathcal{A}$ is a system 
 $f_{1}(x)=\dots=f_{d}(x)=0$ in $d$ variables $x=(x_1, \dots, x_d)$, with
 \begin{equation}\label{systemf}
 f_{i}(x)=\sum_{j=1}^n c_{ij} \, x^{a_j} \in \R[x_1,\dots,x_d], \ i=1,\dots,d,
 \end{equation}%
where the exponents belong to $\mathcal{A}$. We call $C = (c_{ij}) \in \R^{d \times n}$ the 
  \textit{coefficient matrix} of the system and we assume 
 that no column of $C$ is identically zero.   Recall that
a zero of~\eqref{systemf} is nondegenerate when it is not a zero of the Jacobian of $f_{1}, \dots, f_{d}$.
   
Our method to obtain a lower bound on the number of positive steady states, based on \cite{BDG1,bihan}, is to restrict our polynomial 
system \eqref{systemf} to subsystems which have a positive solution and then extend 
these solutions to the total system, via a deformation of the coefficients. 
The first step is then to find conditions in the coefficient matrix $C$ that guarantee a positive solution to each of the subsystems. 
The hypothesis of having subsystems supported on {\em simplices} of a {\em regular subdivision} of $\mathcal{A}$ is the key to
the existence of a on open set in the space of coefficients where all
these solutions can be extended. We recall below only the  concepts that we need for our work. 
A comprehensive treatment of this subject can be found in \cite{booktriang}. Following Section 3 in~\cite{bihan}, we define:

 \begin{definition} Given a matrix $M\in\R^{d\times (d+1)}$,  we denote
 by  $\minor(M,i)$ the determinant of the square matrix obtained by
  removing the $i$-th column of $M$. The matrix $M$ is called positively spanning 
 if all the values $(-1)^i\minor(M,i)$, for $i=1,\dots,d+1$, are nonzero and have the same sign.
 \end{definition}
 Equivalently, a matrix $M$ is positively spanning if all the coordinates of any nonzero vector in ${\rm ker}(M)$
are non-zero and have the same sign. 

A \textit{$d$-simplex} with vertices in $\mathcal{A}$ is a subset of $d+1$ points of $\mathcal{A}$ which is affinely independent. 
By Proposition~3.3 in \cite{bihan}, a system of $d$ polynomial equations in $d$ 
variables with a $d$-simplex as support, has one non-degenerate positive solution if and only if its $d\times (d+1)$ 
matrix of coefficients is positively spanning. We further define, following~\cite{bihan}: 

\begin{definition} \label{def:dec}
 Let $C\in\R^{d\times n}$. We say that a $d$-simplex 
 $\Delta=\{a_{i_1},\dots,a_{i_{d+1}}\}$ is positively decorated by $C$ if the $d\times(d+1)$ submatrix of $C$ with 
 columns $\{i_1,\dots,i_{d+1}\}$ is positively spanning.
 \end{definition}

Given a fixed finite point configuration $\mathcal{A}$, take a {\em height function} $h:\A \to \R$, $h=(h(a_1), \dots, h(a_n))$. Consider the \textit{lower
convex hull} $\mathcal L$ of the $n$ lifted
points $(a_j,h(a_j))\in \R^{d+1}, j=1, \dots, n$ (see Figure~\ref{fig:Regular triangulation}). Project to
 $\R^d$ the subsets of points  in each of the faces of $\mathcal L$ (that it, points where the affine linear function which defines the lower 
 face is minimized). 
 These subsets define a \emph{regular}
subdivision  of $\mathcal{A}$ induced by $h$. When the height 
vector $h$ is generic, the regular subdivision is a
regular triangulation, in which all the subsets are simplices. 

 \begin{center}
	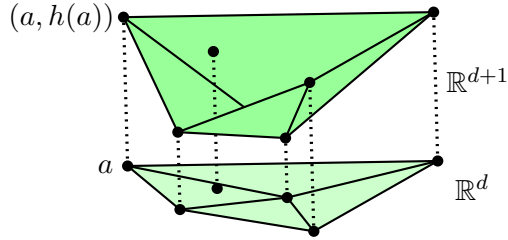
\begin{figure}[h]
				\centering
			\begin{tikzpicture}%
			[x={(1.295303cm, 0.021152cm)},
			y={(0.905733cm, 0.517961cm)},
			z={(-0.024093cm, 0.8770689cm)},
			scale=0.900000,
			back/.style={dotted,very thick},
			edge/.style={color=black, thick},
			facet/.style={fill=blue!95!black,fill opacity=0.800000},
			vertex/.style={inner sep=1.2pt,circle,draw=black,fill=black,thick,anchor=base}]
			%
			%
			\coordinate (0.00000, 0.00000, 0.00000) at (0.00000, 0.00000, 0.00000);
			\coordinate (3.00000, 0.00000, 0.00000) at (3.00000, 0.00000, 0.00000);
			\coordinate (3.00000, 2.00000, 0.00000) at (3.00000, 2.00000, 0.00000);
			\coordinate (0.00000, 2.00000, 0.00000) at (0.00000, 2.00000, 0.00000);
			\coordinate (0.00000, 0.00000, 2.50000) at (0.00000, 0.00000, 2.50000);
			\coordinate (3.00000, 0.00000, 2.50000) at (3.00000, 0.00000, 2.50000);
			\coordinate (3.00000, 2.00000, 2.50000) at (3.00000, 2.00000, 2.50000);
			\coordinate (0.00000, 2.00000, 2.50000) at (0.00000, 2.00000, 2.50000);
			\coordinate (2.00000, 1.00000, 0.00000) at (2.00000, 1.00000, 0.00000);
			\coordinate (1.00000, 0.70000, 0.00000) at (1.00000, 0.70000, 0.00000);
			\coordinate (1.00000, 1.30000, 0.00000) at (1.00000, 1.30000, 0.00000);
			\coordinate (2.00000, 1.00000, 1.00000) at (2.00000, 1.00000, 1.00000);
			\coordinate (1.00000, 0.70000, 1.30000) at (1.00000, 0.70000, 1.30000);
			\coordinate (1.00000, 1.30000, 2.30000) at (1.00000, 1.30000, 2.30000);
			
			

			\draw[edge,back](-0.500000, 2.00000, 0.00000) -- (-0.500000, 2.00000, 2.50000);
			\draw[edge,back](3.00000, 2.00000, 0.00000) -- (3.00000, 2.00000, 2.50000);
			\fill[facet, fill=green!40,fill opacity=1.000000] (-0.500000, 2.00000, 2.50000) -- (1.00000, 0.70000, 1.30000) -- (2.00000, 1.00000, 1.00000) -- cycle {};
			\fill[facet, fill=green!40,fill opacity=1.000000] (-0.500000, 2.00000, 2.50000) -- (3.00000, 2.00000, 2.50000) -- (2.00000, 1.00000, 1.00000) -- cycle {};
			\draw[edge] (-0.500000, 2.00000, 2.50000) -- (2.00000, 1.00000, 1.00000);
			\fill[facet, fill=green!20,fill opacity=1.000000] (-0.500000, 2.00000, 0.00000) -- (1.00000, 0.70000, 0.00000) -- (2.00000, 1.00000, 0.00000) -- cycle {};
			\fill[facet, fill=green!20,fill opacity=1.000000] (-0.500000, 2.00000, 0.00000) -- (3.00000, 2.00000, 0.00000) -- (2.00000, 1.00000, 0.00000) -- cycle {};
			\draw[edge,back](1.00000, 1.30000, 0.00000) -- (1.00000, 1.30000, 2.30000);
			\draw[edge] (1.00000, 0.70000, 1.30000) -- (-0.500000, 2.00000, 2.50000);
			\fill[facet, fill=green!40,fill opacity=1.000000] (2.00000, 1.00000, 1.00000) -- (3.00000, 2.00000, 2.50000) -- (3.00000, 0.00000, 2.50000) -- cycle {};
			\fill[facet, fill=green!40,fill opacity=1.000000] (3.00000, 0.00000, 2.50000) -- (1.00000, 0.70000, 1.30000) -- (2.00000, 1.00000, 1.00000) -- cycle {};

			\fill[facet, fill=green!20,fill opacity=1.000000] (2.00000, 1.00000, 0.00000) -- (3.00000, 2.00000, 0.00000) -- (3.00000, 0.00000, 0.00000) -- cycle {};
			\fill[facet, fill=green!20,fill opacity=1.000000] (3.00000, 0.00000, 0.00000) -- (1.00000, 0.70000, 0.00000) -- (2.00000, 1.00000, 0.00000) -- cycle {};
			\draw[edge] (3.00000, 0.00000, 2.50000) -- (1.00000, 0.70000, 1.30000);
			\draw[edge] (2.00000, 1.00000, 1.00000) -- (1.00000, 0.70000, 1.30000);
			\draw[edge] (3.00000, 0.00000, 2.50000) -- (2.00000, 1.00000, 1.00000);
			\draw[edge] (3.00000, 0.00000, 0.00000) -- (3.00000, 2.00000, 0.00000);
			\draw[edge] (-0.500000, 2.00000, 0.00000) -- (3.00000, 2.00000, 0.00000);
			\draw[edge] (3.00000, 0.00000, 0.00000) -- (2.00000, 1.00000, 0.00000);
			\draw[edge] (3.00000, 2.00000, 0.00000) -- (2.00000, 1.00000, 0.00000);
			\draw[edge] (-0.500000, 2.00000, 0.00000) -- (2.00000, 1.00000, 0.00000);
			\draw[edge] (1.00000, 0.70000, 0.00000) -- (2.00000, 1.00000, 0.00000);
			\draw[edge] (1.00000, 0.70000, 0.00000) -- (3.00000, 0.00000, 0.00000);
			\draw[edge] (1.00000, 0.70000, 0.00000) -- (-0.500000, 2.00000, 0.00000);
			\draw[edge,back] (3.00000, 0.00000, 2.50000) -- (3.00000, 0.00000, 0.00000);
			\draw[edge,back](2.00000, 1.00000, 0.00000) -- (2.00000, 1.00000, 1.00000);
			\draw[edge,back] (1.00000, 0.70000, 0.00000) -- (1.00000, 0.70000, 1.30000);
			\draw[edge] (3.00000, 0.00000, 2.50000) -- (3.00000, 2.00000, 2.50000);
			\draw[edge] (3.00000, 2.00000, 2.50000) -- (2.00000, 1.00000, 1.00000);
			\draw[edge] (3.00000, 2.00000, 2.50000) -- (-0.500000, 2.00000, 2.50000);
			\node at (-0.75,2,0) {$a$};
			\node[vertex] at (-0.500000, 2.00000, 0.00000)     {};
			\node[vertex] at (3.00000, 0.00000, 0.00000)     {};
			\node[vertex] at (3.00000, 2.00000, 0.00000)     {};
			\node at (4.8,0,0.7) {$\R^d$};
			\node at (4.9,0,2.5) {$\R^{d+1}$};
			\node[vertex] at (-0.500000, 2.00000, 2.50000)     {};
			\node[vertex] at (3.00000, 0.00000, 2.50000)     {};
			\node at (-1.2,2,2.5) {$(a,h(a))$};
			\node[vertex] at (3.00000, 2.00000, 2.50000)     {};
			\node[vertex] at (2.00000, 1.00000, 0.00000)   	 {};
			\node[vertex] at (1.00000, 1.30000, 0.00000) 	 {};
			\node[vertex] at (1.00000, 0.70000, 0.00000)	 {};
			\node[vertex] at (2.00000, 1.00000, 1.00000)   	 {};
			\node[vertex] at (1.00000, 0.70000, 1.30000) 	 {};
			\node[vertex] at (1.00000, 1.30000, 2.30000)	 {};

			\end{tikzpicture}
			\caption{Regular triangulation.} \label{fig:Regular triangulation}
			\vspace{\baselineskip}
	\end{figure}
\end{center}

Note that the set of all height vectors inducing 
a regular subdivision of $\mathcal{A}$ that contains certain $d$-simplices $\Delta_1, \dots, \Delta_p$  is defined by 
a finite number of linear inequalities. Thus, this set is a finitely generated convex cone ${\mathcal C}_{\Delta_1,\dots, \Delta_p}$ 
in $\R^n$ with apex at the origin. In particular, the set of all height vectors inducing a regular subdivision $\Gamma$ of $\mathcal{A}$ 
is a finitely generated convex cone in $\R^n$, which we call $\mathcal{C}_{\Gamma}$. In particular, if the simplices $\Delta_1, \dots, \Delta_p$ 
completely determine the regular subdivision $\Gamma$, 
the cone ${\mathcal C}_{\Delta_1,\dots, \Delta_p}$ is equal to the cone ${\mathcal C}_{\Gamma}$.

Given $\mathcal{A}\subset\Z^{d}$, consider a  system of sparse real polynomials $f_1 = f_2 = \dots = f_d=0$ as in~\eqref{systemf} with support $\mathcal A$. 
Let $\Gamma$ be a regular subdivision of $\mathcal A$ and $h$ any height function 
that induces $\Gamma$. We define the following family of real polynomial systems parametrized by a positive real
number $t>0$: 
 \begin{equation}\label{systemwitht}
 \sum_{j=1}^n c_{ij}\, t^{h(a_j)} \, x^{a_j} =0,  \; \ i=1,\dots,d.
 \end{equation}
We also consider the following family of polynomial systems parametrized
 by $\gamma \in \R_{>0}^n$:
 \begin{equation}\label{systemgamma}
 \sum_{j=1}^n c_{ij}\, \gamma_{j} \, x^{a_j}=0 , \;  \ i=1,\dots,d.
 \end{equation}
Note that each polynomial system in the families~\eqref{systemwitht} and~\eqref{systemgamma} has again support $\mathcal{A}$.

\begin{thm}[Theorem 3.4 of \cite{bihan} and Theorem 2.11 of \cite{BDG1}]\label{th:BS}
Consider  $\Delta_1, \dots, \Delta_p$  $d$-simplices which occur in a regular subdivision  $\Gamma$ of a finite configuration
$\mathcal A \subset \Z^d$, and which are positively decorated 
by a matrix $C  \in \R^{d \times n}$. 
\begin{enumerate}
\item Let $h$ be any height function that defines $\Gamma$.
Then, there exists a positive real number $t_0$ such that for all $0<t<t_0$, the number of nondegenerate positive
solutions of (\ref{systemwitht}) is at least $p$.
\item Let $m_1\dots,m_\ell \in \R^n$ be vectors that define a presentation of the cone ${\mathcal C}_{\Delta_1,\dots, \Delta_p}$ :
 \begin{equation*} \label{eq:cone}
 {\mathcal C}_{\Delta_1,\dots, \Delta_p} \, = \, \{h \in \R^n \, : \, \langle m_j, h \rangle  > 0 , \; j=1,\ldots,\ell\}.
 \end{equation*}
Then, for any $\varepsilon \in (0,1)^\ell$ there exists $t_0(\varepsilon) >0$ such that for any $\gamma$ in the open set
\[ U \, = \, \cup_{\varepsilon \in (0,1)^\ell} \,  \{ \gamma \in \R_{>0}^n \, ; \, \gamma^{m_j} < t_0(\varepsilon)^{\varepsilon_j}, \, j=1 \dots,\ell\},\]
 system~\eqref{systemgamma}
 has at least $p$ nondegenerate positive solutions.
\end{enumerate}
\end{thm}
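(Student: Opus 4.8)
The plan is to prove both parts by a Viro-type patchworking deformation, using as a base case Proposition~3.3 of~\cite{bihan} — a system whose support is a $d$-simplex has exactly one nondegenerate positive solution precisely when its coefficient matrix is positively spanning — and then propagating each of these local solutions to the full system via the implicit function theorem. For part~(1), I would fix a height function $h$ defining $\Gamma$ and, for each simplex $\Delta_k$, let $(v_k,c_k)\in\R^d\times\R$ be the affine function that supports the corresponding lower face of the lifted configuration, so that $e_{jk}(h):=h(a_j)-\langle v_k,a_j\rangle-c_k\ge 0$ for all $a_j\in\A$, with equality exactly when $a_j\in\Delta_k$. In~\eqref{systemwitht} I perform the monomial substitution $x_i=t^{-(v_k)_i}y_i$ and divide the $i$-th equation by $t^{c_k}$; this does not change the positive zeros and turns the system into $\sum_{j} c_{ij}\,t^{\,e_{jk}(h)}\,y^{a_j}=0$, which at $t=0$ is exactly the subsystem supported on $\Delta_k$. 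Since $\Delta_k$ is positively decorated that subsystem has a unique nondegenerate positive solution $y^{(k)}$, so the implicit function theorem (with $t$ a continuous parameter, or after the substitution $t=s^N$ to make the exponents integral) yields a branch $y^{(k)}(t)\to y^{(k)}$ of nondegenerate positive solutions for $0<t<t_0^{(k)}$; undoing the substitution gives a nondegenerate positive solution $x^{(k)}(t)$ of~\eqref{systemwitht}. As distinct maximal cells of $\Gamma$ have distinct normals, the scalings $\log x^{(k)}_i(t)/(-\log t)\to (v_k)_i$ separate these solutions for small $t$, and with $t_0:=\min_k t_0^{(k)}$ (shrunk once more) we get at least $p$ of them.

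For part~(2), given $\gamma$ in the $\varepsilon$-piece of $U$, I would set $t:=t_0(\varepsilon)\in(0,1)$ and define the height $h_\gamma$ on $\A$ by $h_\gamma(a_j):=\log_t\gamma_j$, so that $\gamma_j=t^{\,h_\gamma(a_j)}$ and~\eqref{systemgamma} is literally the $t$-instance of~\eqref{systemwitht} for this height. Applying $\log_t$ to the defining inequalities $\gamma^{m_j}<t^{\varepsilon_j}$ reverses them (since $t<1$) and gives $\langle m_j,h_\gamma\rangle>\varepsilon_j>0$ for every $j$, so $h_\gamma$ lies in ${\mathcal C}_{\Delta_1,\dots,\Delta_p}$ and the $\Delta_k$ are maximal cells of the subdivision it induces; thus part~(1) applies to $h_\gamma$ for all $t$ below some threshold $t_0(h_\gamma)$. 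The remaining task is to pick $t_0(\varepsilon)$ small enough that $t_0(\varepsilon)<t_0(h_\gamma)$ uniformly over the $\varepsilon$-piece. I would do this by estimating: for $0<t<1$ one has $t^{e_{jk}(h)}\le t^{\delta_k(h)}$ where $\delta_k(h):=\min_{a_j\notin\Delta_k}e_{jk}(h)$, so the implicit function theorem gives $t_0^{(k)}(h)\ge \varepsilon_0^{\,1/\delta_k(h)}$ for a fixed $\varepsilon_0\in(0,1)$ coming from the size of the neighborhood of $y^{(k)}$ and the inverse Jacobian there; and, since each $e_{jk}$ is a linear functional of $h$ that is nonnegative on the cone, it is a fixed nonnegative combination of the $\langle m_l,\cdot\rangle$, whence $\delta_k(h_\gamma)\ge c\,\min_l\varepsilon_l$ for a fixed $c>0$. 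Setting $t_0(\varepsilon):=\varepsilon_0^{\,1/(c\min_l\varepsilon_l)}$ (or anything smaller and $<1$) then makes part~(1) applicable throughout $U$. Equivalently, modulo affine functions on $\A$ the cone has a compact cross-section along which the part~(1) thresholds scale controllably, which is another way to organize the same uniformity.

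The local analysis in part~(1) — the monomial substitution, the degeneration at $t=0$ to the simplex subsystem, and the implicit function theorem — is standard patchworking and should be routine. The real obstacle is the uniformity needed for part~(2): part~(1) only produces, for each admissible height $h$, a threshold $t_0(h)$ with no a priori positive lower bound, whereas the open set $U$ must be described by one function $t_0(\varepsilon)$ valid for the whole unbounded family of heights $h_\gamma$. Controlling how $t_0(h)$ degenerates as a gap $\delta_k(h)$ shrinks, and turning the cone inequalities $\langle m_j,h\rangle>\varepsilon_j$ into a genuine lower bound on the $\delta_k(h)$, is where the argument really has to work.
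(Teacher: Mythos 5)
The paper does not actually prove this statement: it is imported verbatim from Theorem~3.4 of \cite{bihan} and Theorem~2.11 of \cite{BDG1}, so there is no internal proof to compare against. Your reconstruction follows the same route as those references. Part~(1) is the standard patchworking degeneration: monomial rescaling by the supporting affine function $(v_k,c_k)$ of each lifted cell, reduction at $t=0$ to the simplex subsystem, which has one nondegenerate positive solution because $\Delta_k$ is positively decorated (Proposition~3.3 of \cite{bihan}), followed by a quantitative implicit-function/Newton continuation; the caveat about non-integer exponents is handled correctly by treating $t$ as a continuous parameter. Part~(2) is obtained, as in \cite{BDG1}, by writing $\gamma_j=t^{h_\gamma(a_j)}$ and converting $\gamma^{m_j}<t^{\varepsilon_j}$ into $\langle m_j,h_\gamma\rangle>\varepsilon_j$. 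Your identification of the crux --- that the part~(1) threshold must be bounded below uniformly over an unbounded family of heights, which you achieve by expressing each gap functional $e_{jk}$ via Farkas duality as a nonnegative combination of the $\langle m_l,\cdot\rangle$, hence $\delta_k(h_\gamma)\ge c\min_l\varepsilon_l$, combined with the estimate $t_0^{(k)}(h)\ge\varepsilon_0^{1/\delta_k(h)}$ --- is exactly right and breaks the apparent circularity in the choice of $t_0(\varepsilon)$.

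One point your sketch leaves open: in part~(2) you show each simplex contributes a nondegenerate positive solution at $t=t_0(\varepsilon)$, but not that the $p$ solutions are pairwise distinct. The separation argument you give for part~(1) (convergence of $\log x^{(k)}_i(t)/(-\log t)$ to $(v_k)_i$) is asymptotic in $t$ for a \emph{fixed} height, whereas in part~(2) the value of $t$ is pinned at $t_0(\varepsilon)$ while $h_\gamma$ ranges over an unbounded set, so the distinctness also needs a uniform quantitative version. It is repairable with the machinery you already set up: a collision $x^{(k)}=x^{(k')}$ forces $\|v_k(h_\gamma)-v_{k'}(h_\gamma)\|$ to be at most of order $1/|\log t|$, while the affine function $\langle v_{k'}-v_k,\cdot\rangle+(c_{k'}-c_k)$ takes values $\ge c\min_l\varepsilon_l$ on $\Delta_{k'}\setminus\Delta_k$ and $\le -c\min_l\varepsilon_l$ on $\Delta_k\setminus\Delta_{k'}$, which bounds $\|v_k-v_{k'}\|$ below by a constant times $\min_l\varepsilon_l$; shrinking $t_0(\varepsilon)$ once more excludes collisions. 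With that addition the argument is complete and matches the cited proofs.
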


We remark that in the first item in Theorem~\ref{th:BS} (Theorem 3.4 in~\cite{bihan}), we describe a piece of curve in the space
of coefficients as we vary $t>0$ where the associated system has at least $p$ positive solutions, 
while in the second item in Theorem~\ref{th:BS} (Theorem 2.11 in~\cite{BDG1}), we describe a 
subset with nonempty interior in the space of coefficients where we can bound from below by $p$
the number of positive solutions of the associated system.

\section{Proofs of Theorems~\ref{th:allintermediatesEside}  and~\ref{th:J}}\label{sec:proofs}

We start this section with a lemma.

\begin{lemma}\label{lemma:Tisregular} Consider $\mathcal{A}=\{(1,0),(0,1),(1,1),(1,2),\dots,(1,n),(0,0)\}\subset \Z^2$. 
The triangulation $T$ of $\mathcal A$ with the following $2$-simplices:
\[\{ \{(1,j),(1,j+1),(0,0)\},j=0\dots n-1, \{(0,1),(1,n),(0,0)\} \}\] 
 is regular (see Figure~\ref{fig:TriangulationT}).
\begin{center}
\begin{figure}[h]
	\begin{tikzpicture}
	[scale=1.300000,
	back/.style={loosely dotted, thin},
	edge/.style={color=black, thick},
	facet/.style={fill=red!70!white,fill opacity=0.800000},
	vertex/.style={inner sep=1pt,circle,draw=black,fill=blue,thick,anchor=base
	}]
	%
	%
	\coordinate (0.00000, 0.00000) at (0.00000, 0.00000);
	\coordinate (0.00000, 1.00000) at (0.00000, 1.00000);
	\coordinate (1.00000, 0.00000) at (1.00000, 0.00000);
    \fill[facet, fill=blue!30!white,fill opacity=0.800000] (0,0) -- (1,0) -- (1,1)-- cycle {}; \fill[facet, fill=blue!30!white,fill opacity=0.800000] (0,0) -- (1,1) -- (1,2)-- cycle {}; \fill[facet, fill=blue!30!white,fill opacity=0.800000] (0,0) -- (1,3.1) -- (1,4.1)-- cycle {}; \fill[facet, fill=blue!30!white,fill opacity=0.800000] (0,0) -- (0,1) -- (1,4.1)-- cycle {};

%
	\draw[edge] (0,0) -- (0,1);
	\draw[edge] (0,0) -- (1,0);
	\draw[edge] (0,0) -- (1,1);
	\draw[edge] (0,0) -- (1,2);
	\draw[edge] (0,0) -- (1,3.1);
	\draw[edge] (0,0) -- (1,4.1);
	\draw[edge] (1,0) -- (1,2);
	\draw[edge,dotted] (1,2) -- (1,3.1);
	\draw[edge] (1,3.1) -- (1,4.1);
	\draw[edge] (0,1) -- (1,4.1);
	
	\node[vertex] at (0,1){};
	\node[vertex] at (1,1){};
	\node[vertex] at (1,0){};
	\node[vertex] at (0,0){};
	\node[vertex] at (1,2){};
	\node[vertex] at (1,3.1){};
	\node[vertex] at (1,4.1){};

	\end{tikzpicture}
	\caption{Triangulation $T$ of $\mathcal{A}$.} \label{fig:TriangulationT}
	\end{figure}
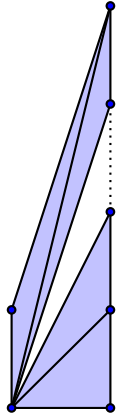
\end{center}
\end{lemma}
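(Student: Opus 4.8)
The plan is to prove regularity by exhibiting an explicit height function $h\colon\mathcal{A}\to\R$ and checking that the regular subdivision it induces is exactly $T$. First I would record the relevant geometry: $\conv(\mathcal{A})$ is the quadrilateral with vertices $(0,0)$, $(1,0)$, $(1,n)$ and $(0,1)$, the remaining points $(1,1),\dots,(1,n-1)$ all lying in the relative interior of the edge $[(1,0),(1,n)]$. Thus $T$ is precisely the ``fan'' triangulation that cones the vertex $(0,0)$ with the subdivision of that edge into the unit segments $[(1,j),(1,j+1)]$ and with the edge $[(1,n),(0,1)]$, and in particular the $n+1$ simplices of $T$ tile $\conv(\mathcal{A})$.

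Next I would take the height function $h(0,0)=0$, $h(1,j)=j^2$ for $j=0,1,\dots,n$, and $h(0,1)=2n$, and verify that the lift of each simplex of $T$ is a (maximal) face of the lower convex hull $\mathcal{L}$ of $\{(p,h(p)):p\in\mathcal{A}\}$. Concretely, for a $2$-simplex $\sigma\in T$ let $\ell_\sigma(x,y)=\alpha_\sigma x+\beta_\sigma y$ be the affine function --- with vanishing constant term, since $(0,0)\in\sigma$ --- that agrees with $h$ at the three vertices of $\sigma$. For $\sigma=\{(1,j),(1,j+1),(0,0)\}$ one gets $\beta_\sigma=2j+1$ and $\alpha_\sigma=-j^2-j$, so that $\ell_\sigma(1,k)-k^2=-(k-j)(k-j-1)<0$ for every integer $k\notin\{j,j+1\}$ (this is nothing but the strict convexity of $j\mapsto j^2$ along the line $x=1$) and $\ell_\sigma(0,1)=2j+1\le 2n-1<2n=h(0,1)$; for $\sigma=\{(1,n),(0,1),(0,0)\}$ one gets $\ell_\sigma(x,y)=-n^2x+2ny$, so that $\ell_\sigma(1,k)-k^2=-(k-n)^2<0$ for $k<n$ and $\ell_\sigma(0,0)=0=h(0,0)$. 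Hence in each case $\ell_\sigma\le h$ on $\mathcal{A}$ with equality exactly at the vertices of $\sigma$, i.e.\ the lift of $\sigma$ is a face of $\mathcal{L}$ supported by a functional meeting the lifted configuration precisely in that lifted triangle.

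Finally, since the simplices of $T$ tile $\conv(\mathcal{A})$ and each of them is realized in this way as a maximal lower face, the lower hull is exactly the union of these lifted triangles, so the regular subdivision of $\mathcal{A}$ induced by $h$ is $T$; therefore $T$ is regular. The only real work is the sign bookkeeping for the two families of inequalities, and the one genuinely delicate point is making $h(0,1)$ large enough: the value $2n$ strictly exceeds $\max_j\bigl(h(1,j+1)-h(1,j)\bigr)=2n-1$, which is exactly what forces $(0,1)$ to lie strictly above every plane supporting a fan triangle and rules out any spurious simplex such as $\{(1,0),(0,1),(0,0)\}$. (Alternatively, one may observe that $T$ is the pulling triangulation of the quadrilateral $\conv(\mathcal{A})$ at the vertex $(0,0)$ and invoke the general fact that pulling triangulations are regular; the height function above is then merely an explicit certificate of this.)
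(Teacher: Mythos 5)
Your proof is correct and follows essentially the same route as the paper's: the paper also exhibits an explicit height function (namely $h(0,0)=0$, $h(0,1)=n$, $h(1,j)=\tfrac{j(j-1)}{2}$) and asserts that it induces $T$, leaving the verification to the reader. Your choice $h(1,j)=j^2$, $h(0,1)=2n$ works just as well, and you supply the convexity/sign checks that the paper omits.
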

\begin{proof}

We can take $h\colon\mathcal{A}\to \R$, with $h(0,0)=0$, $h(0,1)=n$ and 
$h(1,j)=\frac{j(j-1)}{2}$, for $j=0,\dots,n-1$. It is easy to check $h$ defines a regular triangulation that is equal to $T$. 

\end{proof}

The idea in the proofs of Theorem~\ref{th:allintermediatesEside} and Theorem~\ref{th:J} is to detect positively
 decorated simplices in the regular triangulation $T$.

\begin{proof}[Proof of Theorem~\ref{th:allintermediatesEside}]
By Proposition~\ref{thm:upperboundGJ}, the number of positive solutions of the system $G_{I_n}$ is at most $2 [\frac{n}{2}]+1$. So, it is enough to prove that this number is also a lower bound.

The number of positive steady states of the system $G_{I_n}$ is the number of positive solutions of the system~\eqref{eq:systemG_J}. As we saw before, the support of this last system is 
{\small
\[\mathcal{A}=\{(1,0),(0,1),(1,1),(1,2),\dots,(1,n),(0,0)\}\subset \Z^2,\]}%
with coefficient matrix $C$~\eqref{eq:matrixC}. Note that if one multiplies a column of $C$ by a positive number, then a simplex is positively decorated by $C$ if and only if it is positively decorated by the new matrix. After multiplying the columns by convenient positive numbers, we obtain the following matrix from $C$:
\[Csimple=\begin{pmatrix}
1&0&M_0&\dots &M_{n-1} &-S_{tot}\\
0&1&1&\dots   &1   &-E_{tot}
\end{pmatrix},\]
where $M_i=\frac{k_{\rm{cat}_i}}{\nu_iF_{tot}}+1$, for each $i=0,\dots,n-1$. We will work with this new matrix $Csimple$.

We consider the regular triangulation $T$ in 
Lemma~\ref{lemma:Tisregular}. 
The simplex  $\{(1,0),(1,1),(0,0)\}$ of $T$ is positively decorated by $Csimple$ if and only if
$E_{tot}M_0-S_{tot}<0.$
The simplex $\{(1,j),(1,j+1),(0,0)\}$, for $j=1,\dots,n-1$, corresponds to the submatrix
\[Csimple_j=\begin{pmatrix}
M_{j-1}& M_{j}& -S_{tot}\\
1&1& -E_{tot}
\end{pmatrix},\]
and it is positively decorated by $Csimple$ if and only if $E_{tot}M_{j-1}-S_{tot}$ and $E_{tot}M_{j}-S_{tot}$ have opposite signs.
The last simplex $\{(0,1),(1,n),(0,0)\}$ is positively decorated by $Csimple$ if and only if $E_{tot}M_{n-1}-S_{tot}>0.$

Therefore we always have at least $n$ positively decorated simplices using all simplices of $T$  but the last one, just by imposing
{\small
\begin{equation}\label{ineq:proof}
(E_{tot}M_i-S_{tot})(-1)^i<0,\ \text{ for } i=0,\dots,n-1.
\end{equation}}%
We can include the last simplex if and only if $n$ is even (because otherwise the inequalities are not compatible), and in this case we have at least $n+1$ positively decorated simplices. We can obtain $2 [\frac{n}{2}]+1$ positively decorated simplices if the inequalities \eqref{ineq:proof} are satisfied. These inequalities are equivalent to the inequalities \eqref{ineq:allintermediates} in the statament.

Assume \eqref{ineq:allintermediates} holds. Given any height function $h$ inducing the triangulation $T$, by item (1) in Theorem~\ref{th:BS} there exists $t_0$ in $\R_{>0}$
such that for all $0<t<t_0$, the number of positive nondegenerate solutions of the deformed system as in~\eqref{systemwitht} with support $\mathcal{A}$ 
and coefficient matrix $C_t$, with $(C_t)_{ij}=t^{h(\alpha_j)}c_{ij}$ (with $\alpha_j\in\mathcal{A}$, $C=(c_{ij})$), is at least $2 [\frac{n}{2}]+1$. In particular if we choose $h$ as in the proof of Lemma~\ref{lemma:Tisregular}, there exists $t_0$ in $\R_{>0}$, 
such that for all $0<t<t_0$, the system
\begin{small}
\begin{eqnarray}
s_0 + \sum_{j=0}^{n-1} \left(\frac{T_{j}}{(F_{tot})^{j+1}} + \frac{K_j\, T_{j-1}}{(F_{tot})^j}\right)\,t^{\frac{j(j+1)}{2}} {s_0e^{j+1}} - S_{tot}=0,\\
t^n e + \sum_{j=0}^{n-1} \frac{K_j\, T_{j-1}}{(F_{tot})^j}\, t^{\frac{j(j+1)}{2}}{s_0e^{j+1}} - E_{tot}=0,\nonumber
\end{eqnarray}
\end{small}%
has at least $2 [\frac{n}{2}]+1$ positive solutions. If we change the variable $\bar{e}=t^n e$, 
we get the following system:
\begin{small}
\begin{eqnarray}\label{system:allintermediateswitht}
s_0 + \sum_{j=0}^{n-1} \left(\frac{T_{j}}{(F_{tot})^{j+1}} + \frac{K_j\, T_{j-1}}{(F_{tot})^j}\right)\,t^{(j+1)
(\frac{j}{2}-n)} {s_0\bar{e}^{j+1}}  - S_{tot}=0,\\
\bar{e} + \sum_{j=0}^{n-1} \frac{K_j\, T_{j-1}}{(F_{tot})^j}\, t^{(j+1)(\frac{j}{2}-n)}{s_0\bar{e}^{j+1}} - E_{tot}=0. \nonumber
\end{eqnarray}
\end{small}%

It is straightforward to check that if we scale the constants $K_j$ by
{\small
\begin{equation}\label{eq:rescaling1} t^{j-n}K_j, \quad j=0,\dots,n-1, 
\end{equation}}%
while keeping fixed the values of the constants $k_{\rm{cat}_j}$, $\nu_j$ for $j=0,\dots,n-1$ and the values of the linear conservation
constants $E_{tot}$, $F_{tot}$ and $S_{tot}$ (assuming that condition~\eqref{ineq:allintermediates} holds), the intersection of the 
steady state variety and the linear conservation equations of the corresponding network is described by system~\eqref{system:allintermediateswitht}. 
It is easy to check that in order to get the scaling in~\eqref{eq:rescaling1}, it is sufficient to rescale only the original constants $k_{\rm{on}_j}$ as follows: 
$t^{j-n}k_{\rm{on}_j}$, for $j=0,\dots,n-1$. Then, for these choices of constants, the system has at least $2 [\frac{n}{2}]+1$ positive steady states.

Now, we will show how to obtain the more general rescaling in the statement. 
The existence of the positive constants $B_1, \dots, B_n$ follows from the inequalities that define the
cone $\mathcal{C}_T$ of heights inducing the regular triangulation $T$ and item (2) in Theorem~\ref{th:BS}. For instance, 
we can check that $\mathcal{C}_T$ is defined by $n$ inequalities:
\begin{equation*}
\mathcal{C}_T=\{h=(h_1,\dots,h_{n+3})\in \R^{n+3}\, : \, \langle m_j, h \rangle > 0,\, j=1,\dots,n\},
\end{equation*} 
where $\langle ,\! \rangle$ denotes the canonical inner product of $\R^{n+3}$ and $m_1=e_1-2e_3+e_4$, $m_j=e_{j+1} - 2e_{j+2} + e_{j+3}$, for $j=2,\dots,n-1$ and $m_n=e_2+e_{n+1}-e_{n+2}-e_{n+3}$, where $e_i$ denotes the $i$-th canonical vector of $\R^{n+3}$. 
Fix $\varepsilon\in (0,1)^{n+3}$. As \eqref{ineq:allintermediates} holds, item (2) in Theorem~\ref{th:BS} says that there exist positive numbers
$B_j$  for $j=1,\dots,n$ (depending on $\varepsilon$), such that the system
\begin{small}
\begin{eqnarray}
\gamma_1 s_0 + \sum_{j=0}^{n-1} \left(\frac{T_{j}}{(F_{tot})^{j+1}} + \frac{K_j\, T_{j-1}}{(F_{tot})^j}\right)\,\gamma_{j+3} {s_0e^{j+1}} - \gamma_{n+3} S_{tot}=0,\\
\gamma_2 e + \sum_{j=0}^{n-1} \frac{K_j\, T_{j-1}}{(F_{tot})^j}\, \gamma_{j+3}{s_0e^{j+1}} - \gamma_{n+3}E_{tot}=0,\nonumber
\end{eqnarray}
\end{small}%
has at least $2 [\frac{n}{2}]+1$ nondegenerate positive solutions, for any vector 
$\gamma \in \R^{n+3}$ satisfying $\gamma^{m_j}<B_j$, for all $j=1,\dots,n$. In particular,
 this holds if we take $\gamma_1 = \gamma_2 = \gamma_{n+3}=1$ and
\begin{equation}\label{eq:gamma-allintermediates}
\gamma_3^{-2}\gamma_4<B_1,\quad \gamma_{j+1}\gamma_{j+2}^{-2}\gamma_{j+3}<B_{j}, 
\text{ for } j=2,\dots,n-1, \quad \gamma_{n+1}\gamma_{n+2}^{-1}<B_n.
\end{equation}
If we call $\lambda_0=\gamma_3$ and $\lambda_{j}=\frac{\gamma_{j+3}}{\gamma_{j+2}}$ for $j=1,\dots,n-1$, 
the inequalities in \eqref{eq:gamma-allintermediates} are equivalent to the conditions \eqref{ineq:gammaallintermediates2}. 
Then, if $\lambda_j$, $j=0,\dots,n-1$, satisfy these inequalities, the rescaling of the given  parameters $k_{\rm{on}_j}$ 
by $\lambda_j k_{\rm{on}_j}$ for $j=0,\dots,n-1$, gives rise to a system with exactly $2 [\frac {n} {2}]+1$
positive steady states.
\end{proof}

The proof of Theorem~\ref{th:J} is similar to the previous one.
\begin{proof}[Proof of Theorem~\ref{th:J}] 
Again, the number of positive steady states of our system is equal to the number of positive solutions of the system~\eqref{eq:systemG_J}. Recall that the support of the system is  
{\small \[\mathcal{A}=\{(1,0),(0,1),(1,1),(1,2),\dots,(1,n),(0,0)\}\subset \Z^2.\]}%
In this case, the coefficient matrix $C$~\eqref{eq:matrixC} is equal, after multiplying the columns by convenient positive numbers, to the matrix
{\small
\[Csimple=\begin{pmatrix}
1&0&M_0&\dots &M_{n-1} &-S_{tot}\\
0&1&D_0&\dots   &D_{n-1}   &-E_{tot}
\end{pmatrix},\]}%
where $M_i=\frac{k_{\rm{cat}_i}}{\nu_iF_{tot}}+1$ and $D_i=1$, for each $i\in J$, and $M_i=1$ and $D_i=0$, for each $i\notin J$.

We consider again the regular triangulation $T$ in Lemma~\ref{lemma:Tisregular}. 
Recall that $J\subset J_n,$ see (\ref{Jn}), and therefore each $j\in J$ has the same parity as $n,$ in particular $0\leq j\leq n-2.$ 
For each $j\in J$, consider the  simplices $\Delta_j=\{(1,j),(1,j+1),(0,0)\}$ and 
$\Delta_{j+1}=\{(1,j+1),(1,j+2),(0,0)\}$. Note that if $j\neq j'$ then $\{\Delta_j,\Delta_{j+1}\}$ and $\{\Delta_{j'},\Delta_{j'+1}\}$  are disjoint since $j,j'$ and $n$ have the same parity. 

The simplices are positively decorated by $Csimple$ (and then by $C$) if and only if the submatrices 
{\small
\[Csimple_j=\begin{pmatrix}
1& M_{j}& -S_{tot}\\
0&1& -E_{tot}
\end{pmatrix},\quad Csimple_{j+1}=\begin{pmatrix}
M_{j}& 1& -S_{tot}\\
1&0& -E_{tot}
\end{pmatrix},\]}%
are positively spanning, and this happens if and only if $E_{tot}M_j-S_{tot}<0$, where $M_j=\frac{k_{\rm{cat}_j}}{\nu_jF_{tot}}+1$, since $j\in J$. 
The simplex $\Delta_{n}=\{(0,1),(1,n),(0,0)\}$ 
is trivially positively decorated by $Csimple$. Then, by imposing the inequalities $E_{tot}M_j-S_{tot}<0$ for $j\in J$, which are equivalent to the ones in the statement~\eqref{ineq:thJ}, we can obtain $2|J|+1$ positively decorated simplices.

Assume \eqref{ineq:thJ} holds. Given any height function $h$ inducing the triangulation $T$, by item (1) in Theorem~\ref{th:BS} there exists $t_0$ in $\R_{>0}$, 
such that for all $0<t<t_0$, the number of positive nondegenerate solutions of the deformed system with support $\mathcal{A}$
 and coefficient matrix $C_t$, with $(C_t)_{ij}=t^{h(\alpha_j)}c_{ij}$ (with $\alpha_j\in\mathcal{A}$, $C=(c_{ij})$) is at least $2|J|+1$. In particular if we choose $h$ as in the proof of Lemma~\ref{lemma:Tisregular}, there exists $t_0$ in $\R_{>0}$,
  such that for all $0<t<t_0$, the system
\begin{footnotesize}
\begin{eqnarray}
s_0 + \sum_{j\in J} \left(\frac{T_{j}}{(F_{tot})^{j+1}} + \frac{K_j\, T_{j-1}}{(F_{tot})^j}\right)\,t^{\frac{j(j+1)}{2}} {s_0e^{j+1}} 
+ \sum_{j\notin J} \frac{T_{j}}{(F_{tot})^{j+1}} \, t^{\frac{j(j+1)}{2}} {s_0e^{j+1}} - S_{tot}=0,\\
t^n e + \sum_{j\in J} \frac{K_j\, T_{j-1}}{(F_{tot})^j}\, t^{\frac{j(j+1)}{2}}{s_0e^{j+1}} - E_{tot}=0,\nonumber
\end{eqnarray}
\end{footnotesize}%
has at least $2|J|+1$ positive solutions. If we change the variable $\bar{e}=t^n e$, we get the following system:
\begin{footnotesize}
\begin{eqnarray}\label{system:Jintermediateswitht}
s_0 + \sum_{j\in J} \left(\frac{T_{j}}{(F_{tot})^{j+1}} + \frac{K_j\, T_{j-1}}
{(F_{tot})^j}\right)\,t^{(j+1)(\frac{j}{2}-n)} {s_0\bar{e}^{j+1}} + \sum_{j\notin J} 
\frac{T_{j}}{(F_{tot})^{j+1}} \, t^{(j+1)(\frac{j}{2}-n)} {s_0\bar{e}^{j+1}} - S_{tot}=0,\nonumber\\
\bar{e} + \sum_{j\in J} \frac{K_j\, T_{j-1}}{(F_{tot})^j}\, t^{(j+1)(\frac{j}{2}-n)}{s_0\bar{e}^{j+1}} - E_{tot}=0.
\end{eqnarray}
\end{footnotesize}%
Similarly as we did in the previous proof, if we scale the original parameters $k_{\rm{on}_j}$, for $j\in J$, and $\tau_j \text{ if } j\notin J$ by 
{\small
\begin{equation}\label{eq:scalingJ}
t^{j-n}k_{\rm{on}_j} \text{ if } j\in J, \quad t^{j-n}\tau_j \text{ if } j\notin J,  
\end{equation}}%
respectively, and if
we keep fixed the values of the remaining rate constants and the 
values of the linear conservation constants $E_{tot}$, $F_{tot}$ and $S_{tot}$, the intersection of the steady state variety 
and the linear conservation relations is described by system \eqref{system:Jintermediateswitht}. 
Then, for these choices of constants the system $G_J$ has at least $2|J|+1$ positive steady states.
The general rescaling that appears in the statement can be obtained in a similar way as we did in the proof of Theorem~\ref{th:allintermediatesEside}.
\end{proof}

\section{Computer aided results} \label{sec:R}

In this section we explore a computational approach to the multistationarity problem, more precisely we find new 
regions of mulstistationarity. The idea is to find good regular triangulations of the point configuration corresponding 
to the exponents of the polynomial system given 
by the steady state equations and conservation laws,
and deduce from them some regions of multistationarity. We first give the idea and then apply 
it for the $n$-site phosphorylation system for $n=2,3,4,$ and $5$, where we have successfully
found several regions of multistationarity. This approach can be, in principle, applied to other systems 
if they satisfy certain hypotheses, see \cite[Theorem 5.4]{BDG1}, and are sufficiently small in order for the computations
 to be done in a reasonable amount of time. 

The strategy is the following. Given a polynomial system with support $\mathcal{A}$ and matrix of coefficients $C$, 
one first computes all possible regular triangulations of $\mathcal{A}$ with the aid of a computer. 
The number of such triangulations can be very large depending on the size of $\mathcal{A},$
 thus the next step is to discard in each such triangulation the simplices that obviously will not be positively decorated by $C.$ 
With the reduced number of triangulations one can now search through all of them for the ones 
giving the biggest number of simultaneously positively decorated simplices. Each set of $k$ simultaneously positively 
decorated simplices gives a candidate for a region of 
multistationarity with $k$ positive steady states. If one finds $m$ of such sets, then it is possible to have up to $m$ such regions.
Have in mind, however, that among these regions can be repetitions.

Next we apply this to the $n$-site phosphorylation system with all intermediates and explain more concretely this procedure in this case.

In Corollary \ref{cor:1/4} we obtained regions of multistationarity with $2[\frac n 2] +1$ positive steady states each using only $1/4$ of the intermediates, 
our objective now is to understand if it is possible to find more such regions with more intermediates. 
Consider the network $G$ of the $n$-site phosphorylation system with all possible intermediates:
 \begin{small}
 \[   
 \begin{array}{rl} 
 \nonumber 
 S_0+E &
 \arrowschem{k_{\rm{on}_0}}{k_{\rm{off}_0}} Y_0
 \stackrel{k_{\rm{cat}_0}}{\rightarrow} S_1+E\ \cdots\ {\rightarrow}S_{n-1}+E 
 \arrowschem{k_{\rm{on}_{n-1}}}{k_{\rm{off}_{n-1}}}Y_{n-1}
 \stackrel{k_{\rm{cat}_{n-1}}}{\rightarrow} S_n+ E \\ 
 S_n+F &
 \arrowschem{\ell_{\rm{on}_{n-1}}}{\ell_{\rm{off}_{n-1}}} U_{n-1}
 \stackrel{\ell_{\rm{cat}_{n-1}}}{\rightarrow} S_{n-1}+F \ \cdots\ {\rightarrow} S_{1}+F
 \arrowschem{\ell_{\rm{on}_0}}{\ell_{\rm{off}_0}} U_0
 \stackrel{\ell_{\rm{cat}_0}}{\rightarrow} S_0+ F
 \end{array}
 \]
\end{small}%
In Section 4 of \cite{BDG1}, the concentration at steady state of all species are given in terms of the species $s_0, e, f$:
\[ \begin{array}{lcrl}
 \label{parametrizationphospo}
 s_i &=&T_{i-1} \, \frac{s_0e^i}{f^i},& i=1,\dots,n, \nonumber \\
 y_i &=& K_i \, T_{i-1} \, \frac{s_0e^{i+1}}{f^i},& i=0,\dots,n-1, \\
 u_i &=& L_i \, T_i \, \frac{s_0e^{i+1}}{f^i}, &  i=0,\dots,n-1, \nonumber
 \end{array}\]
where $K_i=\frac{k_{\rm{on}_i}}{k_{\rm{off}_i}+k_{\rm{cat}_i}}, L_i=
\frac{\ell_{\rm{on}_i}}{\ell_{\rm{off}_i}+\ell_{\rm{cat}_i}}, T_i=\prod^i_{j=0}\frac{\tau_j}{\nu_j}$ 
 for each $i=0,\dots,n-1$ (recall that $K_i^{-1}$ and $L_i^{-1}$ are usually called \textit{Michaelis-Menten constants})
  and $T_{-1}=1$, where $\tau_i=k_{\rm{cat}_i}K_i$ and $\nu_i=\ell_{\rm{cat}_i}L_i$, for each $i=0,\dots, n-1$.
 
This looks very similar to \eqref{parametrizationGJ}, where $F_{tot}$ is replaced by $f$, which is now a variable, and
so we need to work in dimension $3$. The main difference is that the networks $G_J$ considered in the previous sections have intermediates only in the 
$E$ component and the network $G$ we consider in this section has all intermediates.

Note from \eqref{eq:consfosfo} that the support $\mathcal{A}$ of this system, which has $2n+4$ elements, ordering 
the variables as $s_0,e,f,$ is given by the columns of the following matrix
\[A=
\begin{small}
\left(\begin{array}{cccrrrcrccc}
1&0&0&   1& \dots &1&    1&1&\dots &1&0\\
0&1&0&   1& \dots &n&    1&2&\dots &n&0\\
0&0&1&  -1& \dots &-n&   0&-1&\dots &1-n&0
\end{array} \right)\end{small},\]%
and the corresponding matrix of coefficients for the system is
\[
C=
\begin{footnotesize}
\begin{pmatrix}
1&0&0&   T_0&\dots &T_{n-1}&
    K_0+L_0T_0&K_1T_0+L_1T_1&\dots 
    &K_{n-1}T_{n-2}+L_{n-1}T_{n-1}&-S_{tot}\\
0&1&0&   0&\dots &0& 
   K_0&K_1T_0&\dots &K_{n-1}T_{n-2}&-E_{tot}\\
0&0&1&   0&\dots &0&
   L_0T_0&L_1T_1&\dots &L_{n-1}T_{n-1}&-F_{tot}
\end{pmatrix}
\end{footnotesize}.
\]

Recall that if one multiplies a column of a matrix $C$ by a positive number, then a simplex is positively decorated 
by $C$ if and only if it is positively decorated by the modified matrix. So, in order to test whether a simplex with vertices in 
$\mathcal{A}$ is positively decorated by $C$ is enough to test if it is positively decorated by the following matrix
\[
Csimple=
\begin{small}\begin{pmatrix}
1&0&0&   1&\dots &1&
    1&1&\dots 
    &1&-S_{tot}\\
0&1&0&   0& \dots &0& 
   N_0&N_1&\dots &N_{n-1}&-E_{tot}\\
0&0&1&   0&\dots &0&
   1-N_0&1-N_1&\dots &1-N_{n-1}&-F_{tot}
\end{pmatrix},\end{small}
\]%
where $0<N_i=
\dfrac{K_iT_{i-1}}{K_iT_{i-1}+L_iT_{i}}=\left( 1+\dfrac{k_{\rm{cat}_i}}{l_{\rm{cat}_i}}\right)^{-1}<1$ for $i=0,1,\dots,n-1.$ 
Here the matrix $Csimple$ is obtained by dividing the fourth until the last column by its first entry.

Now we compute all possible regular triangulations of $\mathcal{A}$ and search through them looking for the ones 
with the maximal possible number of simplices simultaneously positively decorated by $Csimple.$ Since the number 
of such triangulations grows very fast with $n$ we approach it with the following strategy:

\begin{algorithm}\label{algo}
\begin{enumerate}
\item Compute $L_1:=\{\mbox{all possible triangulations of } \mathcal{A}\}.$\footnote{We are calling $L_1, \dots, L_7$ the sets defined in Algorithm~\ref{algo}. They are completely unrelated to the rational functions of the rate constants denoted with the same letters.}  
\item With $L_1$ compute $L_2$ by discarding
all simplices which do not have the last vertex $(0,0,0).$
In fact we only need these simplices since a simplex not containing the last vertex cannot be positively decorated, 
because the corresponding coefficients of $Csimple$ will be all positive. 
\item Compute $L_3$ from $L_2$ by removing all simplices with the corresponding $3\times 4$ submatrix of $Csimple$ 
having a zero $3\times 3$ minor. The reason for this is clear, such simplices will never be positively decorated by $Csimple.$ 
\item Compute $L_4$ from $L_3$ using the symmetry of $Csimple.$ More precisely, change any index $4,5,\dots,n+3$ on
 the simplex to $1$ because on $Csimple$ they yield the same column. Here we are using the easy-to-check fact that changing 
 the order of indexes does not change the conditions for a simplex to be positively decorated.
\item Compute $L_5$ from $L_4$ removing all $T\in L_4$ such that there is another $T'\in L_4$ with $T\subset T'.$ 
\item For each $T\in L_5,$ check for each set $S\subset T$ of simplices if there is viable $N_0,\dots,N_{n-1}$ such that all 
$\Delta\in S$ is positively decorated by $Csimple$ at the same time, call $L_6$ the list of such $S'$s.
\item If the maximum size of a element in $L_6$ is $k,$ set $L_7:=\{T\in L_6\, : \, \#T=k\}.$ 
This $k$ is the number of positive steady states and $m:=\#L_7$ is the number of candidates for regions of multistationarity.
\end{enumerate}
\end{algorithm}

Step (1) can be done with the package TOPCOM inside SAGE \cite{sage}, the other steps are quite simple to implement, 
for instance in MAPLE \cite{maple}. We show in the table below the number of elements in some of the lists and 
an approximation of the computation time for small values of $n.$
\begin{small}
\begin{table}[h!]
\begin{tabular}{l|c|c|c|c|c|c|c|c|c}
$n$ & $\# L_1$ & $\# L_2$ & $\# L_3$ & $\# L_4$ & $\# L_5$ &  $\# L_7$ &$k$ &
\makecell{regions of\\ multistationarity}&
\makecell{computation time} \\ \hline
$2$ &   44     &  25      &    15    &       7 &   6     &  1& 3& 1 &  negligible   \\ \hline
$3$ &   649     &     260   &  100      &       21 &      18  & 6 & 3& 6 & about 1 sec       \\ \hline
$4$ &    9094    &    2728    &   682     &       62 &   53     &  5 & 5 &4 & about 2 min      \\ \hline
$5$ &  122835      & 28044       &   4560     &     177   &      149  &   23 &   5& 15  & about 3 hours  
\end{tabular}
\end{table}
\end{small}

The most computationally expensive part is to compute all regular triangulations, taking at least 90\% of the time. 
These computations were done in a Linux virtual machine with 4MB of RAM and with 4 cores of 3.2GHz of processing. 
With a faster computer or more time one probably can do $n=6$ or even $n=7$ but probably not much more than this. 
For $n=5$ just the file for the raw list $L_1$ of regular triangulations already has $10$Mb.

An alternative path to Steps (6) and (7) is to set a number $k$ and look for sets $T\in L_5$ and $S\subset T$ with $\# S\geq k$ such that there is viable $N_0,\dots, N_{n-1}$ such that all $\Delta\in S$ are positively decorated by $Csimple$ at the same time. We actually used this with $k=2[\frac{n}{2}]+1.$ This other route depends upon a good guess one may previously have at how many positive steady states to expect.

After Step (7) one has to determine if there are 
any repetitions among the candidates 
for regions of multistationarity in $L_7$ and also if there are
any superfluous candidates of regions, that is conditions $C_1$ and $C_2$ such that $C_1$ implies $C_2.$ In our case we did it by hand since the $\# L_7$ was quite small.

Once Step (7) is done, one has a list of inequalities for each element $S$ of $L_7.$ These come from the conditions imposing that 
the simplices in $S$ are positively decorated by $Csimple.$
We are going to use these conditions to describe the regions of mulstistationarity. 
Because of the uniformity of $Csimple$ the only kind of conditions that appear are
\begin{small}
\begin{enumerate}[label = {$\rm{(\Roman*)}_i$}]
\item[$\rm{(I)}_{i,j}$] $N_i-N_j>0$
\setcounter{enumi}{1}
\item $S_{tot}N_i-E_{tot}>0 $
\item $E_{tot}N_i+F_{tot}N_i-E_{tot}>0 $
\item $-S_{tot}N_i-F_{tot}+S_{tot}>0 $
\item[$\rm{(V)}$]  $S_{tot}>E_{tot}+F_{tot}$,
\end{enumerate}
\end{small}%
or the opposite inequalities, and these translate from the $N_i$ to the $k_{\rm{cat}_i},\ell_{\rm{cat}_i}$ as follows
{\small\begin{enumerate}[label = {$\rm{(\Roman*)}_i'$}]
\item[$\rm{(I)}_{i,j}'$] $\dfrac{k_{\rm{cat}_j}}{\ell_{\rm{cat}_j}}>\dfrac{k_{\rm{cat}_i}}{\ell_{\rm{cat}_i}}$
\setcounter{enumi}{1}
\item $\dfrac{S_{tot}-E_{tot}}{E_{tot}}>\dfrac{k_{\rm{cat}_i}}{\ell_{\rm{cat}_i}}$
\item $\dfrac{F_{tot}}{E_{tot}}>\dfrac{k_{\rm{cat}_i}}{\ell_{\rm{cat}_i}} $
\item $\dfrac{F_{tot}}{S_{tot}-F_{tot}}<\dfrac{k_{\rm{cat}_i}}{\ell_{\rm{cat}_i}} .$
\end{enumerate}}%
Note that 
\begin{itemize}
\item conditions $\rm{(III)}_i$ and $\rm{(V)}$ together imply $\rm{(II)}_i;$
\item the opposite of condition $\rm{(II)}_i$ together with condition $\rm{(V)}$ imply the opposite of $\rm{(III)}_i$;
\item the opposite of condition $\rm{(III)}_i$ together with condition $\rm{(V)}$  imply $\rm{(IV)}_i;$ 
\item the opposite of condition $\rm{(IV)}_i$ together with condition $\rm{(V)}$  imply  $\rm{(III)}_i;$
\item condition $\rm{(III)}_i$ and the opposite of $\rm{(III)}_j$ together imply  $\rm{(I)}_{i,j}.$
\end{itemize}

Using these properties it is easy to describe in a nice manner the regions of multistationarity and discard the repeated and superfluous ones. 
We sum up our findings on the following results which are proved in the same fashion as Theorems~\ref{th:allintermediatesEside} and~\ref{th:J}, once you have 
the regular triangulation obtained with the computer script. In the following propositions we describe the regions of multistationarity for $n=2,3,4$ and $5.$

\begin{prop}\label{prop:maple2}
Let $n=2$. Assume that $S_{tot}>E_{tot}+F_{tot}.$ Then there is a choice of reaction rate constants for 
which the distributive sequential $2$-site phosphorylation system admits $3$ positive steady states. 
More explicitly, given rate constants and total concentrations such that 
{\small
\[\dfrac{k_{\rm{cat}_0}}{\ell_{\rm{cat}_0}}<\dfrac{F_{tot}}{E_{tot}}<
\dfrac{k_{\rm{cat}_1}}{\ell_{\rm{cat}_1}},\]}%
after rescaling of the $k_{\rm on}$'s and $\ell_{\rm on}$'s the distributive sequential $2$-site phosphorylation system has $3$ positive steady states.
\end{prop}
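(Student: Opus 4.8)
The plan is to mimic the proofs of Theorems~\ref{th:allintermediatesEside} and~\ref{th:J}, now for the full $2$-site network, whose steady states are parametrized as in Section~\ref{sec:R}: the positive steady states are in bijection with the positive solutions of the sparse polynomial system in the three variables $s_0,e,f$ with support $\mathcal A\subset\Z^3$ (the columns of the matrix $A$) and coefficient matrix $C$. Since multiplying a column of $C$ by a positive scalar does not change which simplices are positively decorated, I would work throughout with the reduced matrix $Csimple$, whose only free entries are $N_0,N_1\in(0,1)$ and the totals $S_{tot},E_{tot},F_{tot}>0$, and look for a regular triangulation of $\mathcal A$ carrying three simultaneously positively decorated $3$-simplices.

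The first real step is to run Algorithm~\ref{algo} for $n=2$ (its output is recorded in the table of Section~\ref{sec:R}, where $\#L_7=1$), which produces a regular triangulation $T$ of the eight-point configuration $\mathcal A$ together with a set $S=\{\Delta_1,\Delta_2,\Delta_3\}$ of three of its $3$-simplices, each containing the vertex $(0,0,0)$. I would then verify by a direct kernel computation that each $\Delta_k$ is positively decorated by $Csimple$ under the expected sign conditions: the simplex on the columns indexed by $s_0,e,f$ and the origin is positively decorated for every parameter value (its kernel is $(S_{tot},E_{tot},F_{tot},1)$), while the two remaining simplices, which involve the columns of the intermediates $Y_0$ and $Y_1$, are positively decorated exactly when $\mathrm{(II)}_{0}$ and $\mathrm{(III)}_{0}$ hold, respectively when $\mathrm{(IV)}_{1}$ and the opposite of $\mathrm{(III)}_{1}$ hold. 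Using the implications collected in Section~\ref{sec:R} — that $\mathrm{(III)}_0$ and $\mathrm{(V)}$ imply $\mathrm{(II)}_0$, and that the opposite of $\mathrm{(III)}_1$ together with $\mathrm{(V)}$ imply $\mathrm{(IV)}_1$ — the three simplices are simultaneously positively decorated as soon as $\mathrm{(V)}$, $\mathrm{(III)}_0$ and the opposite of $\mathrm{(III)}_1$ hold. Translating these through $N_i=(1+k_{\mathrm{cat}_i}/\ell_{\mathrm{cat}_i})^{-1}$ and $\mathrm{(III)}_i'$ gives precisely $S_{tot}>E_{tot}+F_{tot}$ and $k_{\mathrm{cat}_0}/\ell_{\mathrm{cat}_0}<F_{tot}/E_{tot}<k_{\mathrm{cat}_1}/\ell_{\mathrm{cat}_1}$, the inequalities in the statement; I would also record that they are consistent, since one may then freely choose $N_0\in(E_{tot}/(E_{tot}+F_{tot}),1)$ and $N_1\in(0,E_{tot}/(E_{tot}+F_{tot}))$.

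With the three positively decorated simplices of the regular triangulation $T$ in hand, Theorem~\ref{th:BS}(1), applied to any height function inducing $T$, yields a $t_0>0$ such that for every $t\in(0,t_0)$ the $t$-deformed system $\sum_j c_{ij}\,t^{h(a_j)}x^{a_j}=0$ has at least three nondegenerate positive solutions (Theorem~\ref{th:BS}(2) would instead produce an open region of parameters). Finally, exactly as in the proof of Theorem~\ref{th:allintermediatesEside}, I would absorb part of the exponents $t^{h(a_j)}$ by a monomial substitution of the form $\bar e=t^{a}e$, $\bar f=t^{b}f$, and then identify the remaining $t$-powers on the coefficients with the effect of rescaling only the on-rates $k_{\mathrm{on}_j}$ and $\ell_{\mathrm{on}_j}$: these rescale $K_j,L_j$, hence $\tau_j,\nu_j$ and the $T_j$, while leaving $k_{\mathrm{cat}_j},\ell_{\mathrm{cat}_j}$, the $N_j$, and $S_{tot},E_{tot},F_{tot}$ untouched. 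Combined with the classical fact that the $2$-site system has at most three positive steady states~\cite{Markevich04,sontag}, this gives exactly three, proving the proposition.

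The main obstacle is not conceptual but combinatorial: pinning down the explicit regular triangulation $T$ of the eight-point configuration $\mathcal A\subset\Z^3$ and checking exactly which three of its simplices are simultaneously positively decorated by $Csimple$ — this is the part the paper delegates to Algorithm~\ref{algo}, and reproducing it by hand (for instance by exhibiting an explicit height vector) is the most laborious bookkeeping. A second, minor point that still requires a genuine check, as in the main theorems, is that the $t$-deformation of $C$ can be realized by rescaling only $k_{\mathrm{on}_j}$ and $\ell_{\mathrm{on}_j}$, given that each such on-rate simultaneously enters several entries of $C$ through the products $T_i$.
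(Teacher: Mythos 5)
Your proposal is correct and follows essentially the same route as the paper, which proves Proposition~\ref{prop:maple2} exactly by the Section~\ref{sec:R} machinery: the three-variable parametrization in $s_0,e,f$, the reduction to $Csimple$ with $N_0,N_1\in(0,1)$, the computer-found regular triangulation with three simultaneously positively decorated simplices (the table records $\#L_7=1$, $k=3$ for $n=2$), the translation of the decorating conditions via $\rm{(III)}_i'$ and $\rm{(V)}$ into the stated inequalities, and finally Theorem~\ref{th:BS} together with the realization of the height deformation by rescaling only the on-rates. Your identification of the decorating conditions for the simplices containing the $Y_0$ and $Y_1$ columns, and the reduction to $\rm{(V)}$, $\rm{(III)}_0$ and the opposite of $\rm{(III)}_1$, checks out against the implication list in Section~\ref{sec:R}.
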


\begin{prop}\label{prop:maple3}
Let $n=3$. Assume that $S_{tot}>E_{tot}+F_{tot}.$
Then, there is a choice of rate constants for which the distributive sequential $3$-site phosphorylation system 
admits at least $3$ positive steady states. More explicitly, if the rate constants and total concentrations are in one of the regions described below
\begin{small}
\begin{enumerate}[label = {$\rm{(R_{3.\arabic*})}$}]
\item $\dfrac{k_{\rm{cat}_0}}{\ell_{\rm{cat}_0}}<\dfrac{F_{tot}}{E_{tot}}<
\dfrac{k_{\rm{cat}_1}}{\ell_{\rm{cat}_1}},$
\item $\dfrac{k_{\rm{cat}_0}}{\ell_{\rm{cat}_0}}<\dfrac{F_{tot}}{E_{tot}}<
\dfrac{k_{\rm{cat}_2}}{\ell_{\rm{cat}_2}},$
\item $\dfrac{k_{\rm{cat}_1}}{\ell_{\rm{cat}_1}}<\dfrac{F_{tot}}{E_{tot}}<
\dfrac{k_{\rm{cat}_2}}{\ell_{\rm{cat}_2}},$
\end{enumerate}
\end{small}%
then after rescaling of the $k_{\rm on}$'s and $\ell_{\rm on}$'s the distributive sequential $3$-site phosphorylation system has at least $3$ positive steady states.
\end{prop}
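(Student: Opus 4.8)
The proof runs along exactly the same lines as those of Theorems~\ref{th:allintermediatesEside} and~\ref{th:J}, now in dimension $d=3$ and with the coefficient matrix $Csimple$ of the previous subsection specialized to $n=3$ (a $3\times 10$ matrix whose only data beyond $S_{tot},E_{tot},F_{tot}$ are the numbers $N_0,N_1,N_2\in(0,1)$). First I recall that, after substituting the positive parametrization of all concentrations in terms of $s_0,e,f$ into the three conservation laws~\eqref{eq:consfosfo}, the positive steady states of the full $3$-site network are in bijection with the positive solutions of the sparse system with support $\mathcal{A}$ (the columns of the matrix $A$ displayed above) and coefficient matrix $C$; since positive rescaling of columns does not affect positive decoration, it suffices to test simplices against $Csimple$ throughout, and $N_i=(1+k_{\mathrm{cat}_i}/\ell_{\mathrm{cat}_i})^{-1}$ is an increasing bijection between $(0,1)$-valued data and the ratios $k_{\mathrm{cat}_i}/\ell_{\mathrm{cat}_i}$.

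Next I would run Algorithm~\ref{algo} for $n=3$, which (as recorded in the table) outputs a short list $L_7$ of six sets of three simplices, each sitting in a common regular triangulation of $\mathcal{A}$ and simultaneously positively decorable by $Csimple$; applying the implication rules listed after the algorithm to discard repetitions and superfluous candidates, exactly the three regions $\mathrm{(R_{3.1})}$, $\mathrm{(R_{3.2})}$, $\mathrm{(R_{3.3})}$ survive. Concretely, for each $i$ I would exhibit the three relevant simplices $\Delta^{(i)}_1,\Delta^{(i)}_2,\Delta^{(i)}_3$ (each containing the vertex $(0,0,0)$), write down an explicit height function $h^{(i)}\colon\mathcal{A}\to\R$ inducing a regular triangulation of $\mathcal{A}$ that contains them — exactly as the height function produced in the proof of Lemma~\ref{lemma:Tisregular} — and then compute the four signed $3\times 3$ minors of each corresponding $3\times 4$ submatrix of $Csimple$. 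Each "positively spanning" requirement collapses to one of the elementary inequalities $\mathrm{(I)}$–$\mathrm{(V)}$ in $N_0,N_1,N_2,S_{tot},E_{tot},F_{tot}$; translating these through the dictionary $\mathrm{(I)}'$–$\mathrm{(IV)}'$ and simplifying with the listed implications, using throughout the standing hypothesis $S_{tot}>E_{tot}+F_{tot}$ (which is condition $\mathrm{(V)}$), yields precisely the two-sided inequality defining $\mathrm{(R_{3.}}i)$ — for instance $\mathrm{(R_{3.1})}$ is the analogue of the $n=2$ region of Proposition~\ref{prop:maple2}.

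With the positively decorated simplices in hand I would then invoke Theorem~\ref{th:BS}: part~(1), for any height function inducing the chosen triangulation, provides a threshold $t_0>0$ such that for all $0<t<t_0$ the deformed system~\eqref{systemwitht} has at least three nondegenerate positive solutions; as in the proofs of Theorems~\ref{th:allintermediatesEside} and~\ref{th:J}, this monomial deformation of $Csimple$ is realized by rescaling the constants $k_{\mathrm{on}_j}$ and $\ell_{\mathrm{on}_j}$ by suitable powers of $t$, since the $N_i$ depend only on $k_{\mathrm{cat}_i}/\ell_{\mathrm{cat}_i}$ and are therefore unaffected, while $K_i,L_i,\tau_i,\nu_i$ and the products $T_i$ absorb the powers of $t$ and the conservation constants stay fixed. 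Part~(2), applied to the cone $\mathcal{C}_{\Delta^{(i)}_1,\Delta^{(i)}_2,\Delta^{(i)}_3}$, upgrades the one-parameter curve of rescalings to an open set of them. Feeding the resulting triples $(s_0,e,f)$ back through the parametrization of the concentrations produces at least three distinct positive steady states of the $3$-site network lying in the stoichiometric compatibility class fixed by $S_{tot},E_{tot},F_{tot}$, which is the assertion.

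The main obstacle is the middle step: one must verify that each of the six triples returned by the algorithm genuinely lies in a regular triangulation — handled by exhibiting an explicit $h^{(i)}$ — that the associated sign conditions on the $3\times 3$ minors of $Csimple$ are jointly satisfiable, and then use the implication rules carefully to collapse the six candidates to the three irredundant regions and match them with the stated inequalities. The remaining point requiring care, though routine, is the explicit realization of the $t$-deformation of the $2n=6$ nontrivial columns of $Csimple$ by a rescaling of the $2n=6$ parameters $k_{\mathrm{on}_j},\ell_{\mathrm{on}_j}$ while keeping $N_0,N_1,N_2$ and $S_{tot},E_{tot},F_{tot}$ unchanged, exactly as carried out in the proof of Theorem~\ref{th:allintermediatesEside}.
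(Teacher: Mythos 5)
Your proposal is correct and follows essentially the same route as the paper, which proves this proposition exactly as you describe: compute the regular triangulations of the three-dimensional support with TOPCOM, extract the simultaneously positively decorated simplices via Algorithm~\ref{algo}, translate the positive-spanning conditions through the dictionary $\rm{(I)}'$--$\rm{(IV)}'$ into the regions $\rm{(R_{3.1})}$--$\rm{(R_{3.3})}$, and conclude with Theorem~\ref{th:BS} together with a realization of the $t$-deformation by rescaling the $k_{\rm on}$'s and $\ell_{\rm on}$'s. The only slips are cosmetic: $N_i=(1+k_{\rm{cat}_i}/\ell_{\rm{cat}_i})^{-1}$ is a \emph{decreasing} (not increasing) function of the ratio, and the six surviving candidates in $L_7$ yield all six regions of Propositions~\ref{prop:maple3} and~\ref{prop:maple3b}, of which the three stated here are those requiring $S_{tot}>E_{tot}+F_{tot}$.
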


\begin{prop}\label{prop:maple3b}
Let $n=3$. If the rate constants and total concentrations are in one of the regions described below
\begin{small}
\begin{enumerate}[label = {$\rm{(R_{3.\arabic*})}$}]
\setcounter{enumi}{3}
\item  $\max\left\{
\dfrac{F_{tot}}{E_{tot}}, \dfrac{F_{tot}}{S_{tot}-F_{tot}}\right\}<\min\left\{\dfrac{k_{\rm{cat}_0}}{\ell_{\rm{cat}_0}},
\dfrac{k_{\rm{cat}_2}}{\ell_{\rm{cat}_2}}\right\}, \ S_{tot}>F_{tot}, $
\item  $\max\left\{
\dfrac{F_{tot}}{E_{tot}},\dfrac{F_{tot}}{S_{tot}-F_{tot}}\right\}<\min\left\{\dfrac{k_{\rm{cat}_1}}{\ell_{\rm{cat}_1}},
\dfrac{k_{\rm{cat}_2}}{\ell_{\rm{cat}_2}}\right\}, \  S_{tot}>F_{tot},  $
\item  $\min\left\{\dfrac{F_{tot}}{E_{tot}},\dfrac{S_{tot}-E_{tot}}{E_{tot}}\right\}>\max\left\{\dfrac{k_{\rm{cat}_1}}{\ell_{\rm{cat}_1}},
\dfrac{k_{\rm{cat}_2}}{\ell_{\rm{cat}_2}}\right\}, \  S_{tot}>E_{tot}, $
\end{enumerate}
\end{small}%
then after rescaling of the $k_{\rm on}$'s and $\ell_{\rm on}$'s the distributive sequential $3$-site phosphorylation system has at least $3$ positive steady states.
\end{prop}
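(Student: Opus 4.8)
The plan is to follow the scheme of the proofs of Theorems~\ref{th:allintermediatesEside} and~\ref{th:J}, now in the three-variable setting of Section~\ref{sec:R}. Recall that the positive steady states of the full $3$-site phosphorylation system in a fixed stoichiometric compatibility class are the positive solutions of the sparse polynomial system in the variables $s_0,e,f$ whose support is the ten-point configuration $\mathcal A$ given by the columns of $A$ with $n=3$ and whose coefficient matrix is $C$; and a $3$-simplex with vertices in $\mathcal A$ is positively decorated by $C$ if and only if it is positively decorated by $Csimple$, which depends only on the numbers $N_i=\bigl(1+\tfrac{k_{\rm{cat}_i}}{\ell_{\rm{cat}_i}}\bigr)^{-1}\in(0,1)$, $i=0,1,2$, and on $S_{tot},E_{tot},F_{tot}$.

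First I would rewrite each of the regions $(\mathrm R_{3.4})$, $(\mathrm R_{3.5})$, $(\mathrm R_{3.6})$ as an explicit conjunction of the elementary inequalities $(\mathrm I)_{i,j}$--$(\mathrm V)$ on $(N_0,N_1,N_2,S_{tot},E_{tot},F_{tot})$, using the displayed dictionary between $(\mathrm I)'_{i,j}$--$(\mathrm{IV})'_i$, $(\mathrm V)$ and $(\mathrm I)_{i,j}$--$(\mathrm{IV})_i$, $(\mathrm V)$ together with the five implications listed right after it to unfold the $\max$/$\min$ conditions. Each such elementary inequality is precisely a sign condition on a signed maximal minor of a $3\times 4$ submatrix of $Csimple$ formed by the column $(-S_{tot},-E_{tot},-F_{tot})^t$, which corresponds to the vertex $(0,0,0)\in\mathcal A$, together with three of the columns $(1,0,0)^t$, $(0,1,0)^t$, $(0,0,1)^t$, $(1,N_0,1-N_0)^t$, $(1,N_1,1-N_1)^t$, $(1,N_2,1-N_2)^t$.

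Next, for each of the three regions I would exhibit a regular triangulation $\Gamma$ of $\mathcal A$ --- one of those witnessed for $n=3$ by the search in Algorithm~\ref{algo} --- together with three of its $3$-simplices $\Delta_1,\Delta_2,\Delta_3$, each containing the vertex $(0,0,0)$ (by Step~(2) of Algorithm~\ref{algo} no other $3$-simplex can be positively decorated), such that $\Delta_1,\Delta_2,\Delta_3$ are all positively decorated by $Csimple$ whenever the inequalities of that region hold. Regularity of $\Gamma$ is certified by writing down an explicit height function $h\colon\mathcal A\to\R$ inducing $\Gamma$, exactly as in Lemma~\ref{lemma:Tisregular}; positive decoration of each $\Delta_k$ is certified by computing the three signed $3\times3$ minors of its $3\times 4$ submatrix of $Csimple$ and matching their signs against the region inequalities of the previous step; and one records a presentation $\mathcal C_{\Delta_1,\Delta_2,\Delta_3}=\{h\in\R^{10}:\langle m_j,h\rangle>0,\ j=1,\dots,\ell\}$ of the cone of heights inducing those three simplices.

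Finally I would invoke Theorem~\ref{th:BS}. Item~(1), applied with the height function $h$ above, yields $t_0>0$ such that for all $0<t<t_0$ the deformed system~\eqref{systemwitht} has at least three nondegenerate positive solutions; after the monomial change of variables that identifies this deformation with a rescaling of the network (as in the proofs of Theorems~\ref{th:allintermediatesEside} and~\ref{th:J}), this system is the intersection of the steady-state variety with the conservation relations of the network obtained from the original one by multiplying the $k_{\rm{on}_i}$'s and $\ell_{\rm{on}_i}$'s by suitable powers of $t$ and keeping the $k_{\rm{cat}}$'s, the $\ell_{\rm{cat}}$'s and the totals fixed --- which already yields the stated ``after rescaling'' conclusion --- and item~(2), applied to the presentation of $\mathcal C_{\Delta_1,\Delta_2,\Delta_3}$, upgrades it to an open region of such rescalings, cut out by bounds $B_1,\dots,B_\ell$ as in~\eqref{ineq:gammaallintermediates}. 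The main obstacle is the combinatorial core of the third paragraph: for each region one must pin down a concrete regular triangulation of the ten-point configuration $\mathcal A$ carrying three mutually compatible, positively decorated simplices, and verify --- by the routine but bookkeeping-heavy computation of the $3\times3$ minors of $Csimple$ and of the linear inequalities defining $\mathcal C_{\Delta_1,\Delta_2,\Delta_3}$ --- that the decoration conditions of those simplices are implied by (in fact essentially coincide with) the defining inequalities of the region; this is exactly what the search in Algorithm~\ref{algo} automates, and carrying it out by hand is straightforward once the triangulation has been identified.
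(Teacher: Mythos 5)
Your proposal follows exactly the paper's own route: the authors prove this proposition "in the same fashion as Theorems~\ref{th:allintermediatesEside} and~\ref{th:J}" by taking the three-variable system with the $(2n+4)$-point support $\mathcal A$, reducing positive decoration to sign conditions on minors of $Csimple$ in terms of the $N_i$, locating a regular triangulation with three simultaneously positively decorated simplices via the computer search of Algorithm~\ref{algo}, and then invoking both items of Theorem~\ref{th:BS} to obtain the rescaling and the open region. The level of detail you leave to the combinatorial search is the same level the paper itself delegates to its script, so the argument is correct and matches the paper's proof.
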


\begin{prop}\label{prop:maple4}
Let $n=4$. Assume that $S_{tot}>E_{tot}+F_{tot}.$
Then, there is a choice of rate constants for which the distributive sequential $4$-site phosphorylation system has at least $5$ steady states. 
More explicitly, if the rate constants and total concentrations are in one of the regions described below
\begin{small}
\begin{enumerate}[label = {$\rm{(R_{4.\arabic*})}$}]
\item $
\dfrac{k_{\rm{cat}_2}}{\ell_{\rm{cat}_2}}<\dfrac{F_{tot}}{E_{tot}}<\min\left\{
\dfrac{k_{\rm{cat}_1}}{\ell_{\rm{cat}_1}},
\dfrac{k_{\rm{cat}_3}}{\ell_{\rm{cat}_3}}\right\},$
\item $\dfrac{k_{\rm{cat}_0}}{\ell_{\rm{cat}_0}}<\dfrac{F_{tot}}{E_{tot}}<\min\left
\{\dfrac{k_{\rm{cat}_1}}{\ell_{\rm{cat}_1}},
\dfrac{k_{\rm{cat}_3}}{\ell_{\rm{cat}_3}}\right\},$
\item $\max\left\{\dfrac{k_{\rm{cat}_0}}{\ell_{\rm{cat}_0}},
\dfrac{k_{\rm{cat}_2}}{\ell_{\rm{cat}_2}}\right\}<\dfrac{F_{tot}}{E_{tot}}<
\dfrac{k_{\rm{cat}_3}}{\ell_{\rm{cat}_3}},$
\item $\max\left\{\dfrac{k_{\rm{cat}_0}}{\ell_{\rm{cat}_0}},
\dfrac{k_{\rm{cat}_2}}{\ell_{\rm{cat}_2}}\right\}<\dfrac{F_{tot}}{E_{tot}}<
\dfrac{k_{\rm{cat}_1}}{\ell_{\rm{cat}_1}},$
\end{enumerate}
\end{small}%
then after rescaling of the $k_{\rm on}$'s and $\ell_{\rm on}$'s the distributive sequential $4$-site phosphorylation system has at least $5$ steady states.
\end{prop}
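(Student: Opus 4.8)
The plan is to argue exactly as in the proofs of Theorems~\ref{th:allintermediatesEside} and~\ref{th:J}, now feeding in the regular triangulation produced by Algorithm~\ref{algo} for $n=4$. By the parametrization of the steady states recalled above (Section~4 of~\cite{BDG1}), the positive steady states of the full $4$-site system lying in a fixed stoichiometric compatibility class are in bijection with the positive solutions in the variables $(s_0,e,f)$ of the sparse system with support $\mathcal A$ given by the columns of $A$ and coefficient matrix $C$. Since positive decoration of a simplex is unchanged when the columns of the coefficient matrix are multiplied by positive scalars, I would work throughout with $Csimple$, whose nonconstant columns depend only on the quantities $N_i=(1+k_{\rm{cat}_i}/\ell_{\rm{cat}_i})^{-1}\in(0,1)$, $i=0,1,2,3$, together with $S_{tot}$, $E_{tot}$, $F_{tot}$.

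First I would run steps (1)--(7) of Algorithm~\ref{algo} with $n=4$: the table records that the maximal number of simultaneously positively decorated simplices is $k=5$, and that after discarding repetitions and superfluous conditions one is left with the four candidate regions $\mathrm{(R_{4.1})}$--$\mathrm{(R_{4.4})}$. For each of them I would record explicitly a regular triangulation $\Gamma$ of $\mathcal A$ (equivalently, a height function inducing it, as in Lemma~\ref{lemma:Tisregular}) together with the five $3$-simplices $\Delta_1,\dots,\Delta_5\subset\mathcal A$, each containing the last vertex $(0,0,0)$, and check that every $\Delta_\ell$ is positively decorated by $Csimple$ in the sense of Definition~\ref{def:dec}, i.e.\ that the three $3\times 3$ minors of the associated $3\times 4$ submatrix are nonzero and alternate in sign. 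Each such simplex contributes a short list of strict inequalities of the types $\mathrm{(I)}$--$\mathrm{(V)}$ in the $N_i$ and the total amounts, and, using the five implications listed just before the statement, these collapse to the single condition $S_{tot}>E_{tot}+F_{tot}$ together with exactly one of the chains $\mathrm{(R_{4.1})}$--$\mathrm{(R_{4.4})}$ once the $N_i$ are rewritten in terms of the ratios $k_{\rm{cat}_i}/\ell_{\rm{cat}_i}$ via $\mathrm{(I)}'$--$\mathrm{(IV)}'$. This is the main obstacle: one must actually exhibit, for each region, five mutually compatible positively decorated tetrahedra sitting inside an honest regular triangulation of the $12$-point configuration $\mathcal A\subset\Z^3$, and verify that no spurious sign condition survives --- this is where the computer search of Algorithm~\ref{algo} does the work, the manual step being only the elimination of duplicate and implied regions.

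With the five positively decorated simplices in hand inside a regular triangulation, the conclusion follows from Theorem~\ref{th:BS}. By item~(1), applied with the chosen height function $h$, there is $t_0>0$ such that for all $0<t<t_0$ the deformed system with coefficients $t^{h(a_j)}c_{ij}$ has at least five nondegenerate positive solutions. Exactly as in the proof of Theorem~\ref{th:allintermediatesEside}, I would then absorb the factors $t^{h(a_j)}$ into the kinetic parameters after a suitable monomial change of the variables $e$ and $f$: since $K_i$ and $L_i$ are proportional to $k_{\rm{on}_i}$ and $\ell_{\rm{on}_i}$ respectively while $k_{\rm{cat}_i}$, $\ell_{\rm{cat}_i}$ and the conservation constants are held fixed, this deformation is realized by rescaling only the $k_{\rm{on}_i}$'s and $\ell_{\rm{on}_i}$'s. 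Hence in each region $\mathrm{(R_{4.1})}$--$\mathrm{(R_{4.4})}$ there is such a rescaling for which the $4$-site phosphorylation system has at least five positive steady states in the stoichiometric compatibility class determined by $S_{tot}$, $E_{tot}$, $F_{tot}$; and if one wants an open set of parameters rather than a one-parameter family, one invokes item~(2) of Theorem~\ref{th:BS} with the finitely many inequalities presenting the cone $\mathcal C_{\Delta_1,\dots,\Delta_5}$, just as at the end of the proof of Theorem~\ref{th:allintermediatesEside}.
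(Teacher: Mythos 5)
Your proposal is correct and follows essentially the same route as the paper: the authors state that Propositions 5.2--5.7 "are proved in the same fashion as Theorems 1.2 and 1.4, once you have the regular triangulation obtained with the computer script," i.e.\ precisely the combination of Algorithm 5.1 (to locate five simultaneously positively decorated simplices in a regular triangulation of the $2n+4$-point support for $n=4$, and to reduce the resulting sign conditions via the implications among $\mathrm{(I)}$--$\mathrm{(V)}$ to the four regions), Theorem 3.6, and the absorption of the $t$-deformation into a rescaling of the $k_{\rm on}$'s and $\ell_{\rm on}$'s that you describe. No substantive difference from the paper's argument.
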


\begin{prop}\label{prop:maple5}
Let $n=5$. Assume that $S_{tot}>E_{tot}+F_{tot}.$
Then, there is a choice of rate constants for which the distributive sequential $5$-site phosphorylation system has at least $5$ steady states. 
More explicitly, if the rate constants and total concentrations are in one of the 13 regions described below
{\small
\begin{equation*}
(\emph{R}_{5.(I,J)}) \quad \quad
\max_{i\in I}\left\{\dfrac{k_{\rm{cat}_i}}{\ell_{\rm{cat}_i}}\right\}<
\dfrac{F_{tot}}{E_{tot}}<\min_{j\in J}\left\{\dfrac{k_{\rm{cat}_j}}{\ell_{\rm{cat}_j}}\right\},
\end{equation*}}%
with $(I,J)$ in the  following list {\rm (}where we write e.g. $14$ instead of $\{1,4\}${\rm )}:
\begin{small}
\begin{equation*}
(0,14),(0,24),
(1,24),(2,13),
(2,14),(3,14),(3,024),
(02,3),(02,4),
(03,1),(03,2),
(13,2),(13,4),
\end{equation*}
\end{small}%
%
then after rescaling of the $k_{\rm on}$'s and $\ell_{\rm on}$'s the distributive sequential $5$-site phosphorylation system has at least $5$ steady states.
\end{prop}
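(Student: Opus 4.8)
The plan is to run, for the full network $G$ with all intermediates, the very same machine used in the proofs of Theorems~\ref{th:allintermediatesEside} and~\ref{th:J}, fed now with the combinatorial data produced by Algorithm~\ref{algo}. First I would record that the positive steady states of $G$ lying in the stoichiometric compatibility class fixed by $S_{tot},E_{tot},F_{tot}$ are in bijection with the positive solutions of the sparse system in the three variables $s_0,e,f$ obtained by plugging the steady-state parametrization of all the concentrations into the three conservation laws~\eqref{eq:consfosfo}. This system has support the configuration $\mathcal{A}\subset\Z^3$ consisting of the $2n+4$ columns of $A$ (fourteen points when $n=5$) and coefficient matrix $C$; scaling the columns of $C$ by positive numbers does not affect which simplices are positively decorated, so $C$ may be replaced by $Csimple$, whose entries involve only $S_{tot},E_{tot},F_{tot}$ and the numbers $0<N_i=(1+k_{\mathrm{cat}_i}/\ell_{\mathrm{cat}_i})^{-1}<1$.

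Next, for each of the thirteen pairs $(I,J)$ I would exhibit the regular triangulation $\Gamma_{(I,J)}$ of $\mathcal{A}$ found by Steps (1)--(5) of Algorithm~\ref{algo}, together with the subset of $2[\frac{5}{2}]+1=5$ of its $3$-simplices, each having $(0,0,0)$ among its vertices, that Steps (6)--(7) single out as simultaneously positively decorated by $Csimple$ exactly when the inequalities $(\mathrm{R}_{5.(I,J)})$ hold. Regularity of $\Gamma_{(I,J)}$ is certified by producing an explicit height function, in the style of Lemma~\ref{lemma:Tisregular}. The claim that the decoration conditions for those five simplices collapse precisely to $(\mathrm{R}_{5.(I,J)})$ rests on the uniform shape of $Csimple$: every relevant $3\times 4$ minor is an affine function of the $N_i$ and of $S_{tot},E_{tot},F_{tot}$, so the only inequalities that can occur are those of types $(\mathrm I)$--$(\mathrm V)$ listed above; one then checks the sign patterns case by case and uses the five implications among $(\mathrm I)$--$(\mathrm V)$ recorded before the Proposition to discard the repeated and the superfluous candidates produced by the search, which is what leaves the thirteen displayed regions.

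With $\Gamma_{(I,J)}$ and its five positively decorated simplices in hand, item (1) of Theorem~\ref{th:BS} yields, for any height function $h$ inducing $\Gamma_{(I,J)}$, a threshold $t_0>0$ such that for every $0<t<t_0$ the deformed system~\eqref{systemwitht} has at least $5$ nondegenerate positive solutions. It then remains to recognize this $t$-deformation as a genuine rescaling of the reaction rate constants, exactly as in the earlier proofs: one absorbs part of the powers of $t$ by a substitution $\bar e=t^{a}e,\ \bar f=t^{b}f$ with suitable integers $a,b$, and the residual powers of $t$ multiplying the surviving monomials are then read, through the identities $\tau_i=k_{\mathrm{cat}_i}K_i$, $\nu_i=\ell_{\mathrm{cat}_i}L_i$, $T_i=\prod_{j=0}^{i}\tau_j/\nu_j$, as powers of $t$ multiplying the constants $K_i$ and $L_i$; since $K_i$ is proportional to $k_{\mathrm{on}_i}$ and $L_i$ to $\ell_{\mathrm{on}_i}$ alone, this is achieved by multiplying the original $k_{\mathrm{on}_i}$ and $\ell_{\mathrm{on}_i}$ by appropriate powers of $t$ while keeping all the $k_{\mathrm{off}},k_{\mathrm{cat}},\ell_{\mathrm{off}},\ell_{\mathrm{cat}}$ and the totals $S_{tot},E_{tot},F_{tot}$ fixed. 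As in Theorem~\ref{th:allintermediatesEside}, one may instead use item (2) of Theorem~\ref{th:BS} and a presentation of the cone $\mathcal{C}_{\Gamma_{(I,J)}}$ to obtain the more robust version with the multipliers ranging over an open set. For these rescaled constants the $5$-site phosphorylation system therefore has at least $5$ positive steady states in the prescribed stoichiometric class, and in particular such a choice of constants exists.

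The genuinely computational part --- enumerating all regular triangulations of $\mathcal{A}$, pruning them, and extracting the maximal families of simultaneously decorable simplices --- has already been done by Algorithm~\ref{algo}, so the main obstacle is organizational: displaying the thirteen triangulations and their height functions compactly enough to be verified by hand, and carrying out cleanly the case analysis with $(\mathrm I)$--$(\mathrm V)$ that removes the redundant regions. A secondary, routine point to get right is the choice of the exponents $a,b$ in $\bar e=t^{a}e,\ \bar f=t^{b}f$ so that every power of $t$ which must stay bounded is nonnegative and every one which must vanish is strictly positive; because $N_i$ depends only on $k_{\mathrm{cat}_i}/\ell_{\mathrm{cat}_i}$ and $K_i,L_i$ each involve a single on-rate, no consistency obstruction arises at this stage.
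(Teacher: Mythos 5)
Your proposal matches the paper's approach: the paper proves Proposition~\ref{prop:maple5} exactly by running Algorithm~\ref{algo} on the three-variable system for the full network with all intermediates, extracting the simultaneously positively decorated simplices whose decoration conditions reduce via the implications among $(\mathrm{I})$--$(\mathrm{V})$ to the thirteen listed regions, and then invoking Theorem~\ref{th:BS} with the same $t$-deformation-to-rescaling argument used for Theorems~\ref{th:allintermediatesEside} and~\ref{th:J}. Your outline is correct and essentially identical to the paper's intended proof.
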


\begin{prop}\label{prop:maple5b}
Let $n=5$. If the rate constants and total concentrations are in one of the regions described below
\begin{small}
\begin{enumerate}[label = {$\rm{(R_{5.\arabic*})}$}]
\item  $\max\left\{
\dfrac{F_{tot}}{E_{tot}},
\dfrac{F_{tot}}{S_{tot}-F_{tot}}
\right\} <\min\left\{
\dfrac{k_{\rm{cat}_1}}{\ell_{\rm{cat}_1}},
\dfrac{k_{\rm{cat}_2}}{\ell_{\rm{cat}_2}},
\dfrac{k_{\rm{cat}_4}}{\ell_{\rm{cat}_4}}\right\}, \ S_{tot}>F_{tot},  $
\item  $\min\left\{
\dfrac{F_{tot}}{E_{tot}},
\dfrac{S_{tot}-E_{tot}}{E_{tot}}
\right\} >\max\left\{
\dfrac{k_{\rm{cat}_0}}{\ell_{\rm{cat}_0}},
\dfrac{k_{\rm{cat}_2}}{\ell_{\rm{cat}_2}},
\dfrac{k_{\rm{cat}_3}}{\ell_{\rm{cat}_3}}\right\}, \ S_{tot}>E_{tot}, $
\end{enumerate}
\end{small}%
then after rescaling of the $k_{\rm on}$'s and $\ell_{\rm on}$'s the distributive sequential $5$-site phosphorylation system has at least $5$ positive steady states.
\end{prop}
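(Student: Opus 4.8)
To prove Proposition~\ref{prop:maple5b} the plan is to run the argument of Theorems~\ref{th:allintermediatesEside} and~\ref{th:J}, now for the full $5$-site network, using the regular triangulations produced by Algorithm~\ref{algo}. By the parametrization of the steady states of the complete $n$-site network in terms of $s_0,e,f$ recalled in Section~\ref{sec:R}, the positive steady states in the stoichiometric compatibility class fixed by $S_{tot},E_{tot},F_{tot}$ are in bijection with the positive solutions of the sparse system of three Laurent equations in the variables $s_0,e,f$ with support $\mathcal{A}$ (the columns of the matrix $A$) and coefficient matrix $C$. Since scaling a column of $C$ by a positive number does not affect whether a simplex is positively decorated, one may test decoration with the normalized matrix $Csimple$, so that the only relevant data are $N_0,\dots,N_4\in(0,1)$, where $N_i=(1+k_{\rm{cat}_i}/\ell_{\rm{cat}_i})^{-1}$, together with $S_{tot},E_{tot},F_{tot}$; note that these quantities are unaffected by a rescaling of the $k_{\rm{on}}$'s and $\ell_{\rm{on}}$'s, so the region conditions below will persist under such rescalings.

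First I would fix, for each of the regions $\rm{(R_{5.1})}$ and $\rm{(R_{5.2})}$, a regular triangulation $T$ of the $14$-point configuration $\mathcal{A}$ taken from the list $L_7$ output by Algorithm~\ref{algo} for $n=5$, together with the five $3$-simplices $\Delta_1,\dots,\Delta_5\subset T$ that Steps~(6)--(7) certify to be simultaneously positively decorated by $Csimple$. The substantive point is that these five simplices are positively decorated \emph{exactly} on the corresponding region: for each $\Delta_\ell$ one forms its $3\times4$ submatrix of $Csimple$ and imposes that the signed $3\times3$ minors alternate in sign, and the uniform shape of $Csimple$ forces every such inequality to be of one of the types $\rm{(I)}$--$\rm{(V)}$, which translate into $\rm{(I)}'$--$\rm{(IV)}'$ on the ratios $k_{\rm{cat}_i}/\ell_{\rm{cat}_i}$. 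Unlike the regions of Proposition~\ref{prop:maple5}, here the certifying simplices produce conditions of the boundary types $\rm{(IV)}$ (for $\rm{(R_{5.1})}$) or $\rm{(II)}$ (for $\rm{(R_{5.2})}$) in addition to type $\rm{(III)}$, and since we do not assume the stronger relation $S_{tot}>E_{tot}+F_{tot}$ these are not automatic; this is why $\rm{(R_{5.1})}$ carries the extra term $F_{tot}/(S_{tot}-F_{tot})$ with $S_{tot}>F_{tot}$, and $\rm{(R_{5.2})}$ the extra term $(S_{tot}-E_{tot})/E_{tot}$ with $S_{tot}>E_{tot}$. Using the implications among $\rm{(I)}$--$\rm{(V)}$ collected before the propositions, one checks that the displayed inequalities are precisely equivalent to ``all of $\Delta_1,\dots,\Delta_5$ are positively decorated'', and that neither $\rm{(R_{5.1})}$ nor $\rm{(R_{5.2})}$ is redundant or already contained in a region of Proposition~\ref{prop:maple5}.

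With the five positively decorated simplices at hand I would invoke Theorem~\ref{th:BS}: item~(1), for any height function $h$ inducing $T$, gives $t_0>0$ such that for all $t\in(0,t_0)$ the deformed system~\eqref{systemwitht} with support $\mathcal{A}$ and coefficients $t^{h(a_j)}c_{ij}$ has at least five nondegenerate positive solutions, while item~(2) yields, through the cone ${\mathcal C}_{\Delta_1,\dots,\Delta_5}$, an open set of coefficients with the same property. It then remains to realize this deformation by the permitted rescalings. As in the proofs of Theorems~\ref{th:allintermediatesEside} and~\ref{th:J}, a monomial substitution $\bar e=t^{\alpha}e$, $\bar f=t^{\beta}f$ absorbs the linear part of $h$, and the residual powers of $t$ on the monomials $s_0e^{i}f^{-i}$ and $s_0e^{i+1}f^{-i}$ are absorbed into the coefficients by scaling $K_i\mapsto\mu_i K_i$ and $L_i\mapsto\rho_i L_i$ for suitable powers $\mu_i,\rho_i$ of $t$; one uses here that the coefficient of a fixed monomial is scaled by the same factor in all three rows of $C$, so the adjustment is consistent. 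Since $K_i$ and $L_i$ depend on the rate constants only through $k_{\rm{on}_i}$ and $\ell_{\rm{on}_i}$, this amounts to rescaling only the $k_{\rm{on}}$'s and the $\ell_{\rm{on}}$'s, leaving all $k_{\rm{cat}},\ell_{\rm{cat}},k_{\rm{off}},\ell_{\rm{off}}$ and the conservation constants fixed, which gives the statement; item~(2) moreover furnishes explicit bounds on the admissible rescaling factors, exactly as in Theorem~\ref{th:allintermediatesEside}.

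The step I expect to be the real obstacle is the second one: among the $122835$ regular triangulations of a $14$-point configuration in $\R^3$, singling out the triangulations $T$ and the quintuples of simplices whose decoration inequalities collapse to the clean form of $\rm{(R_{5.1})}$ and $\rm{(R_{5.2})}$, and then verifying by hand --- via the list of implications among $\rm{(I)}$--$\rm{(V)}$ --- that each inequality in the statement is necessary and that the two regions are genuinely new. This is precisely what Algorithm~\ref{algo} together with the manual post-processing of Section~\ref{sec:R} accomplishes; the subsequent appeal to Theorem~\ref{th:BS} and the passage back to rescalings of the $k_{\rm{on}}$'s and $\ell_{\rm{on}}$'s are routine, being structurally identical to the corresponding parts of the proofs of Theorems~\ref{th:allintermediatesEside} and~\ref{th:J}.
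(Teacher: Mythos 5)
Your proposal is correct and follows essentially the same route as the paper: the paper proves Proposition~\ref{prop:maple5b} (and its companions) precisely by the machinery of Section~\ref{sec:R} --- the parametrization in $s_0,e,f$, the normalized matrix $Csimple$, the computer-found regular triangulations with five simultaneously positively decorated simplices, the translation of the decoration inequalities into the conditions $\rm{(I)}$--$\rm{(V)}$, and an appeal to Theorem~\ref{th:BS} followed by realizing the deformation through rescalings of the $k_{\rm on}$'s and $\ell_{\rm on}$'s, exactly as in the proofs of Theorems~\ref{th:allintermediatesEside} and~\ref{th:J}. Your identification of the type $\rm{(IV)}$ and $\rm{(II)}$ conditions as the source of the extra terms $F_{tot}/(S_{tot}-F_{tot})$ and $(S_{tot}-E_{tot})/E_{tot}$ (absent from Proposition~\ref{prop:maple5} where $S_{tot}>E_{tot}+F_{tot}$ is assumed) matches the paper's discussion.
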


Note that the conditions in this section describe {\em different} regions from the ones described by the inequalities in 
Theorem~\ref{th:allintermediatesEside} and Theorem~\ref{th:J}. For instance, in 
Propositions~\ref{prop:maple2},~\ref{prop:maple3},~\ref{prop:maple4},~\ref{prop:maple5} 
the inequalities between the reaction rate constants and total conservations constants do not involve the value of $S_{tot}$. In Propositions~\ref{prop:maple3b} and~\ref{prop:maple5b}, the conditions onse
 the linear conservation constants are also different (e.g. on $\frac{F_{tot}}{E_{tot}}$ and $\frac{S_{tot}}{E_{tot}}-1$ instead of the product
 $F_{tot} (\frac{S_{tot}}{E_{tot}}-1)$). The inequalities  in Theorem~\ref{th:allintermediatesEside} and Theorem~\ref{th:J} hold for reactions rate constants
 of a reduced system $G_J$, but if we use Theorem~\ref{th:lifting} to extrapolate these conditions to the full $n$-site phosphorylation network, 
 the regions are different as well.

\section*{Discussion}

We developed in this paper both the theoretical setting based on~\cite{BDG1,bihan}  
and the algorithmic approach that follows from it,
to describe multistationarity regions in the space of all parameters for subnetworks of the $n$-site sequential phosphorylation cycle,
 where there are up to $2 [\frac n 2]+1$ positive steady states with fixed linear conservation constants. 
  We chose to assume that the subnetworks we consider have intermediate species  only in the $E$ component, but of course there
is a symmetry in the network interchanging $E$ with $F$, each $S_i$ with $S_{n-i}$, the corresponding intermediates and
rate constants, and completely similar results hold if we assume that there are only intermediates in the $F$ component.
 These regions can be lifted to regions of multistationarity with at least these number of stoichiometrically compatible positive steady states
 of the full $n$-site sequential phosphorylation cycle. The main feature of our {\em polyhedral} approach is that it gives a systematic
 method applicable to other networks (for instance, to enzymatic cascades~\cite{cascades}), and it provides {\em open 
 conditions} and not only {\em choices} of parameter values where high multistationarity occurs.  Moreover, we show that
 it can be algorithmically implemented.
 
 For instance,
 Theorem 1 in \cite{sontag} states that for any positive values of $S_{tot}$, $E_{tot}$ and $F_{tot}$, 
there exists $\varepsilon_0>0$  such that if $\frac{E_{tot}}{S_{tot}}<\varepsilon_0$, then there exists a choice of rate constants such 
that the system admits $2 [\frac n 2]+1$ positive steady states.
 Also, in the recent paper~\cite{FRW19}, the authors prove the
 existence of parameters for which there are $2 [\frac n 2]+1$ positive steady states and nearly half of them are
 stable and the other half unstable, using a different degeneration of the
 full network. However, the maximum expected number of stoichiometrically compatible steady states
 for the $n$-site system is equal to $2n-1$ (this is an upper bound by~\cite{sontag}), 
 which has been verified for $n=3$ and $4$ in~\cite{2n-1}. Probably, it is not
 possible to find a region in parameter space with this number of positive steady states using degeneration techniques.

In \cite{CM14} and \cite{CFMW17} it is proved that if $k_{\rm{cat}_0}/\ell_{\rm{cat}_0}<
k_{\rm{cat}_1}/\ell_{\rm{cat}_1}$
then there exist 
constants $E_{tot},F_{tot},S_{tot}$
such that the distributive sequential $2$-site phosphorylation presents multistationarity.
In \cite{Feliu18} the author gives new regions of multistationarity for $n=2,$ more precisely 
it is shown in Theorem 1 that given parameters $K_1,L_0,L_1,k_{\rm{cat}_0},k_{\rm{cat}_1},\ell_{\rm{cat}_0},\ell_{\rm{cat}_1},$ for any
$K_0$ small enough there exist 
constants $E_{tot},F_{tot},S_{tot}$
such that the distributive sequential $2$-site phosphorylation presents multistationarity. How small $K_0$ has to be taken is explicitly given in 
terms of the other parameters by a rather involved equation
(a similar result is obtained interchanging $K_0$ and $L_1$).   
Our Proposition~\ref{prop:maple2} gives a similar result, 
but with the advantage that we only need to 
rescale $k_{\rm on_0},k_{\rm on_1},\ell_{\rm on_0},\ell_{\rm on_1}.$

Proposition \ref{prop:maple2} is in agreement with \cite[Corollary 4.11]{CAT18} which establishes that in order for the  
distributive sequential 
$2$-site phosphorylation to present multistationarity the total
concentration of the substrate needs to be larger than either the concentration of the kinase or 
the phosphatase ($S_{tot}>F_{tot}$ or $S_{tot}>E_{tot}$). In the regions of multistationarity we found for $n=3,4$ and $5$ this is the case as well.
Propositions \ref{prop:maple2}, \ref{prop:maple3}, \ref{prop:maple3b}, \ref{prop:maple4}, \ref{prop:maple5} and~\ref{prop:maple5b} are of the same flavor as 
Theorems \ref{th:allintermediatesEside} and \ref{th:J}, in the sense that all of them give sufficient conditions on the rate 
constants and linear conservation constants such that after rescaling of the $k_{\rm on}$'s and $\ell_{\rm on}$'s, the distributive sequential $n$-site 
phosphorylation system is multistationary. The computational approach described in Section~\ref{sec:R} 
can be used for other networks of moderate size.

In conclusion, to the best of our knowledge other results 
(for instance \cite{CFMW17},\cite{CM14} and \cite{Feliu18}), only give regions of multistationarity as precise as ours for the distributive  
sequential $2$-site 
phosphorylation system. The present work has several advantages: it gives new regions of multistationarity for the distributive sequential $n$-site phosphorylation system for any $n;$
for small $n$ it gives several distinct regions of multistationarity; it gives lower bounds in the number of positive steady states in each of such regions; and moreover, our approach is easily applicable to other networks, both analytically and computationally.

\section*{Acknowledgements}
We are very thankful to Eugenia Ellis and Andrea Solotar for organizing the excellent project ``Matem\'aticas en el Cono Sur'', which
lead to this work. We also thank Eugenia Ellis for the warm hospitality in Montevideo, Uruguay, on December 3-7, 2018, where we devised
 our main results. AD, MG and MPM were partially supported by UBACYT 20020170100048BA, 
CONICET PIP 11220150100473 and 11220150100483, and ANPCyT PICT 2016-0398, Argentina.


\begin{thebibliography}{xxxxxxx}

\bibitem{BDG1}   F. Bihan, A. Dickenstein, M. Giaroli. \textit{Lower bounds 
		for positive roots and regions of multistationarity in chemical reaction networks}. 
		Preprint, available at: arXiv:1807.05157 (2018).

  \bibitem{bihan} F. Bihan, F. Santos, P-J. Spaenlehauer.\textit{ A polyhedral method for sparse systems with many
positive solutions}.  SIAM J. Appl. Algebra Geom. 2 (2018), no. 4, 620--645.
  
   \bibitem{CFMW17} C. Conradi, E. Feliu, M. Mincheva, C. Wiuf. 
  \textit{Identifying parameter regions for multistationarity}. PLOS Computational Biology (2017), 13(10), e1005751.
  
   \bibitem{CAT18} C. Conradi, A. Iosif, T. Kahle.\textit{ Multistationarity in the space of total concentrations 
  for systems that admit a monomial parametrization}. Preprint, available at: arXiv:arXiv:1810.08152 (2018).
    
      \bibitem{CM14} C. Conradi, M. Mincheva. 
  \textit{Catalytic constants enable the emergence of bistability in dual phosphorylation}. J. R. Soc. Interface (2014), 
  11(95), 20140158.
  
  \bibitem{booktriang} J. De Loera, J. Rambau,   F. Santos. \textit{Triangulations: Structures for Algorithms and Applications}, 2010,
Series Algorithms and Computation in Mathematics, Springer, Berlin.

  \bibitem{alicia} A. Dickenstein. \textit{Biochemical reaction networks: an invitation for algebraic geometers}. 
  MCA 2013, Contemporary Mathematics 656 (2016), 65--83. 

  
  \bibitem{DGPR} A. Dickenstein, M. Giaroli, M. P\'erez Mill\'an, R. Rischter. \textit{Detecting the multistationarity structure in enzymatic networks}. In preparation.


\bibitem{Feliu18}   E. Feliu . \textit{On the reaction rate constants that enable multistationarity in the two-site phosphorylation cycle}. 
		Preprint, available at:	arXiv:1809.07275 (2018). 
		
		\bibitem{FRW19} E. Feliu, A. D. Rendall, C. Wiuf \textit{A proof of unlimited multistability for phosphorylation cycles}. Preprint, available at: arXiv:1904.02983 (2019).
		
		    \bibitem{SFeliu} E. Feliu, A. Sadeghimanesh. \textit{The multistationarity structure of networks with intermediates 
    and a binomial core network}. Preprint, available at: arXiv:1808.07548 (2018).
		
		   \bibitem{fw13}  E. Feliu, C. Wiuf. \textit{Simplifying biochemical models with intermediate species}. J. R. Soc. Interface (2013) 10: 20130484.  
   
   \bibitem{2n-1} D. Flockerzi, K. Holstein, C. Conradi. \textit{N-site phosphorylation systems with 2N-1 steady states}. 
   Bull. Math. Biol.  76(8) (2014), 1892-1916.
   
     \bibitem{cascades} M. Giaroli, F. Bihan, A. Dickenstein. \textit{Regions of multistationarity in cascades of Goldbeter-Koshland 
loops}, J. Math. Biol. 78(4) (2019), 1115--1145.
   
   \bibitem{Hell15} J. Hell, A. D. Rendall. \textit{A proof of bistability for the dual futile cycle}. Nonlinear Analysis: Real World Applications, 24, (2015) 175-189.
   
     \bibitem{kfc} K. Holstein, D. Flockerzi, C. Conradi. \textit{Multistationarity in 
  Sequential Distributed Multisite Phosphorylation Networks}. Bull.  Math. Biol. 75 (11) (2013), 2028--2058.
  
  \bibitem{K} A. G. Ku\v{s}nirenko. \textit{
Newton polyhedra and Bezout's theorem}. 
Funkcional. Anal. i Prilo\v{z}en. 10(3) (1976), 82--83.     English translation:
    Functional Anal. Appl. 10(3) (1977), 233--235.
    
  \bibitem{lim14} W. Lim W, B. Meyer, T. Pawson (2014). Cellular signaling: principles and mechanisms,
Garland Science.
   
   \bibitem{maple} Maple 18 (2014)
   \newblock Maplesoft, a division of Waterloo Maple Inc., Waterloo, Ontario.
   
   \bibitem{Markevich04} N. I. Markevich, J. B. Hoek, B. N. Kholodenko. 
   \textit{Signaling switches and bistability arising from multisite phosphorylation in protein kinase cascades.} 
   The Journal of cell biology, 164(3) (2004), 353-359.
   
   
	    
  \bibitem{aliciaMer} M. P\'erez Mill\'an, A. Dickenstein. \textit{The structure of MESSI biochemical networks}. SIAM J. Appl. Dyn. Syst. 17(2) (2018), 1650--1682.
  
  \bibitem{sage}
W.\thinspace{}A. Stein et al., \emph{{S}age {M}athematics {S}oftware
({V}ersion 8.4)}, The Sage Development Team, 2018, {\tt
http://www.sagemath.org}.
  
  \bibitem{thomson09} M. Thomson, J. Gunawardena. \textit{Unlimited multistability in multisite phosphorylation systems}. Nature, 460(7252) (2009), 274.
 
  

  \bibitem{sontag} L. Wang, E. Sontag. \textit{On the number of steady states in a multiple futile cycle}. J. Math. Biol. 57(1) (2008), 29--52.	
  
  
 
\end{thebibliography}
\end{document}